\title{Obstructing Visibilities with One Obstacle%
\iflncs%
  \thanks{\arxivrefthanks}
\else%
  \thanks{Appears in the Proceedings of the 24th International Symposium on Graph Drawing and Network Visualization (GD 2016).}
\fi%
}
\author{Steven Chaplick\inst{1} \and Fabian Lipp\inst{1} \and Ji-won
  Park\inst{2}\thanks{J.P.\ acknowledges support by the NRF grant
    2011-0030044 (SRC-GAIA) funded by the Korean government.}  \and
  Alexander Wolff\inst{1}}
\institute{%
Lehrstuhl f\"ur Informatik~I, Universit\"at W\"urzburg, Germany
\and
KAIST, Korea}
\newtheorem{thm}{Theorem}
\newtheorem{lem}{Lemma}
\newtheorem{cor}{Corollary}
\newtheorem{obs}{Observation}
\newcommand{\WLOG}{Without loss of generality, }
\newcommand{\wLOG}{without loss of generality, }
\DeclareMathOperator{\Obs}{obs}
\DeclareMathOperator{\Circ}{circ}
\DeclareMathOperator{\CH}{CH}
\DeclareMathOperator{\Obsin}{\Obs_\mathrm{in}}
\DeclareMathOperator{\Obsout}{\Obs_\mathrm{out}}
\newcommand{\Gin}{\ensuremath{G_\mathrm{in}}\xspace}
\let\doendproof\endproof
\renewcommand\endproof{~\hfill$\qed$\doendproof}
\begin{document}

\maketitle

\begin{abstract}
  Obstacle representations of graphs have been investigated quite
  intensely over the last few years.  We focus on graphs that can
  be represented by a single obstacle.  
  Given a (topologically open) non-self-intersecting polygon~$C$ and a finite set~$P$
  of points in general position in the complement of~$C$, the
  \emph{visibility graph}~$G_C(P)$ has a vertex for each 
  point in~$P$ and an edge~$pq$ for any two
  points~$p$ and~$q$ in~$P$ that can \emph{see} each other, that is,
  $\overline{pq} \cap C=\emptyset$.  We draw $G_C(P)$ straight-line
  and call this a \emph{visibility drawing}.
  Given a graph~$G$, we want to compute an obstacle representation
  of~$G$, that is, an obstacle~$C$ and a set of points~$P$ such that
  $G=G_C(P)$.  The complexity of this problem is open, even when
  the points are exactly the vertices of a simple polygon
  and the obstacle is the complement of the polygon---the
  \emph{simple-polygon visibility graph problem}.

  ~~~~There are two types of obstacles; \emph{outside} obstacles lie in the unbounded
component of the visibility drawing, whereas \emph{inside} obstacles
lie in the complement of the unbounded component.
  We show
  that the class of graphs with an inside-obstacle representation is
  incomparable with the class of graphs that have an outside-obstacle
  representation. We further show that any graph with at most seven
  vertices has an outside-obstacle
  representation, which does not hold for a specific graph with eight
  vertices.
  Finally, we show NP-hardness of the \emph{outside-obstacle graph sandwich
    problem}: given graphs $G$ and~$H$ on the same vertex set, is
  there a graph~$K$ such that $G \subseteq K \subseteq H$ and $K$ has
  an outside-obstacle representation.
  Our proof also shows that the
  \emph{simple-polygon visibility graph sandwich problem},
  the \emph{inside-obstacle graph sandwich problem}, and
  the \emph{single-obstacle graph sandwich problem} are all NP-hard.
\end{abstract}

\section{Introduction}

Recognizing graphs that have a certain type of geometric
representation is a well-established field of research dealing with,
for example, interval graphs, unit disk graphs, coin graphs (which are
exactly the planar graphs), and visibility graphs.  In this paper, we
are interested in visibilities of points in the presence of a single
obstacle.  Given a (topologically open) non-self-intersecting
polygon~$C$ 
and a finite set~$P$ of points in general position in the complement of~$C$,
the \emph{visibility graph}~$G_C(P)$ has a vertex for each point
in~$P$ and an edge~$pq$ for any two points~$p$ and~$q$ in~$P$ that can
\emph{see} each other, that is, $\overline{pq} \cap C=\emptyset$.
Given a graph~$G$, we want to compute a (single-) obstacle
representation of~$G$, that is, an obstacle~$C$ and a set of
points~$P$ such that $G=G_C(P)$ (if such a representation exists).
The complexity of this reconstruction problem is open, even for the case that
the points are exactly the vertices of a simple polygon
and the (outside) obstacle is the complement of the polygon.  
This special case is called the \emph{simple-polygon visibility graph (reconstruction) problem}.

The \emph{visibility drawing} is a straight-line drawing of the
visibility graph.  The visibility drawing allows us to differentiate
two types of obstacles: \emph{outside} obstacles lie in the unbounded
component of the visibility drawing, whereas \emph{inside} obstacles
lie in the complement of the unbounded component. 

If we drop the restriction to single obstacles, our problem can 
be seen as an optimization problem. 
For a graph $G$, let $\Obs(G)$ be the smallest number of obstacles
that suffices to represent~$G$ as a visibility graph.  Analogously,
let $\Obsout(G)$ be the number of obstacles needed to represent~$G$ in
the presence of an outside obstacle, and let $\Obsin(G)$ be the number
of obstacles needed to represent~$G$ in the absence of outside
obstacles.
Specifically, we say that $G$ has an \emph{outside-obstacle
  representation} if $G$ can be represented by a single outside
obstacle (e.g. Fig.~\ref{fig:4+5cycle}), and $G$ has an
\emph{inside-obstacle representation} if $G$ can be represented by a
single inside obstacle (e.g. Fig.~\ref{fig:K}).

\noindent\textit{\textbf{Previous work.}}
Not only have Alpert et al.~\cite{akl-ong-DCG10} introduced the notion
of the obstacle number of a graph, they also characterized the 
class of graphs that can be represented by a single simple obstacle,
namely a convex polygon.
They also asked many interesting questions, for example, given an integer $o$,
is there a graph of obstacle number exactly $o$?
If the previous question is true, given an integer $o>1$, what is the smallest number of vertices of a graph with obstacle number~$o$? 
Mukkamala et al.~\cite{mps-glon-WG10} 
showed the first question is true. For the second question,
Alpert et al.~\cite{akl-ong-DCG10} found a 12-vertex graph that needs
two obstacles, namely $K^*_{5,7}$, where $K^*_{m,n}$
with $m \le n$ is the complete bipartite graph minus a matching of
size~$m$. 
They also showed that for any $m \le n$, $\Obs(K^*_{m,n}) \le~2$.
This result was improved by Pach and Sar\i\"{o}z
\cite{ps-sglon-GC11} who showed that the 10-vertex graph~$K^*_{5,5}$
also needs two obstacles.
More recently, Berman et al.~\cite{bcfghw-gong1-Manu16}
suggested some necessary conditions for a graph to have obstacle number 1 which they used to find a
\emph{planar} 10-vertex graph that cannot be represented by a single
obstacle. 

Alpert et al.~\cite{akl-ong-DCG10} conjectured that every graph of
obstacle number~1 has also outside-obstacle number~1.  Berman et
al.~\cite{bcfghw-gong1-Manu16} further conjectured that every graph of
obstacle number~$o$ has outside-obstacle number~$o$.
Alpert et al.~\cite{akl-ong-DCG10} also showed that outerplanar graphs
always have outside-obstacle representations and posed the question to
bound the inside/convex obstacle number of outerplanar/planar graphs.
Fulek et al.~\cite{fss-conog-30EGGT} partly answered this 
by showing that five convex obstacles are sufficient
for outerplanar graphs---and that sometimes four are needed.

For the asymptotic bound on the obstacle number of a graph,
it is obvious that any $n$-vertex graph has obstacle number~$O(n^2)$.
Balko et al.~\cite{bcv-dgusno-GD15} showed that the obstacle number of
an $n$-vertex graph is (at most) $O(n\log n)$.
For the lower bound, improving on previous
results \cite{akl-ong-DCG10,mps-glon-WG10,mpp-lbong-EJC12},
Dujmovi\'{c} and Morin~\cite{dm-on-EJC15} showed there are $n$-vertex
graphs whose obstacle number is $\Omega(n/(\log\log n)^2)$.

Johnson and Sar\i\"oz \cite{js-rpslg-CCCG14} investigated the special
case where the visibility graph is required to be plane.  They showed
(by reduction from \textsc{PlanarVertexCover}) that in this case
computing the obstacle number is NP-hard.  By 
reduction to \textsc{Maxdeg-3 PlanarVertexCover}, they showed that
the problem admits a polynomial-time approximation scheme and is
fixed-parameter tractable.
Koch et al.~\cite{kkr-gpoor-arXiv13} also considered the plane case,
restricted to outside obstacles.  They gave a(n efficiently checkable)
characterization of all biconnected graphs that admit a plane
outside-obstacle representation.

A few years ago, Ghosh and Goswami~\cite{gg-upvgp-arXiv12} surveyed
visibility graph problems, among them
simple-polygon visibility graph problem.
Open Problem~29 in their survey is the complexity of the recognition problem
and Open Problem~33 is the complexity of the fore-mentioned reconstruction problem.
Very recently, this question has been settled for an interesting
variant of the problem where the points are not only the vertices of
the graph but also the obstacles (which are closed in this
case): Cardinal and Hoffmann~\cite{ch-rcpvg-SoCG15} showed that
recognizing point-visibility graphs is $\exists\mathbb{R}$-complete, that
is, as hard as deciding the existence of a real solution to a system
of polynomial inequalities (and hence, at least NP-hard).

The graph sandwich problem has been introduced by Golumbic et
al.~\cite{gks-gsp-JA95} as a generalization of the recognition
problem.  They set up the abstract problem formulation and gave
efficient algorithms for some concrete graph properties---and hardness
results for others.

\noindent\textit{\textbf{Preliminaries.}}
In this paper, we consider only finite simple graphs.
Whenever we say cycles, we always mean simple cycles. Let $G$ be a graph and let $v, u$ be its vertices. The
\emph{circumference} of $G$, denoted by $\Circ(G)$, is the length
of its longest cycle. $v\sim u$ denotes that $v$ and $u$ are adjacent.
We call $v$ and $u$ \emph{twins} if $v \neq u$ and $N(v)\backslash\{u\} =
N(u)\backslash\{v\}$.
We say $v$ is \emph{exposed to the outside} if it is on the boundary of the unbounded component of the straight-line drawing of $G$ given by the point set.
All vertices are exposed to the outside in an \emph{exposed outside-obstacle representation}.
In all figures (of graphs), unless otherwise stated, edges are solid and non-edges are dashed.

\noindent\textit{\textbf{Our contribution.}}
We have the following results. (Recall that a \emph{co-bipartite}
graph is the complement of a bipartite graph.)
\begin{itemize}[topsep=0pt]
\item Every graph of circumference at most 6 has an outside-obstacle 
representation (Theorem~\ref{thm graphs circumference up to 6}).
\item Every 7-vertex graph has an outside-obstacle representation
  (Theorem~\ref{thm graphs order up to 7}). Moreover, there is an
  8-vertex co-bipartite graph that has no single-obstacle
  representation (Theorem~\ref{thm smallest graph}). 
\item There is an 11-vertex co-bipartite graph with an
  inside-obstacle representation, but no outside-obstacle
  representation (Theorem~\ref{thm:inside||outside}).  This resolves
  the above-mentioned open problems of Alpert et
  al.~\cite{akl-ong-DCG10} and Berman et al.~\cite{bcfghw-gong1-Manu16}.
\item The Outside-Obstacle Graph Sandwich Problem is NP-hard even for
  co-bipartite graphs.  The same holds for the Simple-Polygon
  Visibility Graph Sandwich Problem.  This does not solve, but sheds
  some light on a long-standing open problem: the recognition of
  visibility graphs of simple polygons.
  While little is known for the complexity of computing the obstacle number,
  the Single-Obstacle Graph Sandwich Problem is shown to be also NP-hard.
\end{itemize}

\noindent\textit{\textbf{Remarks and Open Problems.}}
The recognition of inside- and outside-obstacle graphs is currently
open.  We expect that testing either of these cases is NP-hard.
Assuming that this is true, it would be interesting to show
fixed-parameter tractability w.r.t.\ the number of vertices of the
obstacle.
We now know that $\Obsin(G)$ and $\Obsout(G)$ are usually different,
but can we bound $\Obsin(G)$ in terms of $\Obsout(G)$?  While we have
shown that the trivial lower bound $\Obsout(G)-1$ is tight, an upper
bound is only known for outerplanar graphs
\cite{akl-ong-DCG10,fss-conog-30EGGT}.

\section{Graphs with Small Circumference}
\label{sec:small_circ}

In this section we will describe how to construct an outside-obstacle 
representation for any graph whose circumference is at most 6.
To prove this result we show that for every vertex $v$ of a
biconnected graph~$G$ with circumference at most~6, 
there is an exposed outside-obstacle representation of $G$ with $v$ on the convex hull of $V(G)$.
Lemma~\ref{lem nonadjacent twin} makes it easier to describe the outside-obstacle representation.
We then apply Lemma~\ref{lem merge graphs} and Lemma~\ref{lem identifying graphs} to obtain an outside-obstacle representation of a graph.

We provide an 8-vertex graph of circumference~8 that requires at least
two obstacles in the next section, so the only gap is the
circumference-7 case.
We conjecture that every graph of circumference~7 has an
outside-obstacle representation. 
As a first step towards this conjecture, we show that every 7-vertex
graph has an outside-obstacle representation by providing
a list of point sets such that each 7-vertex graph can be represented
by an outside obstacle when the vertices of the graph are
mapped to a point set in our list.

Proofs of Lemmas~\ref{lem merge graphs},\ref{lem identifying graphs},\ref{lem nonadjacent twin} are in Appendix~\appref{appendixA} and brief ideas are sketched here.

\newcommand{\LemMergeGraphsText}{%
Let $G$ and $H$ be graphs on different vertex sets.
If $\Obsout(G)=1$ and $\Obsout(H)=1$, then $\Obsout(G \cup H)=1$.}
\wormhole{lem merge graphs}
\begin{lem}\label{lem merge graphs}
\LemMergeGraphsText
\end{lem}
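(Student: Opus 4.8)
The statement is that the disjoint union of two graphs, each with outside-obstacle number 1, again has outside-obstacle number 1. The natural idea is to take the two given outside-obstacle representations and place them side by side so that neither obstacle blocks any of the "new" (non-)edges between the two parts — but of course in a disjoint union \emph{every} pair of vertices $p \in V(G)$, $q \in V(H)$ must be a non-edge, so we actually need a \emph{single} outside obstacle that simultaneously (i) reproduces the visibilities inside the copy of $G$, (ii) reproduces the visibilities inside the copy of $H$, and (iii) blocks \emph{all} segments $\overline{pq}$ with $p\in V(G)$, $q\in V(H)$. The key geometric tool is that an outside obstacle is unbounded (it lives in the unbounded component of the drawing), so we have a lot of room far away from the point sets to route a connecting "wall."

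\textbf{Step 1: Normalize the two representations.} Start with an outside-obstacle representation $(C_G, P_G)$ of $G$ and $(C_H, P_H)$ of $H$. After a suitable affine transformation (scaling, rotation, translation) I may assume $P_G$ lies in the left half-plane $x < -1$ and $P_H$ lies in the right half-plane $x > 1$, and that $C_G$ and $C_H$ are likewise confined to thin neighborhoods of their respective point sets together with an unbounded "tail" going off to infinity; more carefully, since each $C_i$ is a non-self-intersecting polygon in the unbounded component, I can clip/reshape its unbounded part so that $C_G$ escapes to infinity only in the direction of, say, the negative $x$-axis and $C_H$ only in the direction of the positive $y$-axis. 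This reshaping does not change any visibility among points of $P_G$ (resp. $P_H$) as long as the modification stays outside the convex hull of the relevant point set and on the correct side.

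\textbf{Step 2: Build the combined obstacle.} Now I take $C = C_G \cup W \cup C_H$, where $W$ is a thin unbounded "curtain" — a long skinny polygonal region — that I route through the vertical strip $-1 \le x \le 1$ and connect to both $C_G$ and $C_H$, arranging $W$ so that it separates $P_G$ from $P_H$: concretely, $W$ can be taken to contain a long nearly-vertical segment at $x = 0$ extending from $y = -M$ to $y = +M$ for $M$ large, fattened slightly, with unbounded tails so that the whole thing $C$ is still a single non-self-intersecting polygonal region in the unbounded component of $P_G \cup P_H$. Because every segment $\overline{pq}$ with $p \in P_G$ (so $p_x < -1$) and $q \in P_H$ (so $q_x > 1$) must cross the line $x = 0$ at some point with $|y|$ bounded by $\max$ of the coordinates, choosing $M$ larger than all these crossing heights guarantees $\overline{pq} \cap W \ne \emptyset$, hence $\overline{pq}\cap C \ne\emptyset$, so $pq$ is correctly a non-edge. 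Meanwhile $W$ and $C_H$ stay in $x \ge -1$, so they cannot obstruct any segment between two points of $P_G$ (all of which lie in $x < -1$ inside $\CH(P_G)$); symmetrically $W$ and $C_G$ do not interfere with visibilities inside $P_H$. Thus $G_C(P_G \cup P_H) = G \cup H$.

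\textbf{Main obstacle.} The genuinely fussy part is Step 1–2 done rigorously: ensuring that after reshaping the unbounded portions of $C_G$ and $C_H$ and gluing in $W$, the result $C$ is still a \emph{single}, topologically open, \emph{non-self-intersecting} polygon (the definition of an obstacle in this paper allows only one such region). One must check that the tails of $C_G$, $C_H$, and $W$ can be attached without creating crossings — this is why I route $C_G$'s tail to $-x$ infinity, $C_H$'s tail to $+y$ infinity, and $W$'s tail, say, to $-y$ infinity, so the three tails go off in pairwise-different directions and the three "bodies" sit in disjoint regions ($x<-1$, $x>1$, and a small neighborhood of $x=0$ outside both convex hulls) before being joined. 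Once the disjointness of these pieces and the freedom to choose $M$ are set up, the visibility claims in Step 2 are immediate from the separating-line argument, and the lemma follows; an analogous argument actually shows the stronger statement that if both representations are \emph{exposed} then so is the combined one, which is what the later proofs will want.
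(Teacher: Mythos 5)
Your proof is correct and takes essentially the same route as the paper's: both place the two representations side by side in disjoint vertical strips, keep each original obstacle confined near its own point set, and insert a separating wall (the paper uses the shared segment $\{1\}\times[0,1]$ between two unit boxes, you use a tall curtain at $x=0$) that every segment between the two point sets must cross. Your extra care about routing the unbounded tails so that the union remains a single non-self-intersecting polygon is a reasonable elaboration of a detail the paper glosses over.
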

\begin{proof}[sketch]
Place two graphs far enough and merge outside obstacles.
\end{proof}

\newcommand{\LemIdentifyingGraphsText}{%
  Let $G$ and~$H$ be graphs with exposed outside-obstacle representations.
  Let $u$ be a vertex
  of~$G$, and let~$v$ be a vertex of~$H$.  Assume that $v$ lies on the
  convex hull of~$V(H)$.  If $K$ is the graph obtained by
  identifying~$u$ and~$v$, then $K$ also has an exposed outside-obstacle
  representation.}
\wormhole{lem identifying graphs}
\begin{lem}\label{lem identifying graphs}
\LemIdentifyingGraphsText
\end{lem}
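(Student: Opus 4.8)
The plan is to leave the representation of $G$ essentially untouched and to glue a tiny, carefully oriented copy of the representation of $H$ onto the point that will play the role of $u$. Fix exposed outside-obstacle representations $(P_G,C_G)$ of $G$ and $(P_H,C_H)$ of $H$. Two cosmetic preprocessing steps come first. Since $P_G$ is in general position no segment spanned by two points of $P_G$ passes through $u$, so retracting $C_G$ slightly away from $u$ changes no visibility and we may assume $C_G\cap B(u,\varepsilon)=\emptyset$ for some $\varepsilon>0$; do the analogous thing for $C_H$ near $v$. Then, using that $v$ is a vertex of $\CH(P_H)$, apply an invertible affine map to $(P_H,C_H)$ so that $v$ becomes the origin and $P_H\setminus\{v\}$ lies in an arbitrarily narrow cone $\Gamma$ with apex the origin and, moreover, in a thin annular sector $\{1\le r\le 1+\delta\}\cap\Gamma$ (an affine map can flatten a cone of angle $<\pi$ as much as one likes, and can push all of $P_H\setminus\{v\}$ to nearly unit distance from $v$). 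This is exactly where the convex-hull hypothesis is essential: it lets us turn $P_H\setminus\{v\}$ into a tight cluster lying entirely on one side of $w$, so that $w$ will not be ``surrounded'' by the copy.

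Next, at $u$ in the drawing of $G$, choose a direction $\hat d$ pointing into the unbounded component of the drawing (such directions form a nonempty open interval, since $u$ is exposed) that is generic with respect to the finitely many directions toward the neighbours of $u$; in particular $-\hat d$ is bounded away from all of them. For a sufficiently small $\lambda>0$, drop a $\lambda$-scaled copy of the normalized representation of $H$ into the unbounded component of the drawing of $G$, with $v$ placed exactly at $u$ and the cone $\Gamma$ pointing in direction $\hat d$; the common point is $w$, and $P$ consists of $P_G\setminus\{u\}$, the point $w$, and the $\lambda$-scaled image of $P_H\setminus\{v\}$ (perturbed to stay in general position). The obstacle $C$ is assembled from three pieces, fused into a single non-self-intersecting polygon by thin corridors routed through the unbounded component of the drawing, as in Lemma~\ref{lem merge graphs}: the original $C_G$; the $\lambda$-scaled copy of $C_H$; and a small \emph{shield}, a thin obstacle placed at radius about $2\lambda$ around $w$ covering exactly those directions along which a point of $P_G\setminus\{u\}$ could look at the cluster, and nothing else. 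The orientation of the copy is additionally chosen so that the exterior wedge of $v$ (where the unbounded component of $H$'s drawing opens out) points into the unbounded component of $G$'s drawing, so that after truncation these two unbounded components merge at $w$.

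The verification that $G_C(P)=K$ is then mostly routine once $\lambda$ is small. Visibilities inside $P_G\setminus\{u\}$ are untouched, since the new material lives in a $\lambda$-ball about $u$ which meets no segment spanned by $P_G$, and $C_G$ is unchanged elsewhere. Visibilities inside the cluster agree with $H$ because an invertible affine map preserves betweenness, hence visibility, and for small $\lambda$ nothing else reaches the cluster. The vertex $w$ sees $N_G(u)$ because $\overline{wp}$ leaves $u$ in a direction bounded away from $\hat d$ and therefore, for small $\lambda$, misses the cluster, the shield, and the scaled $C_H$ (this last point is why $\hat d$ was chosen generic); $w$ sees $N_H(v)$ because these sightlines stay inside the cluster, which the shield is designed not to block; and $w$ sees nothing extra since $C_G$ and the scaled $C_H$ are both contained in $C$. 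Finally, no $p\in P_G\setminus\{u\}$ sees a cluster point $q$: if $p\not\sim_G u$ then $\overline{pq}$ is blocked by $C_G$ (blocking an open segment is an open condition on its endpoints, so a tiny perturbation from $u$ to $q$ still crosses $C_G$); and if $p\sim_G u$ then $\overline{pq}$ reaches $q$ from a direction bounded away from the direction toward $w$ as seen from $q$ — this is the other place where convexity of $v$ is used — so it crosses the shield. Exposedness of the old vertices of $G$ and $H$ survives because adding obstacles only enlarges the unbounded component, and $w$ is exposed because there is still a direction at $w$ pointing into the merged unbounded component.

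The step I expect to be the real obstacle is the design of the shield together with the merge: one must show that a single thin obstacle, with the right finite union of angular ranges, simultaneously blocks every $p\in N_G(u)$ from every cluster point, leaves every sightline of $w$ and every internal sightline of the cluster intact, and can then be fused with $C_G$, the scaled $C_H$, and the connecting corridors into one non-self-intersecting polygon inside the unbounded component without spoiling any of this or the exposedness. The more pedestrian — but still fiddly — points are checking that the scaled $C_H$ does not accidentally block some $\overline{wp}$ (handled by the genericity of $\hat d$, which may force $\lambda$ and the width of $\Gamma$ to depend on $\deg_G(u)$ and on the angular extent of $C_H$), and the bookkeeping that makes the two unbounded components merge cleanly at $w$ so that the resulting representation of $K$ is again exposed.
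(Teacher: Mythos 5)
Your high-level plan --- use the convex-hull hypothesis to squash $H$'s representation into an arbitrarily narrow cone with apex $v$, shrink it, and graft it onto $u$ inside the unbounded region of $G$'s drawing --- is the same as the paper's. But the step you yourself flag as ``the real obstacle'', the design of the shield, is the crux of the lemma and is not carried out; as written the proof is incomplete, and the specific design you sketch can actually fail. A shield that is an arc of fixed radius about $2\lambda$ covering ``the directions along which $N_G(u)$ looks at the cluster'' must, for a neighbour $p_1$, contain a neighbourhood of the point where the segments from $p_1$ to the cluster cross that circle; a short computation shows this crossing point can coincide with the point $2\lambda\hat e_2$ where $\overline{wp_2}$ crosses the same circle for another neighbour $p_2$, so the shield would sever the edge $wp_2$. (This is repairable by choosing the radius generically, but that is exactly the kind of verification you have deferred.) A second, independent gap: genericity of $\hat d$ does not prevent the scaled $C_H$ from blocking $\overline{wp}$. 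The affine squashing confines $P_H\setminus\{v\}$ to the cone $\Gamma$, but the given outside obstacle $C_H$ may wrap around the whole drawing of $H$; its squashed, scaled image then has material near $w$ in directions outside $\Gamma$, and no choice of $\lambda$ or $\hat d$ avoids it. You must first confine the copied obstacle to the cone (e.g., replace $C_H$ by its intersection with $\CH(V(H))$, which still meets every non-edge of $H$ since those segments lie in $\CH(V(H))$), and then re-establish that the final obstacle is a single polygon.

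The paper's proof dissolves all of these difficulties with one move your construction lacks: take the obstacle of $G$ to be \emph{maximal} (the entire unbounded component of the complement of the drawing) and, since $u$ is exposed, carve out of it a small circular sector $A$ with apex $u$; the squashed copy of $H$ is placed inside $A$ with its obstacle filling $A$ up to the visibility drawing of $H$, so that $\partial A$ minus the apex lies in the new obstacle. Then $\partial A$ \emph{is} the shield: every segment from a point of $P_G\setminus\{u\}$ (outside $A$) to a cluster point (inside $A$) crosses $\partial A$, with no angular bookkeeping and no dependence on $\deg_G(u)$; segments from $w$ to $N_G(u)$ leave the apex in directions outside the sector and never meet $A\setminus\{u\}$, hence meet neither the arc nor the copied obstacle; the copied obstacle is confined to $A$ by construction; and the obstacle stays in one piece because it is the old maximal obstacle with part of the sector re-inserted, not three components joined by corridors. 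I suggest you restructure your argument around such a sector rather than trying to certify the shield's angular ranges directly.
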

\begin{proof}[sketch]
Make the outside-obstacle representation of $H$ small and narrow (with respect to $v$) enough to fit in some circular sector lying inside the obstacle centered at $u$ in the outside-obstacle representation of $G$.
Then replace the circular sector with above obstacle representation of $H$.
\end{proof}

\newcommand{\LemNonadjacentTwinText}{%
  Let $H$ be a graph, $v$ be a vertex of~$H$, $A$ be the set of twins of~$v$, and $G = H \setminus A$.
  If $G$ that has an exposed outside-obstacle representation in which $v$ lies on the convex hull of $V(G)$,
  then $H$ has an exposed outside-obstacle representation in which all vertices in $A \cup \{v\}$ lie on the convex hull of $V(H)$.} 
\wormhole{lem nonadjacent twin}
\begin{lem}\label{lem nonadjacent twin}
\LemNonadjacentTwinText
\end{lem}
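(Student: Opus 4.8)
\noindent\emph{Proof sketch (proposed).}
The plan is to start from the given exposed outside-obstacle representation of~$G$, insert the twins as a tiny cluster right next to~$v$, and repair the obstacle by splicing on a thin ``comb'' that separates the cluster points from one another while touching nothing else. Since the twins in $A=\{a_1,\dots,a_k\}$ are nonadjacent, $v$ and the~$a_i$ are pairwise nonadjacent and all share the neighbourhood $N:=N(v)$; hence the edges of~$H$ that are not already edges of~$G$ are exactly the edges $a_ip$ with $p\in N$, and in the representation we build, each~$a_i$ must see precisely~$N$ and none of $(A\cup\{v\})\setminus\{a_i\}$. After a harmless perturbation we may assume that the boundary of~$C$ is tangent to no segment spanned by the point set; then visibilities are robust, so any point placed sufficiently close to~$v$ sees exactly the vertices of~$G$ that~$v$ sees.

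Because $v$ is a vertex of~$\CH(V(G))$, fix a direction~$e$ for which $v$ is the \emph{unique} maximiser of $\langle\cdot,e\rangle$ over~$V(G)$; then $N$ lies strictly on the $(-e)$-side of the line through~$v$ orthogonal to~$e$. Since $v$ is a strict hull vertex, the cone of directions $v\to p$ ($p\in N$) has opening angle $<\pi$; choosing coordinates in which its bisector points upward, every edge leaving a point of $A\cup\{v\}$ toward a point of~$N$ has slope of absolute value at least some $c_0>0$, while $e$ points into the lower half-plane. Now place $v,a_1,\dots,a_k$ — keeping~$v$ fixed — as the vertices of a tiny, very shallow convex arc~$\gamma$ (maximal slope $<c_0/2$) that bulges in direction~$e$ and all of whose vertices are strictly more extreme in direction~$e$ than every point of $V(G)\setminus\{v\}$. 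For $\gamma$ small enough these $k+1$ points are exactly the vertices of the $e$-facing cap of~$\CH(V(H))$, hence all lie on~$\CH(V(H))$; moreover any segment joining two of them is a chord of~$\gamma$ and therefore lies just above~$\gamma$, on the side facing~$N$.

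To the old obstacle~$C$ we attach a thin comb: a spine lying just below~$\gamma$, together with one tooth passing upward through each gap between consecutive vertices of~$\gamma$ and reaching far enough up to cross the widest chord of~$\gamma$ (the one joining the two extreme vertices of the cluster). The comb is joined to~$C$ by a thin corridor, routed first into the edge-free region lying below~$\gamma$ and then through the unbounded face to~$C$, so that the union of~$C$, the corridor and the comb is a single non-self-intersecting polygon contained in the unbounded component. Two things must be checked, and this is the crux of the proof, to be resolved by balancing the diameter of the cluster, the curvature~$\kappa$ of~$\gamma$, and the tooth length: (i)~every chord of~$\gamma$ — in particular the shortest, between two consecutive twins — is crossed by some tooth (so all of $A\cup\{v\}$ become pairwise invisible); and (ii)~no edge $a_ip$ with $p\in N$, and no edge of~$G$, is crossed by the comb. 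For~(ii) one uses that such an edge leaves its endpoint on~$\gamma$ and climbs away from~$\gamma$ with slope at least~$c_0$, whereas the teeth rise only to height $O(\kappa\cdot\operatorname{diam}(\gamma)^2)$ above~$\gamma$; taking~$\kappa$ small enough, the edge has already overtaken every tooth it could meet before reaching tooth height. Segments blocked by~$C$ stay blocked, so the new obstacle realises exactly~$E(H)$.

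Finally, the representation is \emph{exposed}: every new edge points from the cluster toward~$N$, so the new edges encircle no vertex — each vertex of~$G$ remains on the boundary of the unbounded face, and each~$a_i$ still escapes to infinity through the free region on the $e$-side of~$\gamma$. Thus $H$ has an exposed outside-obstacle representation in which all of $A\cup\{v\}$ lie on~$\CH(V(H))$, as required. The main obstacle in the argument is the combined blocking condition (i)+(ii) on the comb; a secondary point is verifying that splicing in the corridor keeps the obstacle a single non-self-intersecting polygon lying in the unbounded component.
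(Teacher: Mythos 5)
Your construction is correct and follows essentially the same route as the paper: cluster $v$ and its twins on a tiny, slightly outward-convex arc at $v$'s hull corner (so that all of $A\cup\{v\}$ become hull vertices and inherit $v$'s visibilities), and extend the obstacle locally to block the pairwise non-edges within the cluster. The paper delegates the obstacle modification to a figure, whereas your ``comb'' together with the slope/curvature balancing spells out the verification the paper leaves implicit.
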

\begin{proof}[sketch]
Place twins close enough since their neighborhoods are same.
\end{proof}

The following observation helps to restrict the structure of
biconnected graphs of given circumference
where indices are taken modulo $k$.
\begin{obs}\label{obs cycle and path}
    Let $G$ be a graph of circumference $k$ and let $C=v_1v_2\dots v_k$ be a cycle.
    $G$ doesn't contain a $v_i-v_{i+t}$ path $P$ of length $t'$ disjoint to $v_iCv_{i+t}$
    where $0<t<k$ and $t'>t$, since it would create $(k+t'-t)$-cycle.
    In particular, if $v \notin \{v_1, \dots, v_k\}$ is adjacent to $v_i$,
    then $v$ is neither adjacent to $v_{i-1}$ nor $v_{i+1}$.
\end{obs}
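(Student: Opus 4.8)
The plan is to argue by contradiction: assume $G$ \emph{does} contain such a path $P$ and exhibit a cycle of length more than $k$, contradicting $\Circ(G)=k$. I read the hypothesis ``$P$ is disjoint to $v_iCv_{i+t}$'' as saying that $P$ meets the whole cycle $C$ only in its two endpoints $v_i$ and $v_{i+t}$; this is the reading that both applications need, and, as a $5$-cycle with two suitable chords shows, the statement would be false under the weaker reading that $P$ merely avoids the internal vertices of the arc $v_iCv_{i+t}$.

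I would then write $C$ as the union of its two $v_i$--$v_{i+t}$ arcs: the arc $S=v_iv_{i+1}\cdots v_{i+t}$ of length $t$, and the complementary arc $L=v_{i+t}v_{i+t+1}\cdots v_{i+k}$ (indices mod $k$, so $v_{i+k}=v_i$) of length $k-t$. Since $0<t<k$, we have $k-t\ge 1$, so $L$ is a genuine path. All internal vertices of $L$ lie on $C$, and by hypothesis $P$ has no internal vertex on $C$; hence $P$ and $L$ are internally disjoint and share exactly the endpoints $v_i,v_{i+t}$. Therefore $P\cup L$ is a simple cycle, of length $t'+(k-t)=k+(t'-t)$. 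As $t'>t$, this length is at least $k+1$, contradicting the assumption that $k$ is the circumference of $G$. This proves the first assertion.

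For the ``in particular'' statement I would apply the first assertion with $t=1$. If some $v\notin\{v_1,\dots,v_k\}$ is adjacent to both $v_i$ and $v_{i+1}$, then $P=v_i\,v\,v_{i+1}$ is a path of length $t'=2$ from $v_i$ to $v_{i+1}$ whose only internal vertex, $v$, lies off $C$, and $2=t'>t=1$; contradiction. The case of $v_{i-1}$ is symmetric, via the path $v_{i-1}\,v\,v_i$ and the arc $v_{i-1}Cv_i$ of length $1$.

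I do not expect a genuine obstacle here; the only point needing a moment's care is fixing the meaning of ``disjoint'' so that $P\cup L$ really is a simple cycle (equivalently, so that $P$ is internally disjoint from all of $C$, not just from the short arc), after which the length count is immediate.
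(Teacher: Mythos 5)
Your proof is correct and takes essentially the same route as the paper's inline justification: concatenate $P$ with the complementary arc of $C$ to obtain a simple cycle of length $t'+(k-t)>k$. Your clarification that ``disjoint'' must mean internally disjoint from all of $C$ (not merely from the arc $v_iCv_{i+t}$) is a sensible and correct reading --- it is exactly the hypothesis satisfied in every application of the observation in the paper --- so there is nothing further to fix.
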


\newcommand{\ThmCircumferenceUpToSixText}{%
  If the circumference of a graph $G$ is at most 6, then $G$ has an
  outside-obstacle representation.}
\wormhole{thm graphs circumference up to 6}
\begin{thm}\label{thm graphs circumference up to 6}
  \ThmCircumferenceUpToSixText
\end{thm}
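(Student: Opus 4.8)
The plan is to derive the theorem from a stronger statement about \emph{connected} graphs, proved by induction on the number of vertices: \emph{every connected graph~$G$ with $\Circ(G)\le 6$ has an exposed outside-obstacle representation in which an arbitrarily prescribed vertex~$v$ lies on the convex hull of~$V(G)$.} Granting this, a general graph~$G$ with $\Circ(G)\le 6$ splits into connected components, each of which has circumference at most~$6$ and hence (by the claim) a single outside obstacle; combining these with Lemma~\ref{lem merge graphs} gives an outside-obstacle representation of all of~$G$.

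To prove the claim, I first peel off two reductions. If~$G$ contains a pair of twins, let~$A$ be the full set of twins of some vertex~$v$ chosen so that the prescribed vertex lies in~$A\cup\{v\}$; then $G\setminus A$ is still connected (any path through~$A$ reroutes through~$v$), has circumference at most~$6$, and is smaller, so by induction it has the desired representation, and Lemma~\ref{lem nonadjacent twin} puts~$A\cup\{v\}$ back on the hull. If~$G$ is twin-free but has a cut vertex, I use its block--cut tree: the block~$B_0$ containing the prescribed vertex is biconnected and smaller, so by induction it has an exposed outside-obstacle representation with that vertex on the hull of~$V(B_0)$; I then attach the remaining blocks one at a time at their cut vertices by Lemma~\ref{lem identifying graphs}, which by construction tucks each attached block into the obstacle region near the cut vertex and therefore preserves both exposedness and the position of the prescribed vertex. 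After these reductions it remains to treat a twin-free biconnected graph~$G$ with $\Circ(G)\le 6$.

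For that core case, fix a longest cycle $C=v_1\dots v_k$ with $k\le 6$. By Observation~\ref{obs cycle and path}, the neighbours on~$C$ of any vertex outside~$C$ form an independent set of~$C$, every ear between two cycle vertices is no longer than the shorter of the two arcs joining them, and any two vertices attached to~$C$ in the same way (or two parallel ears) would be twins; hence a twin-free such graph has only boundedly many vertices and edges, so up to isomorphism there are finitely many of them. I then split on $k=\Circ(G)$. If $k\le 3$ then $G$ is $K_2$ or $K_3$ (a biconnected graph on at least four vertices has circumference at least~$4$), both trivial. For $k\in\{4,5,6\}$ I enumerate the admissible configurations --- which chords of~$C$ are present, which (few) vertices hang off~$C$ and with which neighbourhoods, and which edges run among them --- always subject to the constraint that no cycle longer than~$6$ is created, and for each configuration I exhibit an explicit visibility drawing: draw~$C$ as a convex $k$-gon, place the remaining points in prescribed positions near the outer boundary so that every point is exposed and the prescribed vertex can be taken on the convex hull, and let the obstacle come in from infinity and send thin ``fingers'' between exactly the pairs of boundary points whose connecting segment must be blocked, so that precisely the non-edges are obstructed. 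General position is then achieved by an arbitrarily small perturbation.

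I expect the circumference-$6$ case (and to a lesser extent the circumference-$5$ case) to be the main obstacle. There the longest cycle is as long as it is allowed to be, so the attached vertices and the chords of~$C$ interact in the most ways, and the ``no $7$-cycle'' constraint is tight, which makes the enumeration of admissible configurations delicate. Once the list is in hand, the remaining difficulty is geometric: for each configuration one must produce a \emph{single} connected obstacle that blocks all non-edges simultaneously while meeting no edge, enclosing no vertex (so the drawing stays exposed), and never forcing the prescribed vertex off the convex hull --- and the layouts must be uniform enough that the gluing steps via Lemmas~\ref{lem merge graphs}--\ref{lem nonadjacent twin} compose without trouble. This simultaneous constraint-satisfaction, done one configuration at a time with accompanying figures, is the heart of the argument.
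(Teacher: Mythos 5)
Your overall architecture matches the paper's: reduce via Lemma~\ref{lem merge graphs} (components), Lemma~\ref{lem identifying graphs} (blocks), and Lemma~\ref{lem nonadjacent twin} (twins) to biconnected twin-free graphs of circumference at most~6, and then supply explicit drawings. However, there is a concrete false step. You claim that a twin-free biconnected graph of circumference at most~$6$ ``has only boundedly many vertices and edges, so up to isomorphism there are finitely many of them,'' and you rest the entire base case on a finite enumeration. This is true for circumference $4$ and $5$ (there every vertex off the long cycle is forced to be a non-adjacent twin of a cycle vertex), but it fails for circumference~$6$: take a $6$-cycle $v_1\dots v_6$ and attach arbitrarily many internally disjoint $v_1$--$v_4$ paths of length~$3$, say $v_1ab\,v_4$, $v_1cd\,v_4$, \dots. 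Any two such paths of length~$3$ form a $6$-cycle, no longer cycle arises, the graph is biconnected, and no two vertices are twins in the sense $N(u)\setminus\{v\}=N(v)\setminus\{u\}$ (e.g.\ $N(a)=\{v_1,b\}\neq\{v_1,d\}=N(c)$). Your parenthetical ``(or two parallel ears) would be twins'' is exactly where this breaks: parallel ears of length~$3$ are \emph{not} twins. Consequently the family you must handle is infinite, and a configuration-by-configuration enumeration cannot cover it; one needs a uniform, parametrized construction (the paper places the internal vertices of $m$ internally disjoint $v_1$--$v_4$ paths on a half-circle at angles $\tfrac{2i+1}{2m+1}\pi$ resp.\ $\tfrac{2i}{2m+1}\pi$, after first proving that \emph{all} $v_1$--$v_4$ paths have length at most~$3$ and become internally disjoint once twins are removed).

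Beyond that, the substantive content of the theorem --- the classification of which adjacency patterns can actually occur (using Observation~\ref{obs cycle and path} and the no-long-cycle constraint) and the explicit obstacle drawings for each pattern, in particular the five structural subcases for circumference~$6$ (adjacent outside vertices; an outside vertex with one, or with three, neighbours on the cycle; an outside vertex joined to antipodal cycle vertices; all outside vertices joined to two non-antipodal cycle vertices) --- is announced but not carried out. You correctly identify this as the heart of the argument, but as written the proposal is an outline of the reduction scaffolding (which the paper's lemmas already provide) rather than a proof. To complete it you would need to (i) replace the finiteness claim by the structural analysis above, and (ii) actually produce and verify the drawings, including checking in each that every vertex stays exposed and that the prescribed vertex can be placed on the convex hull so that the inductive gluing via Lemmas~\ref{lem identifying graphs} and~\ref{lem nonadjacent twin} applies.
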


\begin{wrapfigure}[10]{r}{.34\textwidth}
  \vspace{-5ex}
  \centering
  \begin{tabular}[b]{@{}c@{\quad}c@{}}
    \includegraphics{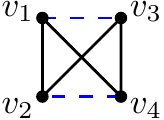} & \includegraphics{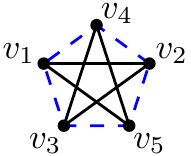}
    \\[-.2ex]
    \small (a) & \small (b) \\
  \end{tabular}
  \caption{Graphs of circumference~4 and~5 with outside-obstacle
    representations}
  \label{fig:4+5cycle}
\end{wrapfigure}  

\noindent\emph{Proof.}~
  If $G$ is disconnected, we give an outside-obstacle representation
  for each connected component and simply merge them by Lemma~\ref{lem
    merge graphs}. 
  
  When $G$ is connected, we decompose it into its biconnected components, 
  i.e., the block decomposition tree of $G$.
  Starting in its root, we include representations of the children in 
  turn using Lemma~\ref{lem identifying graphs}.
    
  Let $H$ be a biconnected component of $G$. It suffices to show that
  $H$ satisfies the condition for Lemma~\ref{lem identifying graphs}:
  For each vertex~$v$ of~$H$, $H$ has an exposed outside-obstacle
  representation such that $v$ is on the convex hull of $V(H)$. 

\medskip
\noindent\textbf{Case~1:} $\Circ(H) = 3$

     As $H$ is biconnected, $H$ is a triangle and trivially satisfies the condition.
  
\medskip
\noindent\textbf{Case~2:} $\Circ(H) = 4$
  
    Let $C=v_1v_2v_3v_4 \subset H$ be a 4-cycle.
    If $H$ contains exactly four vertices, there is an outside-obstacle
    representation; see Fig.~\ref{fig:4+5cycle}a.
    Note that we can choose the (dashed blue) diagonals $v_1v_3$ and
    $v_2v_4$ to be edges or non-edges as desired.
    Otherwise, \wLOG there is a vertex $x \in H \setminus C$ with $x \sim v_1$.
    As $H$ is biconnected, there is a path of length at least~2 from $v_1$ to
    another vertex of $C$ containing $x$.
    Observation~\ref{obs cycle and path} implies that $x \not\sim v_2$,
    $x \sim v_3$, and $x \not\sim v_4$.
    Since we have another 4-cycle $C'=v_1xv_3v_4$, the same holds for $v_2$,
    implying $v_2 \not\sim v_4$. Hence $x$ is a non-adjacent twin of $v_2$.
    It follows that any vertex in $H \setminus C$ is a non-adjacent
    twin of one of $v_1, \dots, v_4$.  Since the vertices in
    Fig.~\ref{fig:4+5cycle}a are in convex position, we can
    embed~$H$ using Lemma~\ref{lem nonadjacent twin}.
   
\medskip
\noindent\textbf{Case~3:} $\Circ(H) = 5$
  
    Let $C=v_1v_2v_3v_4v_5 \subset H$ be a 5-cycle.
    If $H$ contains exactly five vertices, see Fig.~\ref{fig:4+5cycle}b
    for its outside-obstacle representation.
    Otherwise, \wLOG there is a vertex $x \in H \setminus C$ with $x \sim v_1$.
    Observation~\ref{obs cycle and path} implies $x \not\sim v_2, v_5$.
    As $H$ is biconnected, there is either path $v_1xv_3$ or $v_1xv_4$.
    \WLOG we assume $x \sim v_3$ and thus $x \not\sim v_4$.
    Then $v_2 \not\sim v_4, v_5$ since we have another 5-cycle
    $v_1xv_3v_4v_5$ and can apply the same logic.
    Hence, $x$ is a non-adjacent twin of $v_2$.
    As in the Case~2, we see that every vertex in $H \setminus
    C$ is a non-adjacent twin of one of $v_1, v_2, \dots, v_5$ and we can
    embed $H$ using Lemma~\ref{lem nonadjacent twin}.
    
\medskip
\noindent\textbf{Case~4:} $\Circ(H) = 6$~~
   (We postpone this case to Appendix~\appref{appendixA}.) \hfill\qed

\newcommand{\ThmGraphsOrderUpToSevenText}{%
  Any graph with at most 7 vertices has an outside-obstacle
  representation.}
\wormhole{thm graphs order up to 7}
\begin{thm}\label{thm graphs order up to 7}
\ThmGraphsOrderUpToSevenText
\end{thm}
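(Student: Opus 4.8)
The plan is to reduce, as in the proof of Theorem~\ref{thm graphs circumference up to 6}, to the biconnected case: if $G$ has at most $7$ vertices and is disconnected or has a cut vertex, we assemble an outside-obstacle representation from the (exposed) representations of its blocks via Lemma~\ref{lem merge graphs} and Lemma~\ref{lem identifying graphs}, provided each block admits an exposed outside-obstacle representation with a prescribed vertex on the convex hull. So the core task is: every biconnected graph on $k \le 7$ vertices has, for each vertex $v$, an exposed outside-obstacle representation with $v$ on the convex hull of the point set. A biconnected graph has circumference at least $3$; if its circumference is at most $6$, Theorem~\ref{thm graphs circumference up to 6} (more precisely its block-level statement) already handles it, with the convex-hull condition built in. Hence the only new case is a biconnected graph on exactly $7$ vertices that contains a Hamiltonian cycle (circumference $7$).

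First I would fix the Hamiltonian cycle $C = v_1 v_2 \cdots v_7$ and observe, via Observation~\ref{obs cycle and path}, that the possible chords are severely constrained: a chord $v_i v_j$ with $j \in \{i-1,i+1\}$ is impossible by definition, and any chord $v_i v_{i+t}$ together with the long side of $C$ would give a cycle of length $7 - t + 1 > 7$ unless $t = 1$ — wait, more carefully, the presence of a chord $v_i v_{i+t}$ does not itself violate circumference, but pairs of chords do interact (two chords forming a long path). The upshot is a short, explicit list of chord-sets that can coexist with $C$ in a circumference-$7$ graph on $7$ vertices; I would enumerate these (up to the dihedral symmetry of $C$). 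This yields a finite family of candidate graphs. The stated strategy of the paper is to exhibit a \emph{list of point sets} in convex position (or near-convex position) together with candidate obstacle regions, such that every $7$-vertex graph arises as $G_C(P)$ for some $(C,P)$ in the list; so I would, for each graph in the enumerated family, place the seven points — naturally on or near a regular $7$-gon, matching the cyclic order $v_1,\dots,v_7$ — and route a single outside obstacle through the complement that blocks exactly the non-chords while leaving all seven points on the outer boundary.

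The main obstacle I anticipate is the bookkeeping for the obstacle construction: with points in convex position along a $7$-gon, the $7$ cycle edges are automatically present (consecutive hull edges are unobstructed), so the obstacle must thread between non-adjacent hull points to block precisely the missing diagonals while realizing the present diagonals as clear segments. For a handful of diagonals this is a routine "comb/spiral" obstacle that reaches in from the unbounded region along alternating gaps; the delicate part is when the set of required-present diagonals and required-blocked diagonals cross each other, which may force the points off the regular polygon into a carefully perturbed convex (or slightly non-convex but still fully exposed) position — this is exactly why the paper speaks of a \emph{list} of point sets rather than a single one. I would organize the case analysis by the number and pattern of chords: (i) graphs with no chords ($C_7$ itself, trivially a convex $7$-gon with a tiny outside obstacle), (ii) graphs with one or two chords (single "finger" obstacles), and (iii) the few remaining dense cases, handled by the explicit point sets. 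Finally, I would double-check that in every constructed representation all seven points lie on the boundary of the unbounded face (exposedness) and that any chosen hull vertex can be taken as the designated vertex $v$, which holds automatically since all points are on the convex hull; this closes the induction on the block-decomposition tree and completes the proof. \qed
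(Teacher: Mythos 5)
There is a genuine gap here, on two levels. First, your premise that Observation~\ref{obs cycle and path} ``severely constrains'' the chords of the Hamiltonian cycle is false in this setting: on $7$ vertices a Hamiltonian graph automatically has circumference $7$, the maximum possible, so \emph{every} subset of the $14$ diagonals of $C_7$ is admissible (including $K_7$). There is no short list of candidate graphs to enumerate; the whole content of the theorem is a case analysis over which diagonals are edges and which are non-edges, and the paper needs $15$ types with explicit point sets and obstacles to cover them. You gesture at this analysis but do not carry it out, and for a statement whose proof \emph{is} that finite case analysis, the gesture is the missing proof.

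Second, and more seriously, your proposed placement cannot work. If the seven points are in convex position in the cyclic order $v_1,\dots,v_7$ of the Hamiltonian cycle, then the seven cycle edges form the boundary of the convex hull and hence of the unbounded face of the visibility drawing; an outside obstacle then lies entirely outside this $7$-gon and cannot intersect any diagonal. So that placement realizes only $K_7$. The paper avoids this by using a different convex-hull order (an interleaved order such as $v_1v_3v_5v_7v_2v_4v_6$, so that the hull edges are the distance-two chords and the cycle edges are interior diagonals reachable through gaps wherever a distance-two chord is a non-edge), and by moving some vertices off the hull in the denser cases. Your closing remark that exposedness ``holds automatically since all points are on the convex hull'' also does not survive contact with those configurations. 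The reduction to the case of a $7$-vertex graph containing $C_7$ is correct and matches the paper (indeed, if $C_7\not\subseteq G$ then $\Circ(G)\le 6$ and Theorem~\ref{thm graphs circumference up to 6} applies directly, without any block decomposition), but everything after that point needs to be replaced.
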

\begin{proof}[sketch]
  By Theorem~\ref{thm graphs circumference up to 6}, it suffices to
  provide an outside-obstacle representation of each 7-vertex graph containing $C_7$. 
  In Appendix~\appref{appendixA}, we classify such graphs into 15 groups and 
  give an outside-obstacle representation of each. 
\end{proof}

\section{Co-Bipartite Graphs}
\label{sec:co-bipartite}

We now consider obstacle representations of \emph{co-bipartite}
graphs.
Recall that a graph is co-bipartite if its complement is bipartite.
Using this seemingly simple graph class, we settle an open problem
posed by Alpert et al.~\cite{akl-ong-DCG10} who asked if each
graph with obstacle number~1 has an outside-obstacle
representation. Namely, we provide an 11-vertex graph~$B_{11}$
(see Fig.~\ref{fig:K}) where not only is this not the case, but $B_{11}$ 
in fact has an inside-obstacle representation where the obstacle is the 
simplest possible shape, i.e., a triangle.\footnote{Note that for  
topologically closed obstacles, this obstacle could be a line 
segment.}  We also provide a smallest graph with obstacle number~2;
see the 8-vertex graph in Fig.~\ref{fig:B_8}. This improves on the
smallest previously known such graphs (e.g., the 10-vertex graphs of
Pach and Sar\i{\"o}z~\cite{ps-sglon-GC11} and of
Berman et al.~\cite{bcfghw-gong1-Manu16}) and shows that
Theorem~\ref{thm graphs order up to 7} 
is tight. 

\noindent\textit{\textbf{Properties of Outside-Obstacle Representations.}} We build on the easy observation (see Observation~\ref{obs:cliques_in_Out} 
below) that in every outside-obstacle representation of a graph, for every 
clique $Z$, the convex hull $\CH(Z)$ of the point set of $Z$ cannot be 
touched by the obstacle. In other words, the obstacle must occur outside of 
each such convex hull. Since we focus on co-bipartite graphs, this 
observation greatly restricts the ways one may realize an outside 
representation. Additionally, we will use this observation implicitly 
throughout this section whenever considering two cliques in a graph with an 
outside-obstacle representation. 

\begin{obs}\label{obs:cliques_in_Out}
If $G$ has an outside-obstacle representation $(P,C)$, then for every clique $Z \subseteq V(G)$, the convex hull $\CH(Z)$ of the points corresponding to $Z$ is disjoint from $C$, i.e., $C \cap \CH(Z) = \emptyset$.  
\end{obs}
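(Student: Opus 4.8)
The plan is to prove Observation~\ref{obs:cliques_in_Out} by contradiction, exploiting the defining property of a visibility graph together with the fact that the obstacle~$C$ is connected and lies in the \emph{unbounded} region of the drawing. Suppose for contradiction that $C \cap \CH(Z) \neq \emptyset$ for some clique $Z \subseteq V(G)$ in an outside-obstacle representation $(P,C)$. Since $C$ does not contain any point of $P$ (the points lie in the complement of~$C$) and $C$ is open and connected, $C$ must intersect the \emph{interior} of $\CH(Z)$ (if it only met the boundary, i.e. an edge of the hull, a point on that edge would block the two hull-vertices spanning it; more carefully, the intersection being nonempty with $C$ open forces $C$ to meet the open interior or else to cross a hull edge).

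The key step is then: because $Z$ is a clique, every pair of points $p,q \in Z$ sees each other, so the segment $\overline{pq}$ is disjoint from~$C$; in particular no edge of the polygon $\CH(Z)$ meets~$C$, and indeed none of the segments between points of $Z$ meet $C$. Hence $C$ cannot ``escape'' from the interior of $\CH(Z)$ across any edge of the convex hull, nor can it be pierced by a chord between two vertices of~$Z$. So the connected set $C$, having a point inside $\CH(Z)$, is trapped: it is entirely contained in $\CH(Z)$ minus the union of segments $\overline{pq}$ for $p,q\in Z$. But $\CH(Z)$ is a bounded region, which contradicts the assumption that $C$ lies in the unbounded component of the visibility drawing — an outside obstacle by definition is not contained in any bounded region. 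This contradiction establishes the claim.

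The step I expect to require the most care is making precise the topological fact that a connected set meeting the open interior of a convex polygon, and disjoint from that polygon's boundary edges, must be contained in the closed polygon. This is where one uses that $C$ is connected and that the boundary of $\CH(Z)$ is exactly the union of the hull edges, each of which is a subsegment of some $\overline{pq}$ with $p,q \in Z$ (here general position of $P$ matters so that hull vertices are genuine points of~$Z$ and edges are genuine segments between pairs of them). Once $C \subseteq \CH(Z)$ is established, the contradiction with ``outside obstacle'' is immediate. A clean way to phrase it: let $x \in C \cap \CH(Z)$; any path in $C$ from $x$ to infinity would have to leave the bounded set $\CH(Z)$, hence cross its boundary, hence meet some segment $\overline{pq}$ with $p,q \in Z$, contradicting $\overline{pq}\cap C = \emptyset$. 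Therefore no such path exists, so $x$ is not in the unbounded component of the complement of the drawing, contradicting that $C$ is an outside obstacle.
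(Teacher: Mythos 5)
Your proof is correct. The paper states this as an ``easy observation'' and gives no proof at all, so there is nothing to compare against; your argument is the intended one. Two small remarks: you do not actually need the connectivity of $C$ or the conclusion that $C$ is entirely trapped inside $\CH(Z)$ --- it suffices that a single point $x\in C\cap\CH(Z)$ yields a contradiction. Indeed, $\partial\CH(Z)$ is covered by segments $\overline{pq}$ with $p,q\in Z$, which are edges of the drawing since $Z$ is a clique and which are disjoint from $C$; hence $x$ must lie in the open interior of $\CH(Z)$, and any such point is separated from infinity by $\partial\CH(Z)\subseteq$ drawing, so it cannot lie in the unbounded component of the complement of the visibility drawing, contradicting the definition of an outside obstacle. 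This is exactly the clean formulation you give in your final sentences, and it is all that is needed.
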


For a graph $G$ containing two cliques $Z,Z' \subseteq V(G)$ and
outside-obstacle representation, consider the convex hulls $\CH(Z)$ and
$\CH(Z')$. We say that these convex hulls are \emph{$k$-crossing} when 
$\CH(Z) \setminus \CH(Z')$ consists of $k+1$ 
disjoint regions. Note that this condition is symmetric, i.e., when $\CH(Z) 
\setminus \CH(Z')$ consists of $r$ disjoint regions so does $\CH(Z') 
\setminus \CH(Z)$. 
We refer to these disjoint regions of the difference as the \emph{petals} 
of $Z$ ($Z'$ respectively). 

We now introduce a special 6-vertex graph $K^*_6$ which is used in the 
following technical lemma and our NP-hardness proof.  This graph is
the result of deleting a 3-edge matching from a 6-clique; see
Fig.~\ref{fig:co-bipartite-lemma(b)}.

\begin{lem}\label{lem:co-bipartite_outside-obstacles}
Let $G$ be a graph containing two cliques $Z,Z'$. For every outside-obstacle 
representation of $G$, the following properties hold.
\begin{enumerate}[(a)]
\item \label{enum:co-bip-priv} %
  If $\CH(Z)$ and $\CH(Z')$ are $t$-crossing, then every vertex in $Z$
  has at least $t-1$ neighbors in $Z'$ and vice versa.  That is, 
  if $Z$ contains a vertex with only $r$ neighbors in $Z'$,
  then $\CH(Z)$ and $\CH(Z')$ are at most $(r+1)$-crossing.
\item\label{enum:co-bip-K_6-match} If $G$ contains $K^*_6$ (with
  missing edges $z_1z'_1$, $z_2z'_2$, $z_3z'_3$; see
  Fig.~\ref{fig:co-bipartite-lemma(b)}) 
  as an induced subgraph, $\{z_1,z_2,z_3\} \subseteq Z$, and
  $\{z'_1,z'_2,z'_3\} \subseteq Z'$, then $\CH(\{z_1, z_2, z_3\})$ and
  $\CH(\{z_1',z_2',z_3'\})$ are at least 1-crossing. Furthermore,
  $\CH(Z)$ and $\CH(Z')$ are at least 1-crossing.
\item\label{enum:co-bip-4cycle} If $G$ contains a 4-cycle
  $z_1z_1'z_2'z_2$ as an induced subgraph, $\{z_1,z_2\}  \subseteq Z$,
  $\{z_1',z_2'\} \subseteq Z'$, $\CH(Z)$ and $\CH(Z')$ intersect,
  and~$z_1$ and~$z_2$ are contained in a petal~$Q^Z$ of~$Z$, then
  $z_1'$ and~$z_2'$ are contained in different petals of~$Z'$ which
  are both adjacent to~$Q^Z$.
  This implies that, if $\CH(Z)$ and $\CH(Z')$ are 1-crossing, then
  either~$z_1$ and~$z_2$ or $z_1'$ and~$z_2'$ are in different petals.
\end{enumerate}
\end{lem}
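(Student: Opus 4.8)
The plan is to prove the three parts essentially independently, using only Observation~\ref{obs:cliques_in_Out} (each clique's convex hull avoids the obstacle $C$) together with elementary plane geometry/topology of convex sets, especially the behavior of the boundary $\partial\CH(Z)$ of a clique near the obstacle. Throughout, I will think of the "petals" of $Z$ relative to $Z'$ as the connected components of $\CH(Z)\setminus\CH(Z')$, and record the following structural fact: since $C$ is connected and disjoint from both $\CH(Z)$ and $\CH(Z')$, the region between them cannot be "capped off" by the obstacle, so the obstacle must thread through each gap; this is what forces the crossing number to control the local structure.

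For part~(\ref{enum:co-bip-priv}), I would argue contrapositively and combinatorially on the cyclic order. Suppose $\CH(Z)$ and $\CH(Z')$ are $t$-crossing, so $\CH(Z)\setminus\CH(Z')$ has $t+1$ petals $Q_0,\dots,Q_t$, arranged in cyclic order around $\partial\CH(Z)$ and separated by the $t+1$ arcs of $\partial\CH(Z)$ lying inside $\CH(Z')$; between consecutive petals the boundary $\partial\CH(Z)$ must exit $\CH(Z')$, and since $C$ is disjoint from $\CH(Z')$ but $C$ separates parts of $\CH(Z)$ across the gap, each "entry/exit pair" of $\partial\CH(Z)$ into $\partial\CH(Z')$ is witnessed by an edge of $Z$ crossing $\partial\CH(Z')$. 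Now fix a vertex $z\in Z$; the segments from $z$ to the vertices of $Z'$ all lie in $\CH(Z\cup Z')$ but must avoid $C$. A vertex of $Z$ lying in the "deepest" part of a petal $Q_i$ can reach only those vertices of $Z'$ whose connecting segment does not need to cross the obstacle threading through the two adjacent gaps; I would show that this rules out at most two vertices of $Z'$ — the ones "hidden behind" the obstacle in the two neighboring gaps — but a more careful accounting gives that $z$ misses at most... one vertex of $Z'$ when $\CH(Z)$, $\CH(Z')$ are $t$-crossing for each additional petal beyond the first, yielding at least $t-1$ neighbors. The cleanest route is: each vertex $z\in Z$ that fails to see some $z'\in Z'$ forces the segment $\overline{zz'}$ to meet $C$, hence (since $z\in\CH(Z)$, $z'\in\CH(Z')$, both disjoint from $C$) $C$ must poke into the gap region between $\CH(Z)$ and $\CH(Z')$ precisely there; the number of independent such "pokes" a connected obstacle can make is bounded by one more than the number of times the two hulls alternate along their common boundary, i.e. by the crossing number, and a single vertex can be blocked by at most all-but-one of them — giving the bound. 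I expect this counting step, done rigorously, to be the main obstacle, since one must be careful that the obstacle's single connected component can still block many pairs through one long "finger."

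For part~(\ref{enum:co-bip-K_6-match}): let $T=\{z_1,z_2,z_3\}\subseteq Z$ and $T'=\{z_1',z_2',z_3'\}\subseteq Z'$ with the three missing edges $z_iz_i'$. By Observation~\ref{obs:cliques_in_Out}, $\CH(T)$ and $\CH(T')$ each avoid $C$; I claim they must intersect, and in fact be at least $1$-crossing. Each $z_i$ sees the two vertices $z_j',z_k'$ of $T'$ with $j,k\neq i$, so $\overline{z_iz_j'}$ and $\overline{z_iz_k'}$ avoid $C$, but $\overline{z_iz_i'}$ meets $C$. If $\CH(T)$ and $\CH(T')$ were disjoint (0-crossing, i.e. their difference is one region) then they lie in disjoint half-planes separated by some line $\ell$; the obstacle $C$ blocks each $\overline{z_iz_i'}$, so $C$ meets each of these three segments, all of which cross $\ell$ — this forces $C$ to "cross back and forth" near $\ell$ in a way I will show contradicts the fact that $C$ avoids the two triangles $\CH(T),\CH(T')$ and that $z_i$ sees $z_j',z_k'$; intuitively, a connected obstacle sitting between two disjoint triangles and blocking all three "long diagonals" while leaving all six "short diagonals" open is impossible by a parity/Jordan-curve argument on how $C$ separates the vertices. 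Hence $\CH(T)$ and $\CH(T')$ intersect; refining the same argument (they cannot merely touch or be $0$-crossing with overlap) gives at-least-$1$-crossing. Since $T\subseteq Z$ and $T'\subseteq Z'$, we have $\CH(T)\subseteq\CH(Z)$ and $\CH(T')\subseteq\CH(Z')$, and a $1$-crossing pair of sub-hulls forces the enclosing hulls to be at least $1$-crossing as well (shrinking a hull cannot reduce the number of alternations below that of a contained pair — I would spell this monotonicity out as a small sublemma). The delicate point here is the Jordan-curve bookkeeping showing that the "short diagonals open, long diagonals blocked" configuration is geometrically infeasible when the triangles are disjoint; that is where I would spend the most care.

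For part~(\ref{enum:co-bip-4cycle}): assume $\CH(Z)\cap\CH(Z')\neq\emptyset$ and that $z_1,z_2$ lie in a common petal $Q^Z$ of $Z$ (a component of $\CH(Z)\setminus\CH(Z')$). The induced $4$-cycle $z_1z_1'z_2'z_2$ means $z_1$ sees $z_1'$ and $z_2$, $z_1'$ sees $z_2'$, $z_2'$ sees $z_2$, but $z_1\not\sim z_2'$ and $z_1'\not\sim z_2$. So $\overline{z_1z_2'}$ and $\overline{z_1'z_2}$ meet $C$ while $\overline{z_1z_1'}$, $\overline{z_2z_2'}$ avoid $C$. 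Since $z_1,z_2\in Q^Z$ are on the same side "outside" $\CH(Z')$, the two good segments $\overline{z_1z_1'}$ and $\overline{z_2z_2'}$ enter $\CH(Z')$ from the boundary arc bounding $Q^Z$; I would track which petals of $Z'$ the points $z_1',z_2'$ sit in. If $z_1',z_2'$ were in the same petal $Q^{Z'}$ of $Z'$, then the quadrilateral $z_1z_1'z_2'z_2$ (or rather the union of segments $\overline{z_1z_1'},\overline{z_1'z_2'},\overline{z_2'z_2}$) would enclose a region that, together with the arcs, pens in part of $C$ away from infinity or else forces $C$ to meet one of the three good segments — a contradiction with the visibilities present. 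Hence $z_1',z_2'$ lie in different petals of $Z'$, and tracing the good segments $\overline{z_1z_1'}$ and $\overline{z_2z_2'}$ which run from $Q^Z$ into these petals shows each such petal is adjacent to $Q^Z$ (shares a bounding arc across which the segment passes). Finally, in the $1$-crossing case $Z$ has exactly two petals and $Z'$ has exactly two petals; if both $z_1,z_2$ were in one petal of $Z$, the above forces $z_1',z_2'$ into the two distinct petals of $Z'$, proving the "This implies" clause; symmetrically if both $z_1',z_2'$ share a petal of $Z'$. The main obstacle in this part is making precise the "the three good segments plus boundary arcs enclose a region that traps $C$" argument; I anticipate handling it via a clean application of the Jordan curve theorem to the closed curve formed by $\overline{z_1z_1'}$, an arc of $\partial\CH(Z')$, $\overline{z_2'z_2}$, and an arc of $\partial\CH(Z)$, noting $C$ avoids all four pieces yet must cross the interior to block $\overline{z_1z_2'}$.
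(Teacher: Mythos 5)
Your part~(\ref{enum:co-bip-priv}) rests on bounding how many visibilities a single connected obstacle can block, and you yourself flag that this counting is ``the main obstacle.'' It is: a connected outside obstacle can send arbitrarily many thin fingers into the gaps between the two hulls, so no bound of the form ``one blocked vertex per alternation'' follows from connectivity of $C$, and your remark that the segments lie in $\CH(Z\cup Z')$ does not help, since $\CH(Z\cup Z')$ need not avoid $C$ (only $\CH(Z)$ and $\CH(Z')$ individually do, by Observation~\ref{obs:cliques_in_Out}). The paper's argument is the opposite, \emph{positive} one: for $v$ in a petal $Q^Z_1$ it exhibits an explicit convex region $R\subseteq \CH(Z)\cup\CH(Z')$, bounded by $\overline{p_1v}$, $\overline{vp_2}$ and an arc of $\partial\CH(Z')$, that contains all but two of the $t+1$ petals of $Z'$; since $\CH(Z)\cup\CH(Z')$ is disjoint from $C$, every segment from $v$ into $R$ is automatically unobstructed, and each petal of $Z'$ supplies a vertex. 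This idea is absent from your sketch, and without it part~(\ref{enum:co-bip-priv}) does not go through.

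Part~(\ref{enum:co-bip-K_6-match}) has two further problems. For disjoint $\CH(\{z_1,z_2,z_3\})$ and $\CH(\{z_1',z_2',z_3'\})$ you invoke an unproven ``parity/Jordan-curve'' infeasibility claim; the paper's actual mechanism is that each non-edge $z_iz_i'$ must appear on $\partial\CH(X\cup Y)$ (otherwise, because $z_iz_i'$ is the only non-edge of $K_6^*$ at $z_i$ or $z_i'$, four realized edges enclose it and shield it from the outside obstacle), while a boundary between two disjoint hulls can carry at most two such connecting segments. More seriously, your ``monotonicity sublemma''---that $1$-crossing sub-hulls force the enclosing hulls to be at least $1$-crossing---is false as a purely geometric statement: if $\CH(Z')\subseteq\CH(Z)$ with the two small triangles crossing inside $\CH(Z')$, then $\CH(Z')\setminus\CH(Z)=\emptyset$. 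Excluding this (and the ``intersecting but $0$-crossing'' triangle case you also leave open) requires the graph again: any vertex of $K_6^*$ lying in $\CH(Z)\cap\CH(Z')$ would see all of $Z\cup Z'$ and hence dominate $K_6^*$, which has no dominating vertex. Your part~(\ref{enum:co-bip-4cycle}), by contrast, follows essentially the paper's route---trapping a non-edge of the induced $4$-cycle inside a region bounded by realized edges and hull boundary---and only needs the convex/non-convex quadrilateral case analysis made explicit.
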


\begin{proof}
\ref{enum:co-bip-priv}
Suppose $\CH(Z)$ and $\CH(Z')$ are $t$-crossing for some $t \geq 2$.
Note that $|Z|,|Z'| \geq t+1$ since the convex hull of each must contain 
at least $t+1$ points. 
For $A \in \{Z,Z'\}$, let $Q^A_0, \dots, Q^A_{t}$ be the petals 
of $\CH(A)$ in clockwise order around $\CH(Z) \cap \CH(Z')$ where, for each 
$i \in \{0, \ldots, t\}$, $Q^Z_i$ is between $Q^{Z'}_i$ and $Q^{Z'}_{i+1}$ 
and all indices are considered modulo~$t+1$.  

Consider a vertex $v \in Z$ ($v \in Z'$ follows symmetrically). If $v$ is
in $\CH(Z) \cap \CH(Z')$, then we are done since $v$ sees every vertex in $Z'$
and $|Z'| \geq t+1$. So, suppose $v \in Q^Z_1$.
Consider the points $p_1=Q^{Z'}_1\cap Q^Z_0$ and $p_2=Q^{Z'}_2 \cap Q^Z_2$.
Define the subregion~$R$ (depicted as the grey region in 
Fig.~\ref{fig:co-bipartite-lemma(a)}) of $\CH(Z) \cup \CH(Z')$ 
whose boundary, in clockwise order, is formed by $\overline{p_1v}$,
$\overline{vp_2}$, and the polygonal chain from~$p_2$ to~$p_1$
along the boundary of $\CH(Z')$. 
Note that, for each $i \in \{0, 3, 4, \dots, t\}$, 
$Q^{Z'}_i \subset R$ and $R$ is convex, i.e., for every $u \in Q^{Z'}_i$, the 
line segment $\overline{vu}$ is contained in $\CH(Z) \cup \CH(Z')$. Thus, $v$ 
has at least $t-1$ neighbors in~$Z'$.  

\begin{figure}[tb]
 \begin{subfigure}[t]{0.194\textwidth}
  \centering     
  \includegraphics{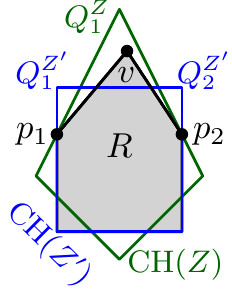}
  \caption{region $R$}
  \label{fig:co-bipartite-lemma(a)}
 \end{subfigure}
 \hfill
  \begin{subfigure}[t]{0.41\textwidth}
    \centering
    \includegraphics[page=2]{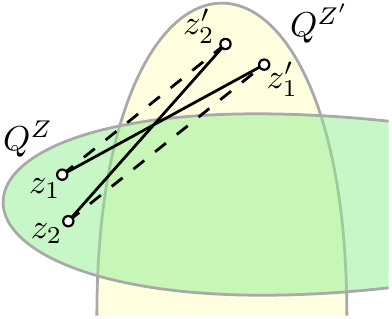}
    \caption{quadrilateral $z_1z_2z_2'z_1'$ is convex}
    \label{fig:co-bipartite-lemma(c)1}
  \end{subfigure}
  \hfill
  \begin{subfigure}[t]{0.37\textwidth}
    \centering
    \includegraphics[page=1]{co-bip-4cycle}
    \caption{$\overline{z_1z_1'}$ and $\overline{z_2z_2'}$ intersect}
    \label{fig:co-bipartite-lemma(c)2}
  \end{subfigure}

  \caption{Aides for the proof of
    Lemma~\ref{lem:co-bipartite_outside-obstacles}.}
  \label{fig:co-bipartite-lemma}
\end{figure}

\ref{enum:co-bip-K_6-match} Consider the graph~$K^*_6$ as labeled in
Fig.~\ref{fig:co-bipartite-lemma(b)}.
We first show that the convex hulls of $X = \{z_1, z_2, z_3\}$ and 
$Y=\{z'_1,z'_2,z'_3\}$ are at least 1-crossing. 

Suppose that $\CH(X)$ and $\CH(Y)$ intersect but are 0-crossing. 
Since $|X|=|Y|=3$, a vertex in
$X \cup Y$ must be contained in $\CH(X) \cap \CH(Y)$.  Hence, this
vertex dominates $X \cup Y$, but $K^*_6$ doesn't have such a
vertex---a contradiction.

Now, suppose that $\CH(X)$ and $\CH(Y)$ are disjoint, and let $H = \CH(X \cup 
Y)$.  Since $\CH(X)$ and $\CH(Y)$ are disjoint, the boundary $\partial H$ of 
$H$ contains at most two line segments that connect a vertex of~$X$ to a 
vertex of $Y$, i.e., at most two non-edges of~$K^*_6$ occur on~$\partial H$. 
However, we will now see that every non-edge of $K^*_6$ must occur on $
\partial H$. Consider the line segment $\overline{z_1z'_1}$ and suppose it 
is not on $\partial H$. This means that there are vertices $u$ and $v$ of 
$K^*_6 \setminus \{z_1,z'_1\}$ where $u$ and $v$ occur on opposite sides of 
the line determined by $\overline{z_1z'_1}$. However, since $z_1z'_1$ is the 
only non-edge incident to either
$z_1$ or $z'_1$, the non-edge $z_1z'_1$ 
is enclosed by $\overline{uz_1}$, $\overline{z_1v}$, $\overline{vz'_1}$, $
\overline{z'_1u}$, which provides a contradiction. Thus, every non-edge 
must occur on $\partial H$, which contradicts the fact that at most two line 
segments spanning between $\CH(X)$ and $\CH(Y)$ can occur on $\partial H$. 

We now know that $\CH(X)$ and $\CH(Y)$ are at least 1-crossing. We use
this to observe that $\CH(Z)$ and $\CH(Z')$ must also be at least
1-crossing.  Clearly, if $\CH(Z)$ and $\CH(Z')$ are disjoint, this
contradicts $\CH(X)$ and $\CH(Y)$ being at least 1-crossing. So,
suppose that $\CH(Z)$ and $\CH(Z')$ intersect but are not
1-crossing.  Note that no vertex $v$ of $K^*_6$ is contained in
$\CH(Z) \cap \CH(Z')$ since otherwise $v$ would dominate to
$K^*_6$.  In particular, $X \subseteq \CH(Z) \setminus \CH(Z')$
and $Y \subseteq \CH(Z') \setminus \CH(Z)$. However, we again
would have $\CH(X)$ and $\CH(Y)$ being disjoint, i.e., 
a contradiction. Thus, $\CH(Z)$ and $\CH(Z')$ are at least 1-crossing.

\ref{enum:co-bip-4cycle} Suppose that $z'_1$ and $z'_2$ belong to the
same petal $Q^{Z'}$. This petal is adjacent to $Q^Z$, as otherwise $z_1$
would be visible to $z_2'$ (i.e., providing a contradiction). 
Now, if the quadrilateral $z_1z_2z_2'z_1'$ is convex, the
non-edge $z_1z_2'$ is not accessible from the outside (see
Fig.~\ref{fig:co-bipartite-lemma(c)1}).
If the quadrilateral $z_1z_2z_2'z_1'$ is non-convex, either a non-edge $z_1z_2'$ or a non-edge $z_2z_1'$ will not be accessible from the outside.
Thus, $\overline{z_1z_1'}$ and $\overline{z_2z_2'}$ intersect since $\CH(\{z_1,z_2\})$ and $\CH(\{z_1',z_2'\})$ are disjoint.
The edge $z_1z_1'$ together with the boundary of $\CH(Z) \cup \CH(Z')$ split the plane into at most two bounded and one unbounded region.
Then at least one of the non-edges $z_1z_2'$ and $z_1'z_2$ lies inside the union of the bounded regions.
This contradicts the fact that all non-edges should be accessible from the outside.
For example, in
Fig.~\ref{fig:co-bipartite-lemma(c)2}, the non-edge $z_1'z_2$ cannot
intersect any outside obstacle.
\end{proof}

\noindent\textit{\textbf{Inside- vs.\ Outside-Obstacle Graphs.}}
We now use Lemma~\ref{lem:co-bipartite_outside-obstacles} to show that
there is an 11-vertex graph (see $B_{11}$ in Fig.~\ref{fig:K})
that has 
an inside-obstacle representation but no outside-obstacle representation. 
This resolves an open question of Alpert et al.~\cite{akl-ong-DCG10}. We 
conjecture that, for any graph $G$ with at most 10 vertices, $\Obsin(G)=1$ 
implies $\Obsout(G)=1$.

  \begin{figure}[tb]
   \begin{subfigure}[t]{0.33\textwidth}
  \centering
  \includegraphics{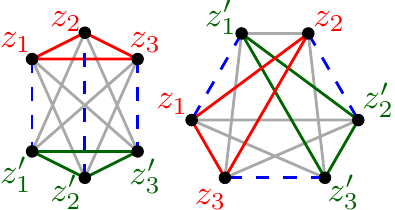}
  \caption{$K^*_6$ and its outside-obstacle representation}
  \label{fig:co-bipartite-lemma(b)}
  \end{subfigure}
  \hfill
    \begin{subfigure}[t]{.34\textwidth}
      \centering
      \includegraphics{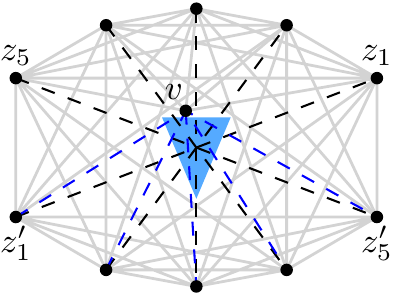}
      \caption{$B_{11}$ has $\Obsin(B_{11})=1$ (the obstacle is the
        blue triangle) but $\Obsout(B_{11})=2$}
      \label{fig:K}  
    \end{subfigure}
    \hfill
  \begin{subfigure}[t]{.26\textwidth}
    \centering
    \includegraphics{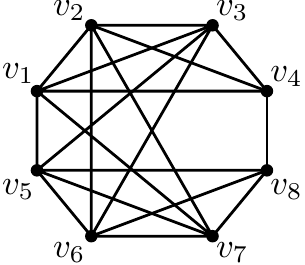}
    \caption{$B_8$ is a smallest graph of obstacle number~2}
    \label{fig:B_8}  
  \end{subfigure}
  \caption{Three small graphs: $K^*_6$, $B_{11}$ and $B_8$
  }
  \end{figure}

\begin{thm}\label{thm inside1 outside>1}
  There is an 11-vertex graph (e.g., $B_{11}$ in Fig.~\ref{fig:K})
\end{thm}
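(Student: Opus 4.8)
The plan is to establish the two halves separately: that $B_{11}$ admits an inside-obstacle representation, and that it admits no outside-obstacle representation.

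For the first half I would simply exhibit the drawing indicated in Fig.~\ref{fig:K}: place the eleven points as shown and take the obstacle to be the small (blue) triangle sitting in a bounded face. One then checks, for every pair of points, that the open segment between them meets the triangle exactly when the pair is a non-edge of $B_{11}$. This is a finite, routine verification; the only care needed is to confirm that the triangle really lies in the complement of the unbounded component of the drawing (so that it is an \emph{inside} obstacle, giving $\Obsin(B_{11})=1$) and that the points are in general position.

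The substance is the second half, where I would use Lemma~\ref{lem:co-bipartite_outside-obstacles}. Since $B_{11}$ is co-bipartite, fix a cover $V(B_{11}) = Z \cup Z'$ of its vertex set by two cliques (read off from the bipartition of the complement). Suppose, for contradiction, that $B_{11}$ has an outside-obstacle representation, and consider $\CH(Z)$ and $\CH(Z')$, which by Observation~\ref{obs:cliques_in_Out} both avoid the obstacle. First I would pin down the crossing number of this pair of hulls: $B_{11}$ contains an induced $K^*_6$ whose missing matching splits so that one of its two triples lies in $Z$ and the complementary triple in $Z'$, so by Lemma~\ref{lem:co-bipartite_outside-obstacles}\ref{enum:co-bip-K_6-match} the hulls $\CH(Z)$ and $\CH(Z')$ are at least $1$-crossing (in particular they are not disjoint, and no vertex lies in $\CH(Z)\cap\CH(Z')$); meanwhile $B_{11}$ has a vertex on one side with only two neighbours on the other, so by part~\ref{enum:co-bip-priv} the hulls are at most $2$-crossing. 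Ruling out the $2$-crossing case using part~\ref{enum:co-bip-priv} in both directions (it would force a stronger lower bound on the cross-partition degrees than $B_{11}$ admits, given where the $K^*_6$ vertices must sit among the petals) leaves exactly the $1$-crossing case, so each of $Z$ and $Z'$ is split into precisely two petals.

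Finally — and this is where I expect the real difficulty — I would derive a contradiction from part~\ref{enum:co-bip-4cycle}. $B_{11}$ contains several induced $4$-cycles $z_1 z_1' z_2' z_2$ with $z_1,z_2\in Z$ and $z_1',z_2'\in Z'$; since there are only two petals per side, part~\ref{enum:co-bip-4cycle} says each such cycle forces \emph{either} $z_1,z_2$ to lie in opposite petals of $Z$ \emph{or} $z_1',z_2'$ to lie in opposite petals of $Z'$. Reading ``which petal'' as a $2$-colouring of the vertices of $Z$ (and of $Z'$), every such $4$-cycle becomes a constraint ``$z_1,z_2$ differ or $z_1',z_2'$ differ'', and $B_{11}$ is built so that this system — combined with the cyclic adjacency of the two petals around the central region $\CH(Z)\cap\CH(Z')$ — is unsatisfiable, via a short parity/pigeonhole argument along a cycle through the $4$-cycle gadgets. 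Carrying out this combinatorial case analysis on the petal assignments, while making sure the disjoint-hull and higher-crossing cases are genuinely excluded by the first step, is the main obstacle; the rest is bookkeeping.
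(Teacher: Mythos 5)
Your overall architecture matches the paper's: the inside representation is read off Fig.~\ref{fig:K}, and the outside impossibility is driven by Observation~\ref{obs:cliques_in_Out} and Lemma~\ref{lem:co-bipartite_outside-obstacles} applied to the two cliques $Z=\{z_1,\dots,z_5,v\}$ and $Z'=\{z'_1,\dots,z'_5\}$. Two remarks. First, your degree bookkeeping is off: the vertex that controls the crossing number is $v$, which has \emph{zero} neighbours in $Z'$, so part~\ref{enum:co-bip-priv} gives at most $1$-crossing outright; a vertex with ``only two neighbours'' would only give at most $3$-crossing, and there is no such vertex in $B_{11}$. Combined with part~\ref{enum:co-bip-K_6-match} on the induced $K^*_6$ this pins the hulls at exactly $1$-crossing, with no $2$-crossing case left to exclude, so the ``rule out $2$-crossing in both directions'' step you sketch is both unnecessary and, as described, not what part~\ref{enum:co-bip-priv} delivers. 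Second, your endgame genuinely diverges from the paper's, and you leave it as an assertion. The paper pigeonholes three of the five $z_i$ into one of the two petals of $Z$ (no $z_i$ can lie in $\CH(Z)\cap\CH(Z')$ since each has a non-neighbour in $Z'$) and then re-applies part~\ref{enum:co-bip-K_6-match}: the hull $\CH(\{z_1,z_2,z_3\})$ sits inside a single petal, hence is disjoint from $\CH(\{z'_1,z'_2,z'_3\})\subseteq\CH(Z')$, a contradiction. Your part-\ref{enum:co-bip-4cycle} route also closes, but not via a ``parity argument along a cycle'': for each pair $i\ne j$ the set $\{z_i,z_j,z'_i,z'_j\}$ induces a $4$-cycle, so in the $1$-crossing situation part~\ref{enum:co-bip-4cycle} forces either $z_i,z_j$ or $z'_i,z'_j$ into different petals; equivalently the map $i\mapsto(\text{petal of }z_i,\ \text{petal of }z'_i)$ must be injective on $\{1,\dots,5\}$, which is impossible since there are only four petal-pairs. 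Either finish is two lines once written down; you should actually write one of them rather than defer it as ``the main obstacle,'' since as it stands the decisive contradiction is only claimed, not derived.
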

\begin{proof}
The 11-vertex co-bipartite graph $B_{11}$ is constructed as follows. We 
start with~$K_{10}$ on the vertices $z_1, \ldots, z_5, z'_1, 
\ldots, z'_5$. We then delete a 5-edge matching $\{z_iz'_i : i \in \{1,
\ldots, 5\}\}$ from $K_{10}$ to obtain $K^*_{10}$. Finally, we
obtain~$B_{11}$ by adding a vertex~$v$ adjacent to $z_1, \ldots,
z_5$. 
(Fig.~\ref{fig:K} shows an inside-obstacle representation
of~$B_{11}$ with a triangular obstacle.)

It remains to argue that $B_{11}$ has no outside-obstacle representation. 
Note that $B_{11}$ contains two cliques $Z = \{z_1, \ldots, z_5, v\}$ and 
$Z' = \{z'_1, \ldots, z'_5\}$. Furthermore, the vertex $v \in Z$ has no 
neighbors in $Z'$. Thus, by Lemma~\ref{lem:co-bipartite_outside-obstacles}
\ref{enum:co-bip-priv}, in any outside-obstacle representation, $\CH(Z)$ and 
$\CH(Z')$ are at most 1-crossing. Additionally, since each $z_i$ has a 
non-neighbor in $Z'$, no $z_i$ is contained in $\CH(Z) \cap \CH(Z')$. In 
particular, since $Z$ has only two petals, there are three $z_i$'s, say 
$z_1,z_2,z_3$, that are contained in a single petal of~$Z$.
Now note that $K^*_6$ is 
the subgraph of $B_{11}$ induced by $\{z_1,z_2,z_3,z'_1,z'_2,z'_3\}$.
Since $z_1,z_2,z_3$ are contained in a petal of $Z$,  
$\CH(\{z_1,z_2,z_3\})$ and $\CH(\{z'_1,z'_2,z'_3\})$ are disjoint,
contradicting Lemma~\ref{lem:co-bipartite_outside-obstacles}
\ref{enum:co-bip-K_6-match}. Thus, $B_{11}$ has outside-obstacle 
number~2. 
\end{proof}

Note that a graph with an inside-obstacle representation is either a 
clique or contains a cycle since an inside obstacle cannot (by
definition) pierce the convex hull of the point set\footnote{%
In Appendix~\appref{appendixD}, we show that $K_{2,3}$ is the smallest
graph with a cycle and an outside-obstacle 
representation but no inside-obstacle representation.}. 
Thus, by Theorem~\ref{thm inside1 outside>1} and this fact, we have the
following. 

\begin{thm}\label{thm:inside||outside}
  The classes of inside-obstacle representable graphs and
  outside-obstacle representable graphs are incomparable.
\end{thm}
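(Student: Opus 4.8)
The plan is to prove incomparability by exhibiting one graph that lies in the inside-obstacle class but not the outside-obstacle class, and one that lies in the outside-obstacle class but not the inside-obstacle class.

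For the direction ``there is an inside-obstacle graph that is not an outside-obstacle graph,'' I would simply take the graph $B_{11}$ of Theorem~\ref{thm inside1 outside>1}. That theorem already supplies an explicit inside-obstacle representation of $B_{11}$ (the triangular obstacle of Fig.~\ref{fig:K}) and, via Lemma~\ref{lem:co-bipartite_outside-obstacles}, establishes $\Obsout(B_{11})=2$; hence $B_{11}$ belongs to the inside class but not the outside class, and no new argument is needed.

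For the opposite direction I would exhibit a graph with an outside-obstacle representation and no inside-obstacle representation, using the structural observation recorded just above the theorem: every graph admitting an inside-obstacle representation is either a clique or contains a cycle (intuitively, an inside obstacle must lie inside the convex hull of the point set, enclosed by the drawing, which forces a cycle as soon as some pair of points is blocked). Consequently, any graph that is neither complete nor contains a cycle has no inside-obstacle representation; the path $P_3$ is the smallest such graph. On the other hand $P_3$ is acyclic, so its circumference is at most~$6$, and Theorem~\ref{thm graphs circumference up to 6} gives it an outside-obstacle representation (equivalently, $P_3$ is outerplanar, and outerplanar graphs have outside-obstacle representations by Alpert et al.~\cite{akl-ong-DCG10}). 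So $P_3$ lies in the outside class but not the inside class, and combining the two witnesses yields the claimed incomparability.

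I do not expect a genuine obstacle here: both halves are immediate consequences of results and observations already established, and the only statement deserving a sentence of justification is the ``clique-or-cycle'' fact, which is exactly the remark in the paragraph preceding the theorem. If one prefers the second witness to contain a cycle itself---to emphasize that the phenomenon is not an artifact of acyclicity---one can instead invoke the appendix fact that $K_{2,3}$ has an outside- but no inside-obstacle representation; for the statement as written, $P_3$ already suffices.
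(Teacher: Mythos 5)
Your proposal is correct and matches the paper's own argument: the paper likewise combines Theorem~\ref{thm inside1 outside>1} ($B_{11}$ is inside- but not outside-representable) with the observation that an inside-obstacle graph must be a clique or contain a cycle, which rules out inside representations of small non-complete acyclic graphs such as $P_3$ (the paper's footnote offers $K_{2,3}$ as the cyclic alternative, just as you do). No gaps.
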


\noindent\textit{\textbf{Obstacle Number~2.}}
We present an 8-vertex graph (see $B_8$ in Fig.~\ref{fig:B_8}) with 
obstacle number~2. To prove this result, we first apply
Lemma~\ref{lem:co-bipartite_outside-obstacles} to show that $B_8$ has
no outside-obstacle representation.
In Lemma~\ref{lem:8-vertex-no-inside-obstacle} (proven in Appendix~\ref{appendixB}), we 
demonstrate that $B_8$ also has no inside-obstacle representation.
In particular, these lemmas together with Theorem~\ref{thm graphs order up to 7}
provide the following theorem.
 
\begin{thm}\label{thm smallest graph}
  The smallest graphs without a single-obstacle representation
  have eight vertices, e.g., the co-bipartite graph~$B_8$ in
  Fig.~\ref{fig:B_8}.
\end{thm}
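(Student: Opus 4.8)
The plan is to establish Theorem~\ref{thm smallest graph} by combining three ingredients already available or announced in the paper. First, by Theorem~\ref{thm graphs order up to 7}, every graph on at most seven vertices has an outside-obstacle representation, hence has a single-obstacle representation; so no graph with at most seven vertices can be a counterexample, and it remains only to exhibit an eight-vertex graph with no single-obstacle representation. Second, I would take $B_8$ as defined in Fig.~\ref{fig:B_8} and show $\Obsout(B_8) \ge 2$, i.e.\ that $B_8$ has no outside-obstacle representation. Third, I would invoke Lemma~\ref{lem:8-vertex-no-inside-obstacle} (whose proof is deferred to Appendix~\ref{appendixB}) to conclude $\Obsin(B_8) \ge 2$, so that $B_8$ has no single obstacle of either type; since every single obstacle is either an inside or an outside obstacle, this gives $\Obs(B_8) \ge 2$, and together with the trivial fact that two obstacles suffice for any small graph (or an explicit two-obstacle drawing of $B_8$), we get $\Obs(B_8) = 2$ and minimality of the vertex count.

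The substantive step I would actually carry out here is the outside-obstacle impossibility for $B_8$. The approach is to use Lemma~\ref{lem:co-bipartite_outside-obstacles}, exactly as in the proof of Theorem~\ref{thm inside1 outside>1}. Since $B_8$ is co-bipartite, its complement is bipartite, so $V(B_8)$ splits into two cliques $Z$ and $Z'$; I would identify these from the structure of $B_8$ and look at the induced non-edges between them. Using part~\ref{enum:co-bip-priv} of the lemma, the existence of vertices with few cross-neighbors bounds the crossing number of $\CH(Z)$ and $\CH(Z')$ from above (in particular forcing them to be at most $1$-crossing, so $Z$ and $Z'$ each have at most two petals); using part~\ref{enum:co-bip-K_6-match}, an induced $K^*_6$ inside $B_8$ forces $\CH(Z)$ and $\CH(Z')$ (and the relevant triples) to be at least $1$-crossing; and using part~\ref{enum:co-bip-4cycle}, an induced $4$-cycle forces the relevant pairs into distinct petals. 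A pigeonhole argument on the at most two petals per clique then produces two graph vertices that, on one hand must lie in the same petal, and on the other hand must lie in different petals — a contradiction. This mirrors the argument given for $B_{11}$ but must be tightened to work with only eight vertices, so the petal/pigeonhole counting is the delicate part: $B_8$ does not contain the full $K^*_{10}$ structure, so I expect to need a $K^*_6$ together with one or two extra induced $4$-cycles (or an $8$-vertex-specific combination of the three lemma parts) rather than the clean five-matching argument.

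The main obstacle, then, is purely combinatorial bookkeeping on $B_8$: choosing the clique bipartition $Z \cup Z'$ and then checking which induced configurations ($K^*_6$, induced $4$-cycles, low-cross-degree vertices) are present, so that the three parts of Lemma~\ref{lem:co-bipartite_outside-obstacles} can be chained into a contradiction with only two petals per clique to work with. Once this is in place, the remaining pieces are immediate: Theorem~\ref{thm graphs order up to 7} rules out all smaller graphs, Lemma~\ref{lem:8-vertex-no-inside-obstacle} rules out an inside obstacle for $B_8$, and an explicit two-obstacle representation (or the trivial upper bound) shows $\Obs(B_8)=2$, establishing that the smallest graphs with no single-obstacle representation have exactly eight vertices. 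I would present the outside-obstacle part in full, cite Lemma~\ref{lem:8-vertex-no-inside-obstacle} for the inside part, and note the two-obstacle drawing to finish.
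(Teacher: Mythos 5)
Your overall architecture is exactly the paper's: minimality from Theorem~\ref{thm graphs order up to 7}, the inside case delegated to Lemma~\ref{lem:8-vertex-no-inside-obstacle}, and the outside case via the three parts of Lemma~\ref{lem:co-bipartite_outside-obstacles} applied to the cliques $Z=\{v_1,v_2,v_3,v_4\}$ and $Z'=\{v_5,v_6,v_7,v_8\}$ and the induced $K^*_6$ on $\{v_1,v_2,v_3,v_5,v_6,v_7\}$. However, there is one concrete misstep in the part you intended to carry out in full: you claim that Lemma~\ref{lem:co-bipartite_outside-obstacles}\ref{enum:co-bip-priv} forces $\CH(Z)$ and $\CH(Z')$ to be \emph{at most $1$-crossing}, so that each clique has only two petals. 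That is the $B_{11}$ situation, where the special vertex has \emph{zero} neighbors in the other clique. In $B_8$ the low-degree vertices $v_4$ and $v_8$ each have exactly one cross-neighbor ($v_4\sim v_8$), so part~\ref{enum:co-bip-priv} only yields ``at most $2$-crossing,'' and the $2$-crossing configuration (three petals per clique) is not excluded by your pigeonhole setup and must be dispatched separately.

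The paper handles that extra case as follows: if the hulls are $2$-crossing and $v_4\in Q^Z_0$, then since $v_4$ sees every vertex of $Z'$ lying in the $Z'$-petal opposite $Q^Z_0$, that petal can contain only $v_8$; symmetrically $Q^Z_0$ then contains only $v_4$. Hence $\{v_1,v_2,v_3\}$ and $\{v_5,v_6,v_7\}$ are confined to opposite pairs of petals, making $\CH(\{v_1,v_2,v_3\})$ and $\CH(\{v_5,v_6,v_7\})$ disjoint and contradicting part~\ref{enum:co-bip-K_6-match}. In the remaining $1$-crossing case your plan goes through essentially as you describe: part~\ref{enum:co-bip-K_6-match} forbids $v_1,v_2,v_3$ (resp.\ $v_5,v_6,v_7$) from sharing a petal, so up to symmetry $v_1,v_2$ and $v_3$ split $2$--$1$ across the two $Z$-petals and $v_5,v_7$ and $v_6$ split $2$--$1$ across the two $Z'$-petals, and then for each of the four possible placements of $v_4$ and $v_8$ one exhibits an induced $4$-cycle (e.g.\ $v_4v_2v_7v_8$) whose endpoints land in the same petals, contradicting part~\ref{enum:co-bip-4cycle}. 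So the gap is fixable, but as written your proof would silently skip the $2$-crossing case.
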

\begin{proof}
The graph $B_8$ has $8$ vertices $v_1, \ldots, v_8$. It has precisely the following set of non-edges: $v_1v_6$, $v_2v_5$, $v_3v_7$, $v_4v_5$, $v_4v_6$, $v_4v_7$, $v_8v_1$, $v_8v_2$, $v_8v_3$. Note that the subgraph induced by $\{v_1, v_2, v_3, v_5, v_6, v_7\}$ is a $K^*_6$. Further, note that $Z = \{v_1, v_2, v_3, v_4\}$ and $Z' = \{v_5, v_6, v_7, v_8\}$ are cliques. 
  
Suppose (for a contradiction) $B_8$ has an outside-obstacle representation. 
By Lemma~\ref{lem:co-bipartite_outside-obstacles} \ref{enum:co-bip-K_6-match}, 
$\CH(Z)$ and $\CH(Z')$ are at least 1-crossing. Additionally, since 
$v_4$ has only one neighbor in $Z'$, we know that $\CH(Z)$ and $\CH(Z')$ are 
at most 2-crossing. We will consider these two cases separately.
Let $Q^Z_0$, $Q^Z_1$, $Q^Z_2$ be the petals of $Z$ and 
$Q^{Z'}_0$, $Q^{Z'}_1$, $Q^{Z'}_2$ be the petals of $Z'$ where the cyclic 
order of the petals around $\CH(Z) \cap \CH(Z')$ is $Q^{Z'}_0,Q^Z_0$, 
$Q^{Z'}_1$, $Q^Z_1$, $Q^{Z'}_2$, $Q^Z_2$. Note that every vertex is contained in one of the petals. 

\medskip
\noindent\textbf{Case 1:} $\CH(Z)$ and $\CH(Z')$ are 2-crossing.
Suppose $v_4 \in Q^Z_0$. Since $v_8$ is the only neighbor of $v_4$ in $Z'$, 
we must have $v_8 \in Q^{Z'}_2$, and 
now the only vertex in $Q^Z_0$ is $v_4$ and the only vertex in $Q^{Z'}_2$ is 
$v_8$. However, we now have $\{v_1,v_2,v_3\} \subset Q^Z_1 \cup Q^Z_2$ 
and $\{v_5,v_6,v_7\} \subset Q^{Z'}_0 \cup Q^{Z'}_1$, i.e., 
$\CH(\{v_1,v_2,v_3\})$ and $\CH(\{v_5,v_6,v_7\}$ are disjoint,
contradicting Lemma~\ref{lem:co-bipartite_outside-obstacles}~\ref{enum:co-bip-K_6-match}.

\medskip
\noindent\textbf{Case 2:} $\CH(Z)$ and $\CH(Z')$ are 1-crossing.
Note that $v_1$, $v_2$, and $v_3$ cannot belong to the same petal 
(otherwise, we would contradict Lemma~\ref{lem:co-bipartite_outside-obstacles}~\ref{enum:co-bip-K_6-match}). Similarly, $v_5$, $v_6$, and $v_7$ 
cannot belong to the same petal. 
Thus, without loss of generality, we have 
$v_1$ and $v_2$ in $Q^Z_0$, 
$v_3$ in $Q^{Z}_1$, 
$v_5$ and $v_7$ in $Q^{Z'}_0$, and 
$v_6$ in $Q^{Z'}_1$.
When $v_4$ is in $Q^Z_0$ and $v_8$ is in $Q^{Z'}_0$, 
the induced 4-cycle $v_4v_2v_7v_8$ contradicts
Lemma~\ref{lem:co-bipartite_outside-obstacles}~\ref{enum:co-bip-4cycle}. 
Similarly, when $v_4$ is in $Q^Z_0$ and $v_8$ is in $Q^{Z'}_1$, we use the induced 4-cycle $v_4v_2v_6v_8$;
when $v_4$ is in $Q^Z_1$ and $v_8$ is in $Q^{Z'}_0$, we use the induced 4-cycle $v_4v_3v_5v_8$;
and when $v_4$ is in $Q^Z_1$ and $v_8$ is in $Q^{Z'}_1$, we use the induced 4-cycle $v_4v_3v_6v_8$.

It remains to show that $B_8$ has no inside-obstacle representation 
(formalized in Lemma~\ref{lem:8-vertex-no-inside-obstacle} below). 
This is proven in Appendix~\appref{appendixB}.
\end{proof}

\newcommand{\LemEightVertexNoInsideObstacle}{%
The graph $B_8$ in Fig.~\ref{fig:B_8} has no inside-obstacle 
representation.}
\wormhole{lem:8-vertex-no-inside-obstacle}
\begin{lem}\label{lem:8-vertex-no-inside-obstacle}
\LemEightVertexNoInsideObstacle
\end{lem}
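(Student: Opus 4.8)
The plan is to mirror the structure of the outside-obstacle analysis in Theorem~\ref{thm smallest graph}, but now for inside obstacles, where the key structural fact is dual: an \emph{inside} obstacle must lie in the complement of the unbounded component, hence (by Observation~\ref{obs:cliques_in_Out}'s dual) it must avoid $\CH(Z)$ for every clique $Z$, but it may now sit \emph{between} the two cliques rather than outside everything. First I would recall that $B_8$ has the two cliques $Z=\{v_1,v_2,v_3,v_4\}$ and $Z'=\{v_5,v_6,v_7,v_8\}$, and the induced $K^*_6$ on $\{v_1,v_2,v_3,v_5,v_6,v_7\}$ with matching $v_1v_6,v_2v_5,v_3v_7$ (reading off the non-edge list). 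As before, $\CH(Z)$ and $\CH(Z')$ are each convex quadrilaterals (or triangles, if some point is inside the hull of the others) which the obstacle cannot pierce, and every non-edge segment must cross the obstacle.

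The core of the argument will be a case analysis on how $\CH(Z)$ and $\CH(Z')$ sit relative to one another in an inside-obstacle drawing. I would first argue they cannot be disjoint: if $\CH(Z)\cap\CH(Z')=\emptyset$, then the obstacle — being connected and a single polygon — would have to be placed so as to simultaneously block all six non-edges $v_1v_6$, $v_2v_5$, $v_3v_7$, $v_8v_1$, $v_8v_2$, $v_8v_3$, plus $v_4v_5,v_4v_6,v_4v_7$; since it cannot enter either hull, it is confined to a bounded region between them, and one shows (as in the $K^*_6$ argument of Lemma~\ref{lem:co-bipartite_outside-obstacles}\ref{enum:co-bip-K_6-match}, essentially counting how many matching non-edges can appear on $\partial\CH(Z\cup Z')$) that no such placement blocks all three matching edges of the $K^*_6$ while leaving the clique edges unobstructed. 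So $\CH(Z)$ and $\CH(Z')$ intersect. Then I would adapt the crossing/petal terminology: $v_4$ has only one neighbor $v_8$ in $Z'$ and $v_8$ has only one neighbor $v_4$ in $Z$, which (by the same reasoning as Lemma~\ref{lem:co-bipartite_outside-obstacles}\ref{enum:co-bip-priv}, whose geometry is about visibility through the complement of a hull and is agnostic to inside/outside) forces the crossing number to be small — at most $2$ — and forces $v_4$ and each $z_i$ to lie outside $\CH(Z)\cap\CH(Z')$, and symmetrically for $Z'$. From here the combinatorics of which petal each of $v_1,\dots,v_8$ occupies is exactly as in Cases~1 and~2 of Theorem~\ref{thm smallest graph}: three of $v_1,v_2,v_3$ (resp.\ $v_5,v_6,v_7$) must share a petal in the 2-crossing case, contradicting the $K^*_6$ obstruction, and in the 1-crossing case one of the four induced 4-cycles through $v_4$ and $v_8$ is blocked in a way an inside obstacle cannot realize.

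The step I expect to be the main obstacle is making precise the claim that the 4-cycle obstruction of Lemma~\ref{lem:co-bipartite_outside-obstacles}\ref{enum:co-bip-4cycle} transfers to the inside-obstacle setting: that lemma is phrased in terms of non-edges being "accessible from the outside", whereas here the obstacle is inside, so I would need to replace "accessible from the outside obstacle" by "meeting the inside obstacle, which is disjoint from $\CH(Z)$, from $\CH(Z')$, and from the unbounded region". Concretely, when $v_4$ and $v_8$ sit in adjacent petals, the two non-edges $v_4$–$(\text{petal of }Z')$ and the 4-cycle's cross non-edge enclose a bounded sub-region of $\CH(Z)\cup\CH(Z')$; an inside obstacle blocking one of them is trapped in that sub-region and then cannot also block, say, $v_8v_1$ or the matching non-edge of the $K^*_6$ lying on the far side. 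I would organize this as a short lemma: in an inside-obstacle representation of $B_8$, the obstacle lies in a single connected region of $(\CH(Z)\cup\CH(Z'))\setminus(\CH(Z)\cap\CH(Z'))$ bounded away from the unbounded face, and then derive a contradiction by exhibiting two non-edges that the representation must block but which lie in regions separated by $\CH(Z)$ or $\CH(Z')$. Everything else is routine adaptation of the already-proven outside-obstacle case; the details go to Appendix~\appref{appendixB}.
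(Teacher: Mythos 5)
Your proposal takes a genuinely different route from the paper, and it has a real gap at its foundation. You build everything on transferring the clique-hull/petal machinery of Lemma~\ref{lem:co-bipartite_outside-obstacles} to the inside-obstacle setting, but that machinery rests on Observation~\ref{obs:cliques_in_Out}, which is an \emph{outside}-obstacle fact: an outside obstacle avoids $\CH(Z)$ for a clique $Z$ because it lives in the unbounded face, whereas an inside obstacle only has to avoid the edge segments of $Z$ and may perfectly well sit in a bounded cell of the arrangement of those segments \emph{inside} $\CH(Z)$ (e.g., one of the four triangles cut out by the crossing diagonals of a convex $4$-clique). Without that observation the petal decomposition, part~\ref{enum:co-bip-priv} (your ``at most $2$-crossing'' step), and the disjoint-hulls case of part~\ref{enum:co-bip-K_6-match} (whose contradiction is literally ``the non-edge is enclosed by edges,'' which is no contradiction at all for an inside obstacle) all lose their justification. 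You correctly flag part~\ref{enum:co-bip-4cycle} as the step needing a new argument, but the problem starts earlier than that, and your proposed fix (``the obstacle is trapped in a sub-region and cannot also block\dots'') is exactly the content that would need to be proved case by case; it is not a routine adaptation.

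The paper avoids this entirely by organizing the proof around the convex hull of the \emph{whole} point set rather than the two clique hulls: it first shows the hull vertices of any inside-obstacle representation form a cycle (Observation~\ref{obs convex hull cycle}), establishes geometric constraints on induced $3$-edge paths whose endpoints are on the hull (Observations~\ref{obs 4-path} and~\ref{obs two 4-paths}) and on which pairs of vertices of $B_8$ can be non-consecutive on the hull (Observation~\ref{obs vtx on CH}), and then runs a case analysis on whether $\CH(V(B_8))$ is a $6$-, $5$-, $4$-, or $3$-gon, in each case exhibiting two non-edges confined to disjoint bounded regions, which a single connected inside obstacle cannot both meet. That ``two non-edges in separated bounded regions'' principle is the correct inside-obstacle analogue of ``accessible from the outside,'' and you gesture at it at the end, but your write-up does not supply the case analysis or the hull-structure lemmas that make it go through. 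As it stands, the proposal identifies the right target but does not constitute a proof.
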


\section{NP-Hardness}
\label{sec:hardness}

In this section, we show that the single-obstacle, outside-obstacle,
inside-obstacle graph sandwich problems as well as the simple-polygon
visibility graph sandwich problem are all NP-hard.  Note that the
complexity of the obstacle graph sandwich problem yields an upper
bound for the complexity of our (simpler) recognition problem.

\begin{thm}\label{hardness}
  The outside-obstacle graph sandwich problem is NP-hard.  In other
  words, given two graphs~$G$ and~$H$ with the same vertex set and $G
  \subseteq H$, it is NP-hard to decide whether there is a graph~$K$
  such that $G \subseteq K \subseteq H$ and $\Obsout(K)=1$.
  This holds even if $G$ and $H$ are co-bipartite.
\end{thm}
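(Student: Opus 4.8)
The plan is to reduce from a suitable NP-hard constraint-satisfaction-style problem — my first choice would be a planar or otherwise restricted variant of 3-SAT (or \textsc{NotAllEqual}-3SAT), since the combinatorial freedom in an outside-obstacle representation of a co-bipartite graph is tightly controlled by Lemma~\ref{lem:co-bipartite_outside-obstacles}. The key structural idea is that, for a co-bipartite graph with parts realized by two cliques $Z$ and $Z'$, the only geometric degree of freedom is the crossing pattern of $\CH(Z)$ and $\CH(Z')$ and, within that, which petal each vertex lies in. Part~\ref{enum:co-bip-priv} of the lemma bounds the crossing number in terms of the sparsest vertex's neighborhood across the cut, and parts~\ref{enum:co-bip-K_6-match} and~\ref{enum:co-bip-4cycle} force induced copies of $K^*_6$ and induced 4-cycles to be ``split'' across petals. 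So the reduction should build, from a formula $\varphi$, two co-bipartite graphs $G \subseteq H$ on a common vertex set such that a sandwiched $K$ with $\Obsout(K)=1$ corresponds exactly to a satisfying assignment of $\varphi$.

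The key steps, in order, would be: (1) Fix the vertex partition into two cliques $Z, Z'$ once and for all (these edges are in $G$, hence forced in $K$), plus a few ``anchor'' gadget vertices like the $v$ in $B_{11}$ that pin down the crossing number of $\CH(Z)$ and $\CH(Z')$ to a small constant — ideally forcing it to be $1$-crossing, so there are exactly two petals per side. (2) Build a \emph{variable gadget}: for each Boolean variable $x_i$, introduce a small set of vertices together with induced-$K^*_6$ and induced-4-cycle substructures (using the optional edges of $H$ that are not in $G$) so that, by the lemma, the gadget has exactly two feasible petal-assignments, which we label \textsc{true}/\textsc{false}. (3) Build a \emph{clause gadget} that is simultaneously satisfiable (i.e., can be consistently placed into petals without creating a disjoint-hulls configuration that violates \ref{enum:co-bip-K_6-match}) if and only if at least one incident literal is in its satisfying state; the natural mechanism is to force, via \ref{enum:co-bip-K_6-match}, that some induced $K^*_6$ spanning the clause vertices must be split, which fails precisely when all three literals point the ``wrong'' way. (4) Wire variables to clauses with copy/consistency gadgets (again built from 4-cycles and $K^*_6$'s across the cut) so that literal occurrences agree with their variable's chosen petal. (5) Prove both directions: given a satisfying assignment, explicitly place all points — most of them tightly clustered near a common ``core'' region where $\CH(Z)$ meets $\CH(Z')$, with gadget vertices nudged into the petal dictated by the assignment — and exhibit a single outside obstacle threading between all the (forced) non-edges; conversely, given any outside-obstacle representation of a sandwiched $K$, read off the petal of each gadget vertex and invoke the lemma to show the induced assignment satisfies $\varphi$.

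I expect the main obstacle to be the \emph{forward} (completeness) direction: actually constructing a concrete point set and a single non-self-intersecting outside polygon that realizes all gadgets simultaneously, when a satisfying assignment is given. The lemma gives clean \emph{necessary} conditions, but turning a consistent petal-assignment into a genuine straight-line drawing plus one obstacle requires care — one must route the obstacle so that it blocks exactly the intended non-edges and no others, keep all clique convex hulls obstacle-free (Observation~\ref{obs:cliques_in_Out}), and ensure the obstacle stays in the unbounded face (the ``outside'' condition). A robust way to handle this is to place almost all points in tiny well-separated clusters along a convex curve, so that locally each gadget looks like a scaled copy of a fixed finite configuration (like the representations of $K^*_6$ and $B_{11}$ already exhibited in Fig.~\ref{fig:co-bipartite-lemma(b)} and Fig.~\ref{fig:K}), and then glue the local obstacle pieces into one global comb-shaped outside obstacle, with long thin ``teeth'' reaching in to block the required non-edges. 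The soundness direction, by contrast, should be essentially a bookkeeping argument on top of Lemma~\ref{lem:co-bipartite_outside-obstacles}. Finally, since the constructed $G$ and $H$ are co-bipartite and an inside obstacle cannot pierce $\CH(Z)$ for a clique $Z$, the same instance — possibly with a minor tweak forcing the representation to use an outside obstacle, e.g. a vertex dominating everything or a large clique — should simultaneously establish NP-hardness of the inside-obstacle, single-obstacle, and simple-polygon visibility graph sandwich problems, as claimed in the abstract.
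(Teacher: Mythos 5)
Your high-level strategy is the right one and matches the paper's: force the two cliques' convex hulls to be exactly 1-crossing, read a truth assignment off petal membership, and use Lemma~\ref{lem:co-bipartite_outside-obstacles} for soundness. But as written this is a plan rather than a proof, and it has two genuine gaps. First, the choice of source problem is not an incidental detail you can leave open: since the two petals of each clique are interchangeable (reflecting any valid representation swaps them), petal membership can only express constraints that are invariant under globally negating the assignment. A reduction from plain (even planar) 3-SAT along these lines therefore cannot work; you must start from a negation-symmetric problem, and the paper reduces from \textsc{MonotoneNotAllEqual3Sat}. You mention NAE-3SAT only as a parenthetical alternative and propose no symmetry-breaking mechanism, so committing to your ``first choice'' would sink the reduction.

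Second, none of the gadgets are constructed, and the clause mechanism you sketch is doubtful: you want a per-clause induced $K^*_6$ whose hulls must cross by Lemma~\ref{lem:co-bipartite_outside-obstacles}\ref{enum:co-bip-K_6-match}, but $K^*_6$ needs three specific missing edges across the cut, and it is unclear how to realize one per clause consistently when clauses share variables. The paper's encoding is much lighter and needs no variable, copy, or clause gadgets at all: each variable and each occurrence is a single vertex; one global copy of $K^*_6$ forces at-least-1-crossing and one extra vertex $u$ with a single neighbor in the other clique forces at-most-2-crossing (then exactly 1 via a short argument); the NAE constraint is enforced by giving each variable vertex a non-edge to every occurrence of a co-occurring variable, so that three same-petal variables yield two same-petal occurrences and hence an induced 4-cycle contradicting Lemma~\ref{lem:co-bipartite_outside-obstacles}\ref{enum:co-bip-4cycle}. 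Finally, you correctly identify the completeness direction as the hard part but leave it entirely open; the paper resolves it with an explicit layout on two crossing rectangles (true variables on the left segment, false on the right, occurrences on top/bottom segments ordered by their position within each clause), after which all non-edges provably reach the unbounded face. Without these concretizations the argument does not yet establish NP-hardness.
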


\begin{proof}
  We reduce from \textsc{MonotoneNotAllEqual3Sat}, which is
  NP-hard~\cite{s-csp-STOC78}.  In this version of \textsc{3Sat}, all
  literals are positive, and the task is to decide whether the given
  \textsc{3Sat} formula~$\varphi$ admits a truth assignment such that
  in each clause at least one and at most two variables are true.
  
  Given~$\varphi$, we build a graph~$G_\varphi$ with edges, non-edges
  and ``maybe''-edges such that $\varphi$ is a yes-instance if and only
  if~$G_\varphi$ has a subgraph that has an outside-obstacle
  representation and contains all edges, no non-edges
  and an arbitrary subset of the maybe-edges.  (In other words, the
  set of edges of~$G_\varphi$ yields~$G$ in the statement of the
  theorem, and the set of edges and maybe-edges yields~$H$.)
  Let $\{v_1,\dots,v_n\}$ be the set of variables, and let
  $\{C_1,\dots,C_m\}$ be the set of clauses in~$\varphi$.  For
  $i=1,\dots,n$, let~$v_{ij}$ be the $j$-th occurrence of $v_i$ in
  $\varphi$.

  Now we can construct~$G_\varphi$.  For each variable, we introduce a
  \emph{variable vertex} (of the same name).  These $n$ vertices form
  a clique.  For each occurrence~$v_{ij}$ of a variable~$v_i$
  in~$\varphi$, we introduce an \emph{occurrence vertex} (of the same
  name).  These $3m$ vertices also form a clique.  In order to
  restrict how the two cliques intersect, we add
  to~$G_\varphi$ a copy of $K_6^*$ labeled as in
  Fig.~\ref{fig:co-bipartite-lemma(b)}; vertices $z_1,z_2,z_3$
  participate in the occurrence-vertex clique, whereas vertices
  $z_1',z_2',z_3'$ participate in the variable-vertex clique.
  We add one more vertex~$u$ to the occurrence-vertex clique.
  The special vertex~$u$ is adjacent to $z_3'$ and has non-edges to
  all other vertices in the variable-vertex clique.
  The edge set of~$G_\varphi$ depends on~$\varphi$ as follows.
  Each variable vertex~$v_i$ has
  \begin{itemize}[topsep=0pt]
  \item an edge to any occurrence vertex~$v_{ij}$,
  \item a non-edge to any occurrence vertex~$v_{k\ell}$ that
    represents an occurrence of a variable~$v_k$ that co-occurs
    with~$v_i$ in some clause of~$\varphi$,
  \item a maybe-edge to any other occurrence vertex.
  \end{itemize}
\medskip

  Next, we show how to use a feasible truth assignment of~$\varphi$ to
  lay out~$G_\varphi$ so that all its non-edges are accessible from
  the outside.  We place the vertices on the boundary of two
  intersecting rectangles, one for each clique.  Given these
  positions, we show that all non-edges intersect the outer face of
  the union of the edges.  Finally, we bend the sides of the
  rectangles slightly into very flat circular arcs such that all of
  the previous (non-) visibilities remain and the vertices are in
  general position.

  We take two axis-aligned rectangles~$R_1$ and~$R_2$ that intersect
  as a cross; see Fig.~\ref{fig:hardness-sascha}.  Let
  $X_1,X_2,X_3,X_4$ be the corners of $R_1 \cap R_2$ in clockwise
  order, starting in the lower left corner.  We place the
  variable vertices on the boundary of the ``wide'' rectangle~$R_1$:
  the vertices $v_1,\dots,v_p$ of the true variables are equally
  spaced from top to bottom on a segment on the left side, similarly
  the vertices $v_{p+1},\dots,v_n$ of the false variables go to a
  segment on the right side.  (In Fig.~\ref{fig:hardness-sascha}(b),
  $p=3$.)  The two vertical segments are chosen such that they ``see''
  four disjoint horizontal segments on the top and bottom edge
  of~$R_2$; refer to Fig.~\ref{fig:hardness-sascha}(a) for the
  positions of the six segments in total.
  
  \begin{figure}[tb]
    \centering
    \includegraphics[page=4,scale=.75]{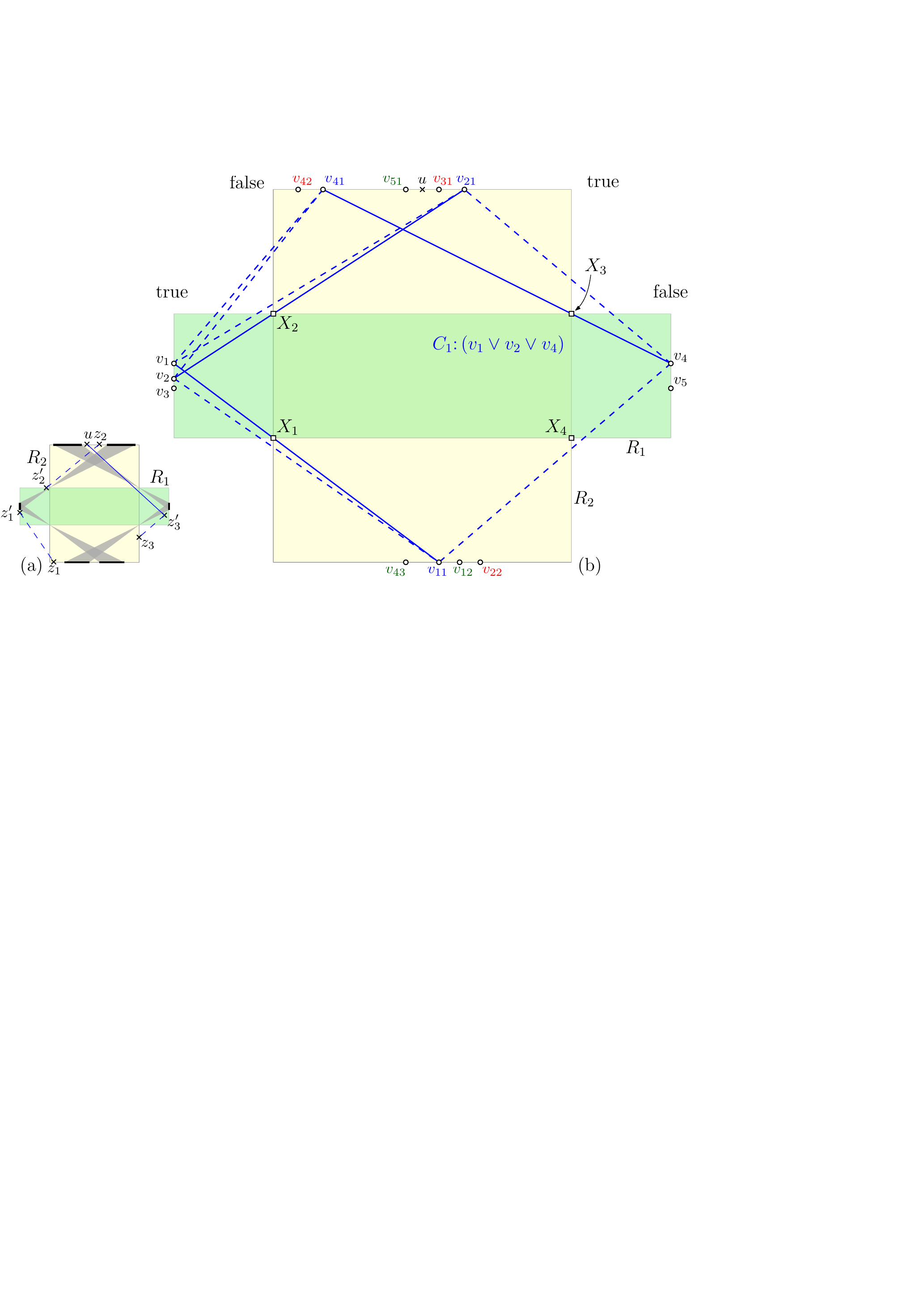}
    \caption{NP-hardness: maybe-edges and the two cliques
      are not drawn.}
    \label{fig:hardness-sascha}
  \end{figure}

  In each clause, we sort the variables in increasing order of index.
  We place the occurrence vertices on the horizontal segments
  of~$R_2$.  For a true variable~$v_i$ (such as~$v_2$ in
  Fig.~\ref{fig:hardness-sascha}(b) the first occurrence
  vertex~$v_{i1}$ has two potential locations; the bottom location is
  where the ray from~$v_i$ through~$X_1$ hits the bottom right
  segment, the top location is where the ray from~$v_i$ trough~$X_2$
  hits the top right segment.  We place~$v_{i1}$ to its bottom or top
  location depending on whether $v_{i1}$ is the first or second
  occurrence of a true variable in its clause, respectively.
  (Remember that within each clause, at most two variables are true
  and at most two are false.)  Occurrence vertices~$v_{i2}$ etc.\ go
  between the top or bottom locations of~$v_{i1}$ and $v_{i+1,1}$,
  again depending on whether they are the first or second occurrence
  of a true variable in their respective clauses.  (E.g., in
  Fig.~\ref{fig:hardness-sascha}(b)), $v_{21}$ goes to the top,
  whereas $v_{22}$ goes to the bottom.)

  The special vertex~$u$ is placed in the center of the top edge
  of~$R_2$; hence, it is not visible from any variable vertex; see
  Fig.~\ref{fig:hardness-sascha}(a).  The vertices of~$K_6^*$ can be
  placed such that~$u$ sees only~$z_3'$, but neither $z_1'$
  nor~$z_2'$; see Fig.~\ref{fig:hardness-sascha}(b)).

  By construction, all edges are inside~$R_1 \cup R_2$.  It remains to
  show that all non-edges (dashed in
  Fig.~\ref{fig:hardness-sascha}(b)) go through the complement of
  $R_1 \cup R_2$.  This is due to the order of the variable vertices
  and the occurrence vertices along the boundary of $R_1 \cup R_2$ and
  due to the order of the variables in each clause.  Suppose that a
  variable vertex~$v_i$ has a non-edge with occurrence
  vertex~$v_{k\ell}$.  This means that there is an occurrence~$v_{ij}$
  of~$v_i$ in the same clause as~$v_{k\ell}$.  If~$v_i$ and~$v_k$ have
  different truth values, then $v_i$ cannot see~$v_{k\ell}$; refer to
  Fig.~\ref{fig:hardness-sascha}(a).  So assume that both are true and
  that $i<k$.  But then $v_i$ lies above~$v_k$ on the left segment
  of~$R_1$, and $v_{ij}$ lies to the left of~$v_{k\ell}$ on the bottom
  right segment of~$R_2$.  Hence, $v_i$ cannot see~$v_{k\ell}$.

  It remains to show that an outside-obstacle representation
  of~$G_\varphi$ yields a feasible truth assignment for~$\varphi$.  By
  Lemmas~\ref{lem:co-bipartite_outside-obstacles}\ref{enum:co-bip-priv}
  and~\ref{enum:co-bip-K_6-match}, we know that the convex hulls of
  the two cliques are at least 1-crossing due to the presence of~$K_6^*$ 
  and at most 2-crossing due to $u$. To see that these hulls are 
  exactly 1-crossing, we suppose that $G_\varphi$ has a 2-crossing drawing
  for a contradiction. 
  Consider the subgraph $H$ induced by $u$ and the first
  clause $C_1 = \{v_{i},v_{j},v_{k}\}$, of $\varphi$ i.e., $H = G[\{u, v_{i},
  v_{j},v_{k}, v_{i1},v_{j1},v_{k1}\}]$. Let $Q_u$ be the petal containing
  $u$. Since the only neighbor of $u$ in the variable-vertex clique is $z'_3$,
  no other variable vertices belong to the petal opposite $Q_u$. 
  Thus, two of $\{v_{i},v_{j},v_{k}\}$, say $v_i$ and $v_j$,
  occur in one petal $Q'_1$ adjacent to $Q_u$, and $v_k$ occurs in the other 
  petal $Q'_2$ which is adjacent to $Q_u$. Notice that each of $v_{i1},v_{j1},
  v_{k1}$ cannot belong to the petal opposite $Q'_1$ since this
  would make it adjacent to both $v_{i}$ and $v_{j}$. Similarly, no 
  neither $v_{i1}$ nor $v_{j1}$ can occur in the petal opposite $Q'_2$
  since it would then be adjacent to~$v_k$. Thus, $v_{i1}$ and $v_{j1}$ belong
  to the same petal and this petal is adjacent to $Q'_1$. However, this contradicts
  Lemma~\ref{lem:co-bipartite_outside-obstacles}\ref{enum:co-bip-4cycle} since 
  $\{v_i,v_j,v_{i1},v_{j1}\}$ induces a 4-cycle. 
    
  Now, since the convex hulls are exactly 1-crossing, we have
  two groups (petals) of vertices in each of the two cliques.
  \WLOG the variable-vertex clique is divided into a left and a right
  group, and the occurrence-vertex clique is divided into a top and a
  bottom group.  We set those variables to true whose vertices lie on
  the left, the rest to false.

  Now suppose that the three variables $v_1$, $v_2$, and~$v_3$ of
  clause~$C_1$ lie in the same group, say, on the left.  Then two of
  their occurrence vertices (say~$v_{11}$ and~$v_{21}$) lie in the
  same group, say, in the top group.  Since $v_1v_{21}$ and
  $v_2v_{11}$ are non-edges, $v_1v_{11}v_{21}v_2$ is an induced
  4-cycle.  Now
  Lemma~\ref{lem:co-bipartite_outside-obstacles}\ref{enum:co-bip-4cycle},
  yields the desired contradiction.  Hence, no three variable
  vertices in a clause can be in the same (left or right)
  group. Therefore, our truth assignment is indeed feasible.  This
  completes the NP-hardness proof.
\end{proof}

To show hardness for the simple-polygon
visibility graph sandwich problem,  we must make sure that any vertex
of the obstacle is also a vertex of the graph.  It suffices to add 
$X_1,X_2,X_3,X_4$ 
as vertices to~$G_\varphi$ that lie in both cliques.

\begin{thm}\label{thm:visibility-graph}
  The simple-polygon visibility graph sandwich problem is NP-hard.  
  In other words, given two graphs~$G$ and~$H$ with the same vertex
  set and $G \subseteq H$, it is NP-hard to decide whether there is a
  graph~$K$ and a polygon~$\Pi$ such that $G \subseteq K \subseteq H$
  and $K=G_\Pi(V(\Pi))$.
  This holds even if $G$ and $H$ are co-bipartite.
\end{thm}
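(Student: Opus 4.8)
The plan is to reduce from the outside-obstacle graph sandwich problem established NP-hard in Theorem~\ref{hardness}, reusing the very same construction~$G_\varphi$ and only augmenting it so that the obstacle is forced to be the complement of a simple polygon whose vertices are among the graph's points. Concretely, I would add the four corner points $X_1,X_2,X_3,X_4$ of $R_1\cap R_2$ as new vertices of $G_\varphi$, declared to lie in \emph{both} the variable-vertex clique and the occurrence-vertex clique. Since both of these are cliques, each $X_i$ is adjacent to every other vertex of $G_\varphi$; thus in $H$ (edges plus maybe-edges) these four vertices are universal, and the sandwich constraint forces $K$ to also make them universal. In the realization coming from a feasible truth assignment, placing $X_1,\dots,X_4$ at the corners of $R_1\cap R_2$ keeps them mutually visible and visible to everything, so this side of the reduction is essentially unchanged from the proof of Theorem~\ref{hardness}.

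The key point is the converse: given any realization $K=G_\Pi(V(\Pi))$ with $G\subseteq K\subseteq H$, I must recover a feasible truth assignment. First, $K$ still contains the two cliques and the $K_6^*$ gadget and the vertex~$u$, so Lemma~\ref{lem:co-bipartite_outside-obstacles} applies exactly as before to force the two clique hulls to be 1-crossing and to extract the assignment; the presence of~$\Pi$ as obstacle is irrelevant to that argument, which only uses that non-edges must be accessible from the unbounded region and that clique hulls avoid the obstacle. The genuinely new thing to check is merely that a \emph{simple-polygon} visibility graph representation is at least as restrictive as an outside-obstacle one in the relevant sense: the complement of a simple polygon is a single outside obstacle (it is connected, topologically open, lies in the unbounded component of the drawing since it \emph{is} the unbounded component up to the polygon boundary), so any sandwich solution for the simple-polygon problem is in particular a sandwich solution for the outside-obstacle problem. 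Hence the extraction of a truth assignment goes through verbatim.

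For the forward direction I would need to turn the outside obstacle built in Theorem~\ref{hardness}'s proof into an actual simple polygon with vertex set contained in $V(G_\varphi)$. This is where the corner vertices $X_1,\dots,X_4$ do their work: the complement of $R_1\cup R_2$ is already the outside region, and its boundary is a polygon, but its reflex/convex vertices are exactly the eight corners of the cross $R_1\cup R_2$, of which the four ``inner'' ones are $X_1,\dots,X_4$. After bending the sides of the rectangles into very flat arcs (as in the original proof) we get general position; I would then take $\Pi$ to be a simple polygon tracing the outer boundary of (a slight shrinking of) $R_1\cup R_2$, whose polygon vertices are precisely $X_1,X_2,X_3,X_4$ together with four auxiliary vertices ``at infinity'' — or, to stay finite, place the four outer corners of the cross also at graph vertices, or route the polygon boundary far out so its remaining vertices coincide with already-present graph points that are universal in~$K$. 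The main obstacle I anticipate is exactly this bookkeeping: ensuring every vertex of the chosen polygon~$\Pi$ is a vertex of~$K$ while not perturbing any of the carefully arranged visibilities and non-visibilities among variable, occurrence, $K_6^*$ and $u$ vertices; once the four corners $X_i$ are available as graph vertices this is a matter of locally perturbing the polygon boundary in regions empty of the other points, and the hardness reduction is complete.
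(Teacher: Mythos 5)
Your proposal matches the paper's own (very brief) argument: the paper likewise reduces via the construction of Theorem~\ref{hardness} and simply adds $X_1,X_2,X_3,X_4$ as vertices belonging to both cliques so that the obstacle can be realized as the complement of a simple polygon on the graph's points. Your additional checks — that the complement of a simple polygon is in particular a single outside obstacle (so the truth-assignment extraction carries over) and the bookkeeping needed to route the polygon boundary through only graph vertices — are exactly the details the paper leaves implicit, so the approach is the same.
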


We can also use the NP-hardness of the outside-obstacle 
sandwich problem to show NP-hardness for both the single-obstacle 
sandwich problem and the inside-obstacle sandwich problem. 
The idea is simply to combine a given graph $G$ with a graph such as $B_{11}$ which
has outside-obstacle number greater than one, but inside-obstacle number one. 
The combined graph would then have inside-obstacle number one if and only if
the graph $G$ has outside-obstacle number one. 
The details of this are given in Appendix~\appref{appendixC}.

\bibliographystyle{splncs03}
\bibliography{abbrv,obstacles}

\newpage
\appendix

\noindent{\Large\bfseries Appendix}

\section{Missing Proofs of Section~\ref{sec:small_circ}}
\label{appendixA}

This appendix contains the omitted proof of lemmas for Theorem~\ref{thm graphs circumference up to 6} and the missing part of proof for Theorem~\ref{thm graphs circumference up to 6} and Theorem~\ref{thm graphs order up to 7}.

\begin{backInTime}{lem merge graphs}
\begin{lem}
  \LemMergeGraphsText
\end{lem}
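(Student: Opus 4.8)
The plan is to realize the disjoint union by placing a representation of~$G$ and a shrunken, far‑away representation of~$H$ side by side, and then fusing their two obstacles together with an extra ``wall'' that destroys every line of sight between the two point sets. Concretely, start from outside‑obstacle representations $(P_G,C_G)$ of~$G$ and $(P_H,C_H)$ of~$H$. Apply a similarity transformation (a uniform scaling followed by a translation) to the pair $(P_H,C_H)$, obtaining $(P_H',C_H')$ in which both the point set $P_H'$ and the bounded polygon $C_H'$ lie inside a tiny disk~$B$ whose centre is far to the right of everything in $P_G\cup C_G$; choosing the translation generically, we may moreover assume that $P := P_G\cup P_H'$ is in general position. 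Since a similarity maps segments bijectively to segments and is a homeomorphism of the plane, $(P_H',C_H')$ is again an outside‑obstacle representation of~$H$. In the straight‑line drawing~$D$ of $G\cup H$ on~$P$, every edge of~$G$ lies in $\CH(P_G)$ and every edge of~$H$ lies in~$B$, so we can fix a vertical line $x=x_0$ that meets no point of~$D$, with $\CH(P_G)$ strictly to its left and~$B$ strictly to its right.

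Next I would assemble the obstacle. Let $M$ exceed the absolute value of every coordinate of every point of~$P$, pick a small $\delta>0$, and let $W=(x_0-\delta,x_0+\delta)\times(-M,M)$ be a thin vertical wall. Since $x_0$ lies strictly between the $x$-coordinates of any $p\in P_G$ and any $q\in P_H'$, the segment $\overline{pq}$ crosses the strip of~$W$ at points whose $y$-coordinate is a convex combination of $p_y$ and $q_y$, hence in $(-M,M)$, so $\overline{pq}$ meets~$W$; at the same time $W$ is disjoint from $\CH(P_G)$ and from~$B$, so it meets no edge of~$G$ and no edge of~$H$. It remains to connect $C_G$, $C_H'$ and~$W$ into a single simple polygon. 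Since $C_G$ is an outside obstacle, there is a polygonal arc from a point of~$C_G$ out into the unbounded region, avoiding the drawing of~$G$; routing it around the tiny disk~$B$, we keep it clear of the drawing of~$H$ as well, hence clear of~$D$. Do the same for~$C_H'$, and note that $W$ already straddles the obstacle‑free vertical line $x=x_0$. Thicken these arcs to corridors of width~$\delta$ and join $C_G$, $W$, $C_H'$ in a path‑like fashion, keeping the corridors pairwise disjoint except where they attach. Let~$C$ be the interior of the resulting union. It is a tree‑like gluing of topological disks along boundary arcs, so for $\delta$ small enough $C$ is a single (topologically open, non‑self‑intersecting) polygon disjoint from~$D$; moreover $W\subseteq C$ can be joined by a path in $\mathbb{R}^2\setminus D$ to points arbitrarily far from the origin (along the line $x=x_0$) and $C$ is connected, so $C$ lies in the unbounded component of $\mathbb{R}^2\setminus D$. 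Thus $C$ is an outside obstacle.

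Finally I would check $G_C(P)=G\cup H$. If $pq$ is an edge of~$G$ or of~$H$, then $\overline{pq}\subseteq D$, hence $\overline{pq}\cap C=\emptyset$, so $pq$ is an edge of $G_C(P)$. If $pq$ is a non‑edge with $p,q\in P_G$, then $\overline{pq}$ meets $C_G\subseteq C$; symmetrically for non‑edges inside~$H$; and every pair with $p\in P_G$ and $q\in P_H'$ is a non‑edge of $G\cup H$ and has $\overline{pq}$ meeting $W\subseteq C$. As $G\cup H$ is disconnected it needs at least one obstacle, so $\Obsout(G\cup H)=1$. The only genuinely delicate step is the middle one: routing the connecting corridors so that $C$ is a \emph{single} simple polygon that still lies entirely in the unbounded region and still avoids every edge of~$G$ and of~$H$; everything else is routine and uses only that the line $x=x_0$ separates the two parts.
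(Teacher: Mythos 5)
Your proposal is correct and follows essentially the same strategy as the paper: place the two representations on opposite sides of a separating vertical line, use a wall on that line to block every cross-visibility, and fuse the two outside obstacles into one through the unbounded face. The paper achieves the fusion more tersely by normalizing the two drawings into adjacent unit rectangles and gluing the obstacles along the shared edge, whereas you make the connectivity of the merged obstacle explicit via corridors; the underlying idea is identical.
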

\end{backInTime}
\begin{proof}
  \begin{figure}[b]
    \centering
    \includegraphics{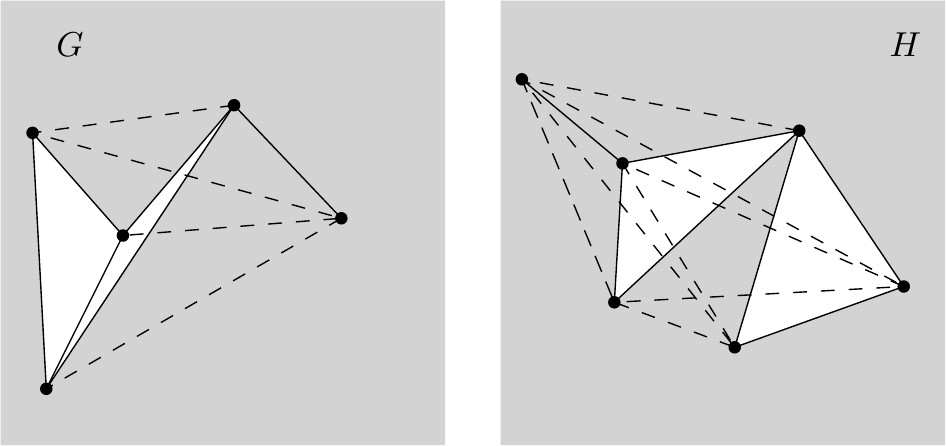}
    \caption{Grey (open) regions are outside obstacles of $G$ and $H$ respectively}
    \label{fig:lem1}
  \end{figure}
Fix outside-obstacle representations for $G$ and $H$.
We can assume $G$ lies inside $(0,1)\times(0,1)$ and $H$ lies inside $(1,2)\times(0,1)$ by scaling.
Let $C_G$ be an outside obstacle for $G$ and $C_H$ for $H$.
We can also assume that $\partial ([0,1]\times [0,1]) \subset \partial C_G$ and $\partial ([1,2]\times [0,1]) \subset \partial C_H$.
Take $C=C_G \cup C_H \cup (\{1\}\times [0,1])$.
We claim that $C$ is an obstacle for $G \cup H$.
Let $v$ be a vertex of $G$ and $u$ a vertex of $H$.
Since $\overline{vu}$ intersects with $\{1\}\times [0,1]$ and $\{1\}\times [0,1] \subset C$, we indeed have $uv$ as a non-edge.
See Fig.~\ref{fig:lem1} for an example illustration.
\end{proof}

\begin{backInTime}{lem identifying graphs}
\begin{lem}
  \LemIdentifyingGraphsText
\end{lem}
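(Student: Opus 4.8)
The plan is to shrink a copy of the given representation of $H$ to a tiny, very ``thin'' affine image, nest it into a small wedge carved out of the obstacle of $G$ at the point representing $u=v$, and arrange the obstacle so that the only way a vertex of the $H$-copy can see ``out'' into the rest of $G$ is through $u$ itself.

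Two easy facts provide the freedom needed. First, because $v$ lies on the convex hull of $V(H)$ and the points of an obstacle representation are in general position (no three collinear), $v$ is in fact a vertex of $\CH(V(H))$, so all of $V(H)$ lies in a closed cone $\bar\Sigma$ at apex $v$ of angle $<\pi$; moreover, every invertible affine map of the plane preserves visibility graphs — it sends segments to segments, the obstacle to a polygon, collinear triples to collinear triples, and the unbounded face of a drawing to the unbounded face — so we may apply an affine ``squeeze'' to $H$'s representation making it as small and as thin with respect to $v$ as we wish. We also note that we may first enlarge the obstacle $C_H$ of $H$ inside the unbounded face $F_H$ of $H$'s drawing (which destroys no edge, since the drawing is disjoint from $F_H$, and creates none) so that $C_H$ sends a thin ``tentacle'' to a point $p_0$ just outside $\bar\Sigma$ next to one of its bounding rays. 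Second, since $u$ is exposed in $G$'s representation, one of the angular sectors at $u$ delimited by the $G$-edges incident to $u$ lies, near $u$, in the unbounded face $F_G$ of $G$'s drawing; call this open sector $T$ and let $T_\rho = T\cap D(u,\rho)$, where $\rho$ is small enough that $D(u,\rho)$ contains no other vertex of $G$ and meets no $G$-edge not incident to $u$. Since $F_G$ is connected and contains $C_G$, we may enlarge $C_G$ inside $F_G$ to an obstacle $C_G'\supseteq T_\rho\setminus\{u\}$ still realizing $G$ (the edges at $u$ run along $\partial T$ or outside $T$, so they remain visible).

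Next, fix a tiny sub-wedge $T^H\subset T_\rho$ at apex $u$, strictly interior to $T$ in both angle and radius and small enough that $\overline{T^H}\subseteq D(u,\rho'')$ with $\rho''$ less than the (positive, by general position) distance from $u$ to every non-edge segment of $G$. Removing $\overline{T^H}$ from $C_G'$ leaves a connected set (a truncated sector minus a sub-wedge sharing its apex is still connected) that still realizes $G$ (no non-edge of $G$ reaches $D(u,\rho'')$, and the edges at $u$ are untouched). Into the empty wedge $T^H$ we drop an affine image $\phi$ of $H$'s representation with $\phi(v)=u$, chosen so small and thin that $\phi(V(H)\setminus\{v\})\subset T^H$, that $\phi(\bar\Sigma)$ is exactly the cone spanning $T^H$, that $\phi(C_H)\subseteq D(u,\rho'')$, and that $\phi(p_0)$ lands just outside $\overline{T^H}$ in one of the two ``side strips'' $T_\rho\setminus\overline{T^H}$, which are part of $C_G'\setminus\overline{T^H}$. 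The obstacle for $K$ is $C_K := (C_G'\setminus\overline{T^H})\cup\phi(C_H)$; it is a single connected open polygonal region (the $H$-part meets the $G$-part along $\phi(p_0)$) lying in the unbounded face of the drawing of $K$, and every vertex of $K$ lies on the boundary of that face — the added pieces all live in the outside region, and the inserted copy of $H$ is far too small to enclose any vertex of $G$ or to seal off any vertex from infinity.

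Finally one checks that the induced visibilities are exactly $E(K)$. Pairs inside $V(G)$ are handled by the enlargement/carving argument. For a pair $\{q,q'\}$ of vertices of $H$ (including $q'=v$, i.e.\ the identified vertex $u$), the segment $\overline{\phi(q)\phi(q')}$ lies in the convex set $\overline{T^H}$, where $C_K$ coincides with $\phi(C_H)\cap\overline{T^H}=\phi(C_H\cap\bar\Sigma)$; since $\overline{qq'}\subseteq\bar\Sigma$, this reproduces exactly the $H$-visibility of $q$ and $q'$. For $g\in V(G)\setminus\{u\}$ and a vertex $h$ of the $H$-copy, the segment $\overline{gh}$ starts in $T^H$ and must leave the truncated sector $T_\rho$ through a side strip or through its bounding arc, all of which lie in $C_K$ — $u$ being the unique point of $\partial T^H$ not covered by the obstacle — so $gh$ is a non-edge, exactly as $K$ requires. (Observation~\ref{obs:cliques_in_Out} causes no extra trouble, since $K$ has no edge between $V(G)\setminus\{u\}$ and $V(H)\setminus\{v\}$, so every clique of $K$ lies entirely in $V(G)$ or in $V(H)$ and the clique condition is inherited.) I expect the delicate point to be precisely this last bundle of checks together with connectivity of the obstacle: one must choose the wedge $T^H$ and the affine squeeze so that the $H$-copy is genuinely ``shielded'' with $u$ as its sole window while $C_K$ stays one connected simple polygon, destroys no existing visibility, and does not close off the outer boundary — the affine normalization of $H$ and the freedom to grow $C_G$ and $C_H$ within their unbounded faces being the tools that make this bookkeeping close.
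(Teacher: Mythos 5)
Your proof follows essentially the same route as the paper's: affinely squeeze the exposed representation of $H$ into a thin wedge at its convex-hull vertex $v$, carve a matching wedge out of the (maximal/enlarged) obstacle of $G$ at the exposed vertex $u$, insert the copy with $u=v$ as the sole ``window,'' and verify all visibilities, with your tentacle to $p_0$ playing the role of the paper's requirement that the shrunk $H$-obstacle's boundary contain the boundary of the arc. The only imprecision is the claim that every non-edge segment of $G$ lies at positive distance from $u$ --- this fails for non-edges incident to $u$ --- but it is repaired by instead choosing $\rho''$ smaller than the distance from $u$ to a fixed blocking point (in $C_G$) of each such non-edge.
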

\end{backInTime}
\begin{proof}
  \begin{figure}[b]
    \centering
    \includegraphics{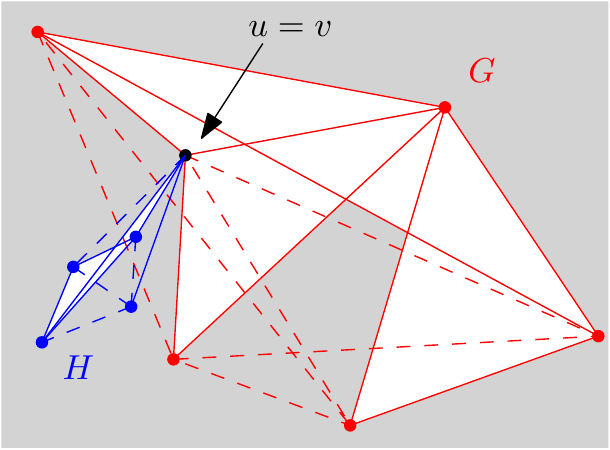}
    \caption{$G$ is drawn with red color; $H$ with blue. Grey (open) region is an outside obstacle of $K$.}
    \label{fig:lem2}
  \end{figure}
Fix an exposed outside-obstacle representation of $H$ such that $v$ is on the convex hull of $V(H)$.
Let $C=z_1z_2\dots z_mz_1$ be the one of boundaries of the outside obstacle such that the obstacle lies in the unbounded component of $\mathbb{R}^2 \setminus C$ and all vertices of $H$ lies in the bounded component.
Let $l,m$ be two rays starting in $v$ where all vertices of $H$ are between $l$ and $m$.
Since $v$ is on the convex hull of $V(H)$, the angle between $l$ and $m$ is less than $\pi$.
\WLOG $v$ is placed at the origin and the ray from $v$ to $(1,0)$ is between $l$ and $m$.

We first show that squashing and shrinking with respect to $v$ preserves the structure of outside-obstacle representation.
More precisely, for $s, t > 0$, let $T_{s,t}$ be a transformation mapping a point $(x, y)$ to $(sx, ty)$.
We show that the point set with the obstacle obtained by transforming each vertex of $H$ and its outside obstacle by $T_{s,t}$ is still an exposed outside-obstacle representation of $H$ where $v$ is on the convex hull of $V(H)$.
Let $C'=z_1'\dots z_m'z_1' =T_{s,t}(z_1)\dots T_{s,t}(z_m)T_{s,t}(z_1)$ and $a'=T_{s,t}(a)$ for $a\in V(H)$ and for simplicity.
Let $a, b$ be vertices of $H$.
If $a\sim b$, suppose $\overline{a'b'}$ intersects with the obstacle for contradiction.
It is clear that if $\overline{a'b'}$ intersects with $\overline{z_i'z_{i+1}'}$ then $\overline{ab}$ intersects with $\overline{z_iz_{i+1}}$, contradicting the fact that $ab$ should not intersect $C$.
Similarly, we can also show that if $a\not\sim b$ then $\overline{a'b'}$ intersects with $C'$, that every transformed vertex is exposed to the outside, and that $v'$ is on the convex hull of the transformed point set.

Consequently, we can assume that $H$ has an exposed outside-obstacle representation
where all vertices are contained in an open circular sector centered at
$v$ whose radius and angle are arbitrarily small such
that the boundary of the outside obstacle includes the boundary of the arc.
Fix an exposed outside-obstacle representation of $G$ such that the outside
obstacle is maximal (i.e., the outside obstacle is the unbounded
component of the complement of the visibility drawing). 
Since $u$ is exposed to the outside, we can find an arc sector $A$ of radius $r$ and angle $\theta$ centered at $u$, completely lying inside the outside obstacle except $u$.
We replace $A$ with above obstacle representation of $H$ while identifying $u$ and $v$.
We claim that it is an outside-obstacle representation of $K$.
Since all edges/non-edges of $H$ lie inside $A$, they are properly represented with the new obstacle.
Since new obstacle is a subset of obstacle of $G$, all edges of $G$ don't intersect with the obstacle.
For non-edges of $G$, if they didn't intersect $A$, they would still intersect the obstacle.
Otherwise, since $\partial A$ is contained in the obstacle, they would also intersect the obstacle.
For a vertex of $G$ except $u$ and a vertex of $H$ except $v$, they are always non-adjacent and it is properly represented since the line segment connecting them intersects with $\partial A$.
Lastly, every vertex of $H$ is exposed to the outside by previous paragraph and every vertex of $G$ except $u$ is also exposed to the outside because only the subset of $A$ is altered in the obstacle.
See Fig.~\ref{fig:lem2} for an example illustration.
\end{proof}

\begin{backInTime}{lem nonadjacent twin}
\begin{lem}
  \LemNonadjacentTwinText
\end{lem}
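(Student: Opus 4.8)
The plan is to start from the given exposed outside-obstacle representation of $G=H\setminus A$ and insert the vertices of $A$ into an arbitrarily small region clustered around~$v$, placed on a tiny convex arc so that all of $A\cup\{v\}$ become hull vertices of $V(H)$, while perturbing the obstacle only inside that region. Since the point set of~$G$ is in general position, $v$ is a proper vertex of $\CH(V(G))$, so after an affine map we may assume that $v$ is the origin and every other vertex of~$G$ lies in the open half-plane $\{y>0\}$; then the whole drawing of~$G$ lies in $\{y\ge0\}$ and meets $\{y=0\}$ only at~$v$, so every segment spanned by two vertices of~$G$ meets $\{y=0\}$ at most at~$v$, and no obstacle is ever needed below the $x$-axis. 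Using this and the hypothesis that~$v$ is exposed, I would first argue that we may choose~$C$ and a radius $2\delta>0$ so that $C$ is disjoint from the disk $D=B(v,2\delta)$ and every non-edge $\overline{vw}$ of~$G$ still crosses~$C$ at a point of $y$-coordinate at least some fixed $h_0>2\delta$ (shrink~$C$ away from~$v$, and re-route a thin tentacle of~$C$ to re-block any non-edge that was blocked only near~$v$). Shrinking~$\delta$ we may further assume that the only segments spanned by pairs of~$V(G)$ that meet~$D$ are the edges incident to~$v$, that these edges leave~$D$ into $\{y>0\}$ and stay at positive distance from~$C$ outside~$D$, and that every vertex of~$G$ has a ray to infinity avoiding both the drawing of~$G$ and~$D$.

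Two combinatorial facts about~$A$ drive the construction. First, every $a\in A$ has exactly the same neighbours in $V(H)\setminus(A\cup\{v\})=V(G)\setminus\{v\}$ as~$v$, directly from the twin condition. Second, $A\cup\{v\}$ is either a clique or an independent set of~$H$: $v$ cannot have both a true twin~$a$ (so $a\sim v$ and $N_H[a]=N_H[v]$) and a false twin~$b$ (so $b\not\sim v$ and $N_H(b)=N_H(v)$), because then $a\in N_H(v)=N_H(b)$ forces $a\sim b$, hence $b\in N_H(a)=N_H[v]\setminus\{a\}$, i.e.\ $b\sim v$, a contradiction. Now write $A=\{a_1,\dots,a_k\}$ and place $v=a_0,a_1,\dots,a_k$, in this order, on a very small and very flat convex arc through the origin that has~$v$ as one of its endpoints, dips just below $\{y=0\}$, and lies inside~$D$; for a small enough arc all of $A\cup\{v\}$ become vertices of $\CH(V(H))$. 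If $A\cup\{v\}$ is an independent set, I then add to~$C$ a thin connected piece~$C_v$ lying inside $\CH(A\cup\{v\})$ together with a narrow fringe just below it, designed to cross every one of the $\binom{k+1}{2}$ segments spanned by $A\cup\{v\}$ while staying at distance at least $\delta'$ (with $\delta'\ll\delta$) from each of the $k+1$ points; this is possible because the arc being flat makes these segments nearly parallel with connected union, and every edge incident to a point of $A\cup\{v\}$ leaves it into $\{y>0\}$, so no edge comes near~$C_v$. Connecting~$C_v$ to the rest of~$C$ by a thin corridor in $\{y<0\}$, call the result~$C_H$; in the clique case just take $C_H=C$.

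It then remains to verify that $(P_H,C_H)$ is an exposed outside-obstacle representation of~$H$ with $A\cup\{v\}$ on $\CH(V(H))$. Segments spanned by $A\cup\{v\}$ are all free in the clique case (as $C_H$ misses $D\supseteq\CH(A\cup\{v\})$) and all blocked in the independent case (by~$C_v$); segments spanned by two vertices of~$V(G)$ keep their status because $C_H$ equals~$C$ outside~$D$ while the portion of such a segment inside~$D$ lies on an edge incident to~$v$ and hence stays free. The key case is a pair $a_i,w$ with $w\in V(G)\setminus\{v\}$: since $a_i\in D$ and $w\notin D$, the segment $\overline{a_iw}$ leaves~$D$ almost at once and, outside~$D$, lies within Hausdorff distance $O(\delta)$ of $\overline{vw}$; if $vw$ is an edge then $\overline{vw}\setminus D$ is at positive distance from~$C_H$, so $\overline{a_iw}$ misses~$C_H$ once~$\delta$ is small enough, and if $vw$ is a non-edge then $\overline{vw}$ passes through an open ball contained in~$C_H$ at height above $h_0>2\delta$, and so does every sufficiently close segment. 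Hence $a_i\sim_H w$ iff $v\sim_G w$, matching the neighbourhood structure. Finally, each $w\in V(G)\setminus\{v\}$ stays exposed because its escape ray was chosen to miss~$D$ and the only new features outside~$D$ are the segments $\overline{a_iw'}$ hugging old edges of~$G$, while $v$ and the~$a_i$ are exposed because they lie on $\CH(V(H))$. Choosing~$\delta$ small enough uniformly over the finitely many choices of $w$ and~$a_i$ makes all of the above hold simultaneously.

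The main difficulty I anticipate is exactly the bookkeeping of the first step: arguing that the obstacle may be taken to avoid a disk around~$v$ while still blocking every incident non-edge robustly at height bounded away from~$0$, and then keeping every subsequent perturbation estimate uniform. Once that is set up, the rest is a routine ``place the twins close enough'' argument, as in the one-line sketch.
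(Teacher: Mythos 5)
Your proof is correct and follows essentially the same approach as the paper's: place the twins on a tiny convex arc bulging just outside the hull at $v$'s position, so that they inherit $v$'s visibilities to the rest of the graph and all land on $\CH(V(H))$. You are considerably more explicit than the paper's terse argument---in particular you prove the clique-vs-independent-set dichotomy for $A\cup\{v\}$ and construct the extra blocking piece $C_v$ for the non-adjacent case, which the paper leaves implicit by letting the (maximal) old obstacle absorb the chords of the outward-bent arc---but the underlying idea is the same.
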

\end{backInTime}
\begin{proof}

  \begin{figure}[b]
	\centering
    \includegraphics{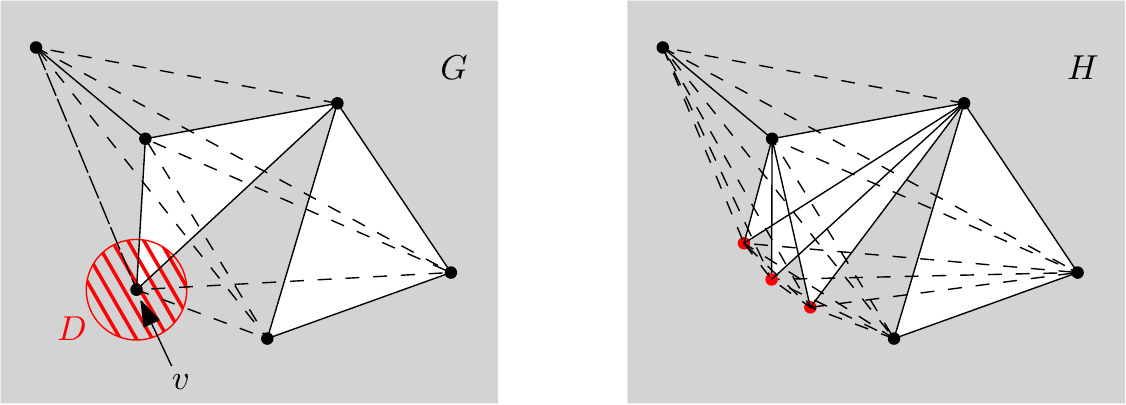}
    \caption{Grey (open) regions are outside obstacles of $G$ and $H$ respectively; $A\cup\{v\}$ are denoted by red color in $H$.}
    \label{fig:lem3}
  \end{figure}
  
Fix an exposed outside-obstacle representation of $G$ such that $v$ is on the convex hull of $V(G)$. We choose $\epsilon$ small enough so that a disk $D$ of radius $\epsilon$ centered at $v$ doesn't contain any other vertices.
We also want $\epsilon$ to be small enough so that the outside-obstacle representation where a point for $v$ is replaced by any point in $D$ is still a valid outside-obstacle representation for $G$.
This guarantees that adding $A$ inside $D$ results in a valid outside-obstacle representation for $H$.

More precisely, let $p, q$ be two intersection points between the convex hull of $V(G)$ and $D$.
We make a slightly bended outwards (for general position assumption) segment $C$ connecting $p$ and $q$.
We then place $v$ and vertices of $A$ on $C$, say evenly.
Since the only part of $A$ is altered from the obstacle, all vertices of $H$ except $A\cup\{v\}$ are exposed to the outside. Since replacing the polygonal curve $pvq$ with $C$ from the convex hull of $V(G)$ still yields a convex region, all vertices of $A\cup\{v\}$ are on the convex hull of $V(H)$ and exposed to the outside.
See Fig.~\ref{fig:lem3} for an example illustration.
\end{proof}

\begin{backInTime}{thm graphs circumference up to 6}
\begin{thm}
  \ThmCircumferenceUpToSixText
\end{thm}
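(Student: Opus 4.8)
The plan is to follow exactly the skeleton already used for the smaller cases and to put all the real effort into the postponed case $\Circ(H)=6$. For a disconnected input we combine representations of the components by Lemma~\ref{lem merge graphs}; for a connected input we process its block-decomposition tree from the root down, attaching each block at its cut vertex by Lemma~\ref{lem identifying graphs}. So it suffices to show: if $H$ is biconnected with $\Circ(H)\le 6$, then for every vertex~$v$ of~$H$ there is an exposed outside-obstacle representation of~$H$ with~$v$ on the convex hull of~$V(H)$. For $\Circ(H)\le 5$ this is the argument already given --- a biconnected graph of circumference~$3$ is a triangle, and for circumference~$4$ and~$5$ one fixes a longest cycle~$C$, uses Observation~\ref{obs cycle and path} to show that every vertex outside~$C$ is a non-adjacent twin of a cycle vertex, realizes the few cycle vertices as in Fig.~\ref{fig:4+5cycle} with~$v$ (or the representative of its twin class) on the hull, and reinstates the twin classes by Lemma~\ref{lem nonadjacent twin}.

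For $\Circ(H)=6$ I would fix a $6$-cycle $C=v_1v_2\cdots v_6$ of~$H$ and first determine, using Observation~\ref{obs cycle and path} repeatedly, all the ways the rest of~$H$ can attach to~$C$ without creating a $7$-cycle: which chords of~$C$ may be edges, which sets of cycle vertices a vertex outside~$C$ may be adjacent to (necessarily an independent set of~$C$, hence a subset of one of the two alternating triples), and which pairs of outside vertices may be adjacent (in particular two outside vertices sharing a cycle neighbour cannot be). The goal of this bookkeeping is to reduce to a short list of biconnected ``core'' graphs on the six cycle vertices --- such as $C_6$ itself, $K_{3,3}$, the octahedron $K_{2,2,2}$, $K_6$, and a handful of others --- together with the conclusion that every further vertex of~$H$ is a twin (adjacent or non-adjacent) of a core vertex. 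For each core I would give an explicit outside-obstacle drawing that realizes the cycle edges and blocks every non-edge from the unbounded face; by the dihedral symmetry of the labelling of~$C$ it is enough to arrange this with $v=v_1$ on the convex hull. Finally I would put the twin classes back next to their representatives, keeping them on the hull, using Lemma~\ref{lem nonadjacent twin} for non-adjacent twins and the mirror-image argument for adjacent twins.

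The main obstacle is this middle step. Unlike circumference~$4$ and~$5$, where every extra vertex is a non-adjacent twin and a single picture does the job, a $6$-cycle supports many genuinely different chord patterns, so there really are several non-twin cores, each needing its own placement, and the enumeration must be both complete and non-redundant. Moreover, whenever some chord of~$C$ must be a non-edge one cannot simply put the six cycle vertices in convex position in the cyclic order $v_1,\dots,v_6$ (that would make the hexagon a Jordan curve and bury every chord in a bounded face), so a less obvious drawing --- with crossing cycle edges, or not in convex position --- is needed, and for each one must verify simultaneously that all cycle edges stay unobstructed, all non-edges remain reachable from the unbounded face, and~$v_1$ stays on the convex hull so that Lemma~\ref{lem identifying graphs} still applies one level up in the block tree. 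The disconnected and block-decomposition reductions and the twin blow-ups, by contrast, are routine given the lemmas already available.
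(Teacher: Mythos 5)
Your outer skeleton (merge components, walk the block tree, reduce each biconnected block to the statement ``for every vertex $v$ there is an exposed representation with $v$ on the hull,'' handle circumference $3$--$5$ via twins) matches the paper exactly, and you correctly identify that the content lies in the $\Circ(H)=6$ case and that the six cycle vertices cannot simply be placed in convex position in cyclic order. The problem is your middle step: the claim that the bookkeeping reduces $H$ to ``a short list of cores \emph{on the six cycle vertices}, with every further vertex a twin of a core vertex'' is false. Take $C=v_1\cdots v_6$ and two adjacent vertices $x,y\notin C$; Observation~\ref{obs cycle and path} forces (up to relabelling) $x\sim v_1$, $y\sim v_4$ and nothing else, so $v_1xyv_4$ is a path of length~$3$ between antipodal cycle vertices. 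Here $N(x)=\{v_1,y\}$, which does not coincide with the punctured neighbourhood of any cycle vertex nor of an internal vertex of another such path, so $x$ and $y$ are twins of nothing. Worse, a biconnected graph of circumference~$6$ may consist of $v_1$, $v_4$ and \emph{arbitrarily many} internally disjoint $v_1$--$v_4$ paths of lengths $2$ and~$3$ (any two length-$3$ paths close up into a $6$-cycle and no longer cycle arises); this is an unbounded family with no bounded core, so no finite list of pictures plus twin blow-ups can cover it. The paper devotes a separate subcase to exactly this situation and resolves it with a parametrized construction: after checking that all maximal $v_1$--$v_4$ paths become internally disjoint once twins are removed, it places the $m$ paths on a half-circle over the segment $v_1v_4$, with the interior vertices of the $i$-th path at prescribed angles, and verifies visibility for every $m$. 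Your proposal needs this (or an equivalent unbounded construction) and currently has no mechanism for it.

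A secondary, smaller issue: even in the subcases that \emph{do} reduce to a bounded core, the core is not always supported on the six cycle vertices alone. For instance, when some $x\notin C$ has three neighbours on $C$ (necessarily $v_1,v_3,v_5$), the twin-reduced graph retains $x$ as a genuine seventh (and possibly an eighth) vertex, and the paper draws these larger cores explicitly. So your enumeration would have to range over cores with up to about eight vertices, not six, and the completeness of that enumeration is precisely the delicate part you flag but do not carry out. Finally, ``the mirror-image argument for adjacent twins'' is not an available tool here; the paper's twin lemma already covers both adjacent and non-adjacent twins under its definition $N(v)\setminus\{u\}=N(u)\setminus\{v\}$, and you should invoke it as such rather than appeal to an unstated symmetry argument.
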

\end{backInTime}
\begin{proof}
 
  \begin{figure}
    \newcommand{\cSixRadius}{1.66}
    \newcounter{x}
    \newcommand{\cSixHexagon}[1]{
      \setcounter{x}{0}
      \foreach \i in {#1} {
        \coordinate (\i) at ({\thex*60+30}:\cSixRadius);
        \node[anchor={\thex*60+30+180}] at (\i) {$v_\i$};
        \stepcounter{x}
      }
    }
    \newcommand{\cSixPentagon}[1]{
      \setcounter{x}{0}
      \foreach \i in {#1} {
        \ifthenelse{\thex=4}{
          \coordinate (\i) at ({\thex*60+30}:0.3*\cSixRadius);
        }{
          \coordinate (\i) at ({\thex*60+30}:\cSixRadius);
        }
        \node[anchor={\thex*60+30+180}] at (\i) {$v_\i$};
        \stepcounter{x}
      }
    }
    \newcommand{\cSixDrawPoints}{
      \foreach \i in {1,...,6} {
        \fill (\i) {} circle(2pt);
      }
    }
    \tikzset{
      edge/.style={draw,black,line width=1pt},
      nonedge/.style={draw,black,dashed,line width=1pt},
      ruleedge/.style={draw,green!80!black,dashed,line width=1pt},
      optionaledge/.style={draw,blue,dashed,line width=1pt}
    }

    \begin{subfigure}{0.32\textwidth}
      \begin{tikzpicture}
        \cSixHexagon{4,1,6,3,2,5}
        \draw[edge] (1) -- (2) -- (3) -- (4) -- (5) -- (6) -- (1);
        \draw[edge] (1) -- (3)
          (1) -- (5)
          (3) -- (5)
          (2) -- (4)
          (2) -- (6)
          (4) -- (6);
        \draw[optionaledge] (1) -- (4)
          (2) -- (5)
          (3) -- (6);
        \cSixDrawPoints
      \end{tikzpicture}
      \caption{}
      \label{fig:c6:b1}
    \end{subfigure}
    \hfill
    \begin{subfigure}{0.32\textwidth}
      \begin{tikzpicture}
        \cSixHexagon{5,3,1,6,2,4}
        \draw[edge] (1) -- (2) -- (3) -- (4) -- (5) -- (6) -- (1);
        \draw[edge] (3) -- (6)
          (1) -- (4);
        \draw[nonedge] (2) -- (4)
          (1) -- (3);
        \draw[optionaledge] (3) -- (5)
          (1) -- (5)
          (2) -- (5)
          (2) -- (6)
          (4) -- (6);
        \cSixDrawPoints
      \end{tikzpicture}
      \caption{}
      \label{fig:c6:a2}
    \end{subfigure}
    \hfill
    \begin{subfigure}{0.32\textwidth}
      \begin{tikzpicture}
        \cSixHexagon{1,3,5,2,6,4}
        \draw[edge] (1) -- (2) -- (3) -- (4) -- (5) -- (6) -- (1);
        \draw[edge] (3) -- (6)
          (2) -- (4);
        \draw[nonedge] (1) -- (3);
        \draw[optionaledge] (3) -- (5)
          (1) -- (4)
          (1) -- (5)
          (2) -- (5)
          (2) -- (6)
          (4) -- (6);
        \cSixDrawPoints
      \end{tikzpicture}
      \caption{}
      \label{fig:c6:c1}
    \end{subfigure}

    \begin{subfigure}{0.32\textwidth}
      \begin{tikzpicture}
        \cSixPentagon{2,5,3,1,6,4}
        \draw[edge] (1) -- (2) -- (3) -- (4) -- (5) -- (6) -- (1);
        \draw[nonedge] 
          (1) -- (4)
          (1) -- (3);
        \draw[ruleedge] (2) -- (4)
          (2) -- (6)
          (4) -- (6);
        \draw[optionaledge] (3) -- (5)
          (3) -- (6)
          (1) -- (5)
          (2) -- (5);
        \cSixDrawPoints
      \end{tikzpicture}
      \caption{valid if $26\lor\neg24\lor\neg46$}
      \label{fig:c6:d1}
    \end{subfigure}
    \hfill
    \begin{subfigure}{0.32\textwidth}
      \begin{tikzpicture}
        \cSixPentagon{1,5,2,6,4,3}
        \draw[edge] (1) -- (2) -- (3) -- (4) -- (5) -- (6) -- (1);
        \draw[edge] (2) -- (4)
          (2) -- (6);
        \draw[nonedge] (3) -- (6)
          (1) -- (4)
          (1) -- (3);
        \draw[optionaledge] (3) -- (5)
          (1) -- (5)
          (2) -- (5)
          (4) -- (6);
        \cSixDrawPoints
      \end{tikzpicture}
      \caption{}
      \label{fig:c6:e6}
    \end{subfigure}
    \hfill
    \begin{subfigure}{0.32\textwidth}
      \begin{tikzpicture}
        \cSixPentagon{2,6,3,1,5,4}
        \draw[edge] (1) -- (2) -- (3) -- (4) -- (5) -- (6) -- (1);
        \draw[nonedge] (1) -- (4)
          (1) -- (3)
          (2) -- (4);
        \draw[optionaledge](3)--(6)
          (3) -- (5)
          (1) -- (5)
          (2) -- (5)
          (4) -- (6)
          (2) -- (6);
        \cSixDrawPoints
      \end{tikzpicture}
      \caption{}
      \label{fig:c6:e1}
    \end{subfigure}
    \caption{Outside-obstacle representations of 6-cycle case}
    \label{fig:6-cycle-case}
  \end{figure}
  
Recall that $H$ is a biconnected component of~$G$.
Cases~1, 2, and~3 with $\Circ(H) < 6$ are described in the main body
of the paper.  It remains to deal with case~4, that is, with the case
$\Circ(H)=6$.
In Figs.~\ref{fig:6-cycle-case} to~\ref{fig 6cycle case4 complete},
we use the following convention: black solid/dashed edges mean
determined edges/non-edges;
blue dashed edges can be chosen as edges or non-edges freely.
Green dashed edges can also be chosen as edges or non-edges but it
will be a valid outside-obstacle representation under some
conditions. 
  
    Let $C=v_1v_2v_3v_4v_5v_6 \subset H$ be a 6-cycle.
    If $H$ contains exactly 6 vertices, we can represent it by an
    outside-obstacle representation.
    We use a case distinction to prove this.
    Now to describe the following case distinction, we denote $v_1$ by $1$, and so on.
    Also, $ij$ means $v_i \sim v_j$ and $\neg ij$ means $v_i \not\sim v_j$.
    Note that $ij$ is equivalent to $ji$ and $\neg ij$ to $\neg ji$.
  
    If $13, 24, 35, 46, 51, 62$, we can use the drawing given in
    Fig.~\ref{fig:c6:b1}. 
    Otherwise, \wLOG we assume $\neg13$.
    We distinguish three cases now:
    
    (a) $14$ and $36$.
    If $24$, we use the drawing in Fig.~\ref{fig:c6:c1}.
    If $\neg24$, we use the drawing in Fig.~\ref{fig:c6:a2}.
    
    (b) $\neg14$ and $\neg36$.
    If $24$ and $26$, we use the drawing in Fig.~\ref{fig:c6:d1} or in
    Fig.~\ref{fig:c6:e6} (depending on the vertex which should be on the convex hull of $V(G)$).
    Otherwise, \wLOG $\neg24$ (the case $\neg26$ is symmetric).
    Then we use the drawing in Fig.~\ref{fig:c6:d1} or in
    Fig.~\ref{fig:c6:e1}.
    
    (c) \WLOG we consider only the case $\neg14$ and $36$.
    The other configuration ($14$ and $\neg36$) is symmetric.
    If $24$, we use the drawing in Fig.~\ref{fig:c6:c1}.
    If $\neg24$, we use the drawing in Fig.~\ref{fig:c6:d1} or in
    Fig.~\ref{fig:c6:e1}.\\

    Now suppose that $H$ contains more than six vertices.  We call
    $v_4$ ($v_5$, $v_6$, respectively) an \emph{antipodal} of $v_1$
    ($v_2$, $v_3$, respectively) and vice versa.
  
  We distinguish five subcases of case~(4):
  \begin{enumerate}[(i)]
  \item \emph{There are two vertices in $H \setminus C$ that are
    adjacent to each other.}

    If this is not the case, there are only vertices $x \in H
    \setminus C$ with $N(x) \subset C$.  Then we distinguish the
    following cases.
  \item \emph{There is a vertex in $H \setminus C$ that has only one
      neighbor $C$.}

    For the remaining cases we can assume that every vertex in $H
    \setminus C$ has at least two neighbors.
  \item \emph{There is a vertex in $H \setminus C$ that has at least
      three neighbors in~$C$.}

    Therefore, for the remaining cases, we assume that every $x \in H
    \setminus C$ has exactly two neighbors in~$C$.  These neighbors
    cannot be adjacent on the cycle, as this would imply a longer
    cycle.  So there are only two cases left.
  \item \emph{There is a vertex in $H \setminus C$ whose neighbors are
      antipodals.}
  \item \emph{All vertices in $H \setminus C$ have two non-antipodal
      neighbors on~$C$.}
  \end{enumerate}
  We first show that every vertex $x \in H \setminus C$ is adjacent to
  at least one vertex in $C$.  For contradiction, suppose not.  For
  two vertices $v_i$ and $v_j$ in~$C$, let $C_{ij}$ and $C_{ji}$ be
  parts of~$C$ such that $C=v_iC_{ij}v_jC_{ji}$.  Since $x$ doesn't
  have any neighbors in $C$, biconnectivity of $H$ implies that there
  are vertices $v_i, v_j \in C$ and a $v_i$--$v_j$ path $P$ of length
  at least 4 containing $x$, internally disjoint with $v_iC_{ij}v_j$
  and $v_jC_{ji}v_i$.  Concatenating~$P$ and the longest path among
  $v_iC_{ij}v_j$ and $v_jC_{ji}v_i$, yields a cycle longer than~6.
  
  We make the following observation before starting the case analysis.
    \begin{obs}\label{obs circumference constraint}
  Let $G$ be a graph with $\Circ(G) = k$ and $C$ be a $k$-cycle in $G$.
  If adding an edge $ab$ would create a cycle with a length of more than $k$,
  then we cannot add vertices to $G$ such that an $a-b$ path is formed while
  maintaining the circumference.
  In particular, we cannot add a vertex $x$ to $G$ that is adjacent to both
  $a$ and $b$.
\end{obs}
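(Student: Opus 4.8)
The plan is to reduce the statement to the elementary ``long path plus short ear yields a long cycle'' principle that already underlies Observation~\ref{obs cycle and path}. First I would extract a long $a$--$b$ path that is already present in~$G$: since $\Circ(G)=k$, any cycle of length more than~$k$ created by adding the edge~$ab$ must actually use~$ab$ (otherwise it would be a cycle of~$G$, contradicting $\Circ(G)=k$), so deleting~$ab$ from it leaves an $a$--$b$ path~$P$ in~$G$ of length at least~$k$.

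Next I would argue by contradiction. Suppose we enlarge~$G$ to a graph~$G'$ by adding new vertices (and edges incident to them) so that~$G'$ contains an $a$--$b$ path~$Q$ whose internal vertices are exactly these new vertices, and suppose the circumference is maintained, i.e., $\Circ(G')\le k$ (note that $G\subseteq G'$ forces $\Circ(G')\ge\Circ(G)=k$, so in fact $\Circ(G')=k$). Then~$Q$ has length at least~$2$, since it has at least one internal vertex and that vertex, being new, is neither~$a$ nor~$b$. Moreover, every internal vertex of~$P$ lies in~$V(G)$ while every internal vertex of~$Q$ is new, so~$P$ and~$Q$ meet only at their common endpoints~$a$ and~$b$. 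Hence $P\cup Q$ is a simple cycle in~$G'$ of length at least~$k+2$, contradicting $\Circ(G')=k$. The ``in particular'' statement is the special case $Q=axb$, where~$x$ is the single added vertex adjacent to both~$a$ and~$b$.

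The only point that needs care is guaranteeing that~$P$ and~$Q$ are internally disjoint, so that $P\cup Q$ is genuinely a cycle; this is exactly why ``adding vertices so that an $a$--$b$ path is formed'' is to be read as requiring the new path to pass only through new vertices internally (an $a$--$b$ ear through fresh territory), since reusing old vertices of~$P$ as shortcuts could otherwise avoid creating a long cycle. I do not anticipate any real obstacle beyond phrasing this cleanly; the argument is a direct analogue of the cycle-length bookkeeping already used for Observation~\ref{obs cycle and path} and in the preceding case analysis.
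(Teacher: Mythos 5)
Your argument is correct and is exactly the intended (and essentially only) justification: the paper states this as an unproven observation, and your bookkeeping --- extracting an $a$--$b$ path of length at least $k$ from the hypothesized long cycle through $ab$, then closing it up with the internally disjoint new $a$--$b$ path to get a cycle of length at least $k+2$ --- fills it in faithfully. Your remark that the new path must be internally disjoint from $G$ is the right reading, consistent with how the observation is applied in the paper (always to paths through freshly added vertices).
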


\noindent\textbf{Case~4(i):} There exist vertices $x, y \in H \setminus C$ such that $x \sim y$.

  \WLOG $x\sim v_1$.
  Let $H'$ be a graph obtained by removing twins from $H$.
  We show that all maximal $v_1$--$v_4$ paths of $H'$ are internally disjoint.

  Observation~\ref{obs cycle and path} implies $x\not\sim v_2, v_6$.
  The same observation shows $y\not\sim v_2, v_3, v_5, v_6$.
  As $H$ is biconnected, there is a path from $y$ to a vertex of $C$ other than $v_1$.
  This path cannot be longer than 1 because otherwise there would be a
  longer cycle.
  This shows $y \sim v_4$.
  Observation~\ref{obs cycle and path} now also implies $x\not\sim v_3, v_5$.

  More generally, we claim that every $v_1$--$v_4$ path has length at most 3.
  For a contradiction, suppose the $v_1$--$v_4$ path $P$ is longer than 3.
  If $P$ is internally disjoint from $v_1v_2v_3v_4$, then $v_1Pv_4v_3v_2v_1$
  forms a cycle longer than 6, so $P$ must contain $v_2$ or $v_3$.
  Similarly, $P$ must contain $v_5$ or $v_6$, hence there is a path between
  $v_2$/$v_3$ and $v_5$/$v_6$ which avoids $v_1$ and $v_4$.
  This fact contradicts Observation~\ref{obs circumference constraint} since
  $v_2,v_3\not\sim v_5,v_6$.

  Let $v_1abv_4$, $v_1cdv_4$, $v_1efv_4$ be internally disjoint $v_1$--$v_4$ paths.
  There are at least three of these paths (using the vertices $v_2$, $v_3$
  or $v_6$, $v_5$ or $x$, $y$).
  An edge ac would create a 7-cycle $v_4fev_1acdv_4$ and an edge ad would
  create an 8-cycle $v_4badcv_1efv_4$,
  so $a~\not\sim c,d$. Same holds for $b$.
  In particular, this implies $v_2 \not\sim v_5, v_6$ and $v_3 \not\sim v_5,
  v_6$. Similarly, we can show that all vertices from different internally
  disjoint $v_1$--$v_4$ paths (including ones of length 2) are pairwise
  non-adjacent.

  Let $z \in H \setminus C$.
  If $z\sim v_1, v_4$, $z$ forms an $v_1$--$v_4$ path of length~2,
  which is too short.
  If $z\not\sim v_1, v_4$ then there exists a $v_1$--$v_4$ path of
  length at least 4, which we showed impossible.
  Otherwise, \wLOG $z\sim v_1$ and $z\not\sim v_4$.
  Let $v_1zwv_4$ be a $v_1$--$v_4$ path containing $z$.
  Since we only need to consider not internally disjoint paths, \wLOG let $v_1z'wv_4$ be another $v_1$--$v_4$ path.
  If $z'\sim v_4$ then it would create a $v_1$--$v_4$ path of length 4, $v_1zwz'v_4$.
  It follows that $z'$ is a twin of $z$.
  Consequently, all $v_1$--$v_4$ paths are internally disjoint after removing twins.

  Since we can handle twins using Lemma~\ref{lem nonadjacent twin}, it's enough
  to provide an outside-obstacle representation with points in convex position
  of a graph whose $v_1$--$v_4$ paths are all internally disjoint.
  Place $v_1$ and $v_4$ arbitrarily.
  Draw a half-circle such that $v_1$ and $v_4$ lie on its diameter; denote its
  center by $O$.
  Assume there are $m$ disjoint paths and put them in an arbitrary order.
  Place vertex $u$, which is on the $i$-th path, on the half-circle so that
  $\angle uv_1O = \frac{2i+1}{2m+1}\pi$ if $u \sim v_1$.
  Otherwise ($u \not\sim v_1, u \sim v_4$), place it so that $\angle uv_1O =
  \frac{2i}{2m+1}\pi$: see Fig.~\ref{fig 6cycle case1} for an example.

  \begin{figure}[tb]
    \begin{subfigure}[b]{0.25\textwidth}
      \centering
      \includegraphics{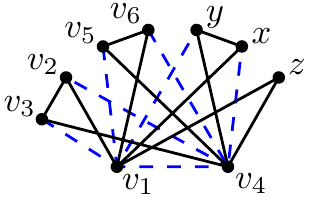}
      \caption{case 4(i) for four paths: $v_1zv_4$, $v_1v_2v_3v_4$,
        $v_1v_6v_5v_4$, $v_1xyv_4$}
      \label{fig 6cycle case1}
    \end{subfigure}
    \hfill
    \begin{subfigure}[b]{0.23\textwidth}
      \centering
      \includegraphics{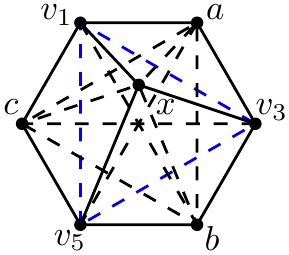}
      \caption{case 4(iii): graph with eliminated twins}
      \label{fig 6cycle case3}
    \end{subfigure}
    \hfill
    \begin{subfigure}[b]{0.24\textwidth}
      \centering
      \includegraphics{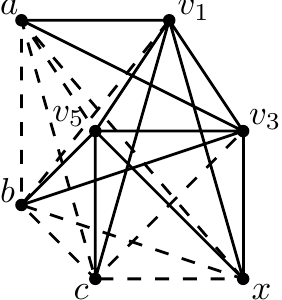}
      \caption{$H[v_1,v_3,v_5]{=}K_3$}
      \label{fig 6cycle case3a}
    \end{subfigure}
    \hfill
    \begin{subfigure}[b]{0.21\textwidth}
      \centering
      \includegraphics{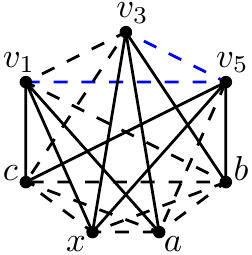}
      \caption{$v_1\not\sim v_3$}
      \label{fig 6cycle case3b}
    \end{subfigure}
    \caption{Cases 4(i) and 4(iii)}
    \label{fig:cases4i+4iii}
  \end{figure}

\medskip
\noindent\textbf{Case~4(ii):} There exists a vertex $x \in H \setminus
C$ that has only one neighbor in~$C$. 

  \WLOG $x\sim v_1$.
  As $H$ is biconnected and using Observation~\ref{obs cycle and path}, there
  exists a $v_1$--$v_4$ path of length~3 containing $x$. Therefore this case
  reduces to case~(i).

\medskip
\noindent\textbf{Case~4(iii):} There exists a vertex $x \in H
\setminus C$ with at least three neighbors in $C$.

  \WLOG $x\sim v_1$.
  By Observation~\ref{obs cycle and path}, $x\sim v_3,v_5$ and $x\not\sim
  v_2,v_4,v_6$. We make the following two observations.
  \begin{enumerate}[(a)]
  \item Since $x$ plays the same role in the 6-cycle
    $v_1xv_3v_4v_5v_6$ as~$v_2$ in~$C$, by the same logic as above,
    $v_2 \not\sim v_4, v_6$.  Similarly, $v_4\not\sim v_6$.
  
  \item Assume that there is another vertex $y \in H \setminus C$.
  If $y \sim v_2$, then $y\not\sim v_4,v_6$
  by Observation~\ref{obs circumference constraint},
  $y\not\sim v_1,v_3$ by Observation~\ref{obs cycle and path},
  and $y\not\sim v_5$ with a path $v_1v_2yv_5$ and 6-cycle
  $v_1xv_3v_4v_5v_6$.
  Since $y$ has at least two neighbors in $C$, this is a contradiction.
  An analogous argument holds for $y \sim v_4$ and $y \sim v_6$.
  Hence, $y$ is adjacent to two or three of the vertices $v_1$, $v_3$, $v_5$.
  \end{enumerate}
  
  Together, (a) and (b) imply that every vertex in $H \setminus \{v_1,
  v_3, v_5\}$ has two or three neighbors among $v_1, v_3, v_5$.
  Note that when $v_1 \sim v_4$, $v_4$ is a twin of~$x$.
  By similar arguments we can exclude $v_2 \sim v_5$ and $v_3 \sim v_6$.
  It follows that the graph~$H'$ (that is, $H$ after removing twins)
  is an induced subgraph of the graph in Fig.~\ref{fig 6cycle case3}.
  In particular, $v_2$ is either a twin of~$a$ or~$x$.  Similar
  statements hold for~$v_4$ and~$v_6$.
  Since the obstacle number of a graph is an upper bound for the obstacle
  number of any induced subgraph,
  it is enough to provide outside-obstacle representations for the graphs shown
  in Fig.~\ref{fig 6cycle case3}.
  If $v_1, v_3, v_5$ are all adjacent, we can use the outside-obstacle
  representation depicted in Fig.~\ref{fig 6cycle case3a} where all vertices
  except from $v_2$ are in convex position.
  Due to symmetry, we can easily change the drawing so that $v_2$ is on the
  convex hull of $V(G)$.
  If at least one pair of $v_1, v_3, v_5$ is non-adjacent, we assume $v_1\not\sim v_3$ without loss of generality and provide the representation in
  Fig.~\ref{fig 6cycle case3b}.

\medskip
\noindent\textbf{Case~4(iv):} There exists a vertex $x \in H \setminus C$ that is adjacent to antipodals.

  \WLOG $x \sim v_1, v_4$.
  An edge $v_2v_5$ would create the 7-cycle $v_2v_5v_6v_1xv_4v_3v_2$ so $v_2 \not\sim v_5$.
  Similarly $v_3\not\sim v_6$.

  Assume there is another vertex $y\in H \setminus C$, which is adjacent to $v_2$.
  If $y \sim v_6$, then $v_2yv_6v_5v_4xv_1v_2$ would be a 7-cycle, so $y \not\sim v_6$.
  Additionally, Observations~\ref{obs circumference constraint} and
  \ref{obs cycle and path} imply $y \not\sim v_1,v_3,v_5$, and it follows that
  $y \sim v_2, v_4$.
  Edges from $v_3$ to $v_1$ or $v_5$ would create a longer cycle.
  So $v_3\not\sim v_5,v_1$ and thus $y$ is a non-adjacent twin of $v_3$.

  Therefore, every $y \in H \setminus C$ is a non-adjacent twin of
  $v_2,v_3,v_5,v_6$ or $x$ and thus by eliminating twins using
  Lemma~\ref{lem nonadjacent twin} we obtain the graph depicted in
  Fig.~\ref{fig 6cycle case4}.
  Depending on the additional edges we have to use one of the three different
  representations depicted in Fig.~\ref{fig 6cycle case4 complete}.
  Note that type a has all vertices in convex position and types b and c
  have two variants with different vertices on the convex hull of $V(G)$.

  Now, if $\neg14$, we use Fig.~\ref{fig 6cycle case4a}.
  If $14,13,24$, we use Fig.~\ref{fig 6cycle case4a}.
  Case $14,15,46$ is symmetric to the previous one.
  So for the remaining cases at least one of $\neg15,\neg46$ and at least one of
  $\neg13,\neg24$ is true.
  If $\neg13,\neg46$ or $\neg15,\neg24$, we use Fig.~\ref{fig 6cycle case4b}.
  \WLOG we can assume $\neg13$ which gives us the configuration $14,\neg13,46,\neg15,24$.
  If $26$ we can use Fig.~\ref{fig 6cycle case4b}, otherwise ($\neg26$) Fig.~\ref{fig 6cycle case4c}.

  \begin{figure}[tb]
    \begin{subfigure}[b]{0.32\textwidth}
      \centering
      \includegraphics{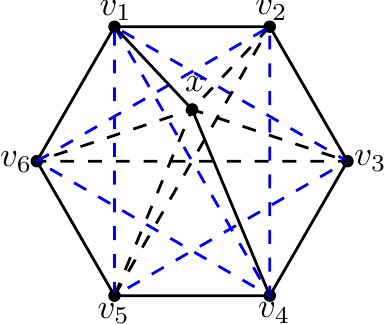}
      \caption{graph with eliminated twins}
      \label{fig 6cycle case4}
    \end{subfigure}
    \hfill
    \begin{subfigure}[b]{0.66\textwidth}
      \centering
      \includegraphics{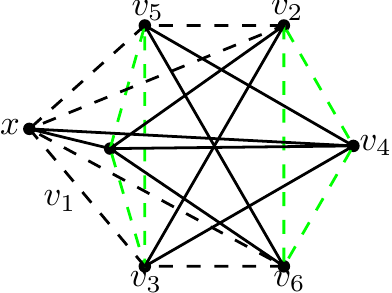}
      \hfill
      \includegraphics{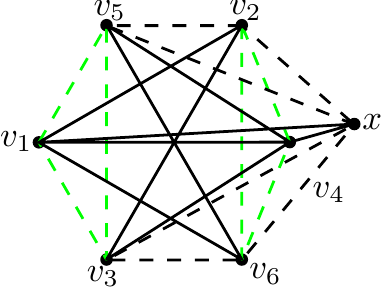}
      \caption{valid if $\neg14\lor15, \neg14\lor46$}
      \label{fig 6cycle case4b}
    \end{subfigure}

    \medskip

    \begin{subfigure}[b]{0.32\textwidth}
      \centering
      \includegraphics{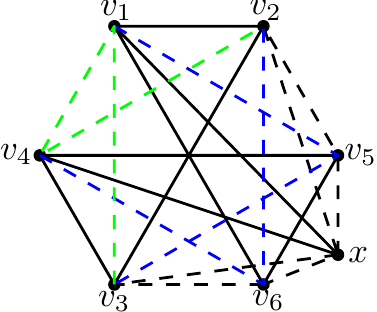}
      \caption{valid if $\neg14\lor13, \neg14\lor24$}
      \label{fig 6cycle case4a}
    \end{subfigure}
    \hfill
    \begin{subfigure}[b]{0.68\textwidth}
      \centering
      \includegraphics{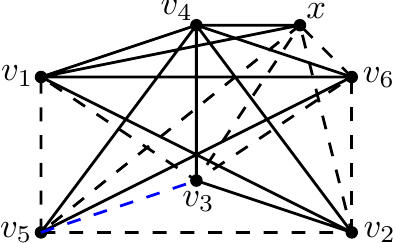}
      \hfill
      \includegraphics{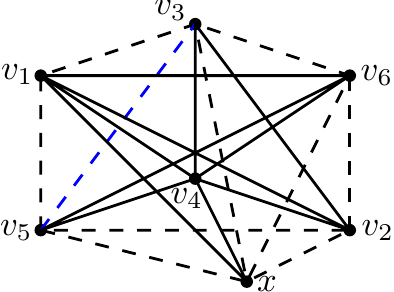}
      \caption{valid if $14,\neg13,\neg15,24,\neg26,46$}
      \label{fig 6cycle case4c}
    \end{subfigure}

    \caption{Case 4(iv)}
    \label{fig 6cycle case4 complete}
  \end{figure}

\noindent\textbf{Case~4(v):} All vertices in $H \setminus C$ have 2 non-antipodal neighbors on $C$.

  Let $x\in H\setminus C$.
  \WLOG $x \sim v_1, v_3$.
  If $v_2$ has three neighbors on the cycle $C' = v_1xv_3v_4v_5v_6$, we apply
  the construction in case (iii) with cycle $C'$ instead of $C$.
  Otherwise $v_2$ has exactly two neighbors ($v_1$ and $v_3$) and thus $x$ is
  its non-adjacent twin.
  Consequently, every additional vertex is a non-adjacent twin of one of the
  vertices $v_1, \dots, v_6$
  and thus removing twins using Lemma~\ref{lem nonadjacent twin} results in a
  graph with 6 vertices, which we already handled.
\end{proof}

\begin{backInTime}{thm graphs order up to 7}
\begin{thm}
\ThmGraphsOrderUpToSevenText
\end{thm}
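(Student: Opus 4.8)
The plan is to reduce the theorem to a finite, explicitly checkable case analysis, in the spirit of the proof of Theorem~\ref{thm graphs circumference up to 6}. First note that a graph on at most seven vertices has circumference at most~$7$, so Theorem~\ref{thm graphs circumference up to 6} already supplies an outside-obstacle representation whenever the circumference is at most~$6$---in particular for every graph on at most six vertices. Hence only $7$-vertex graphs of circumference exactly~$7$ remain, i.e.\ graphs~$G$ containing a Hamiltonian cycle $C=v_1v_2\cdots v_7$ (indices mod~$7$). Every such~$G$ arises from~$C$ by adding a subset of the fourteen chords, which fall into two orbits of the automorphism group~$D_7$ of~$C$: the seven \emph{short} chords $v_iv_{i+2}$ and the seven \emph{long} chords $v_iv_{i+3}$. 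Up to isomorphism there are only finitely many such graphs, and the first step is to enumerate them. The count can be trimmed by two moves. If~$G$ has a non-adjacent twin pair $\{u,w\}$, then $G\setminus\{w\}$ has six vertices, hence (following the construction in the proof of Theorem~\ref{thm graphs circumference up to 6}, rooting the block decomposition at a block containing~$u$) an exposed outside-obstacle representation with~$u$ on the convex hull, and Lemma~\ref{lem nonadjacent twin} then yields the representation of~$G$; and whenever two long chords cross, Observations~\ref{obs cycle and path} and~\ref{obs circumference constraint}---exactly as in Case~$4$ of Theorem~\ref{thm graphs circumference up to 6}---heavily restrict the remaining edges.

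Second, I fix a short \emph{library} of point sets: a convex heptagon together with a few perturbations in which one or two points are recessed slightly toward the interior, playing the role of the pentagon-with-a-recessed-vertex configurations of Fig.~\ref{fig:6-cycle-case}. For each remaining isomorphism class I specify (i)~a bijection from~$V(G)$ onto one library member, chosen so that a Hamiltonian cycle of~$G$ is realised by the hull boundary (or by a fixed near-hull cycle in the perturbed cases), and (ii)~a single outside obstacle. The obstacle must avoid the convex hull of every clique (Observation~\ref{obs:cliques_in_Out}) but may still reach into the hull through the wedges at hull vertices and through corridors between non-crossing edges; the natural shape is therefore a connected ``comb'' living in the unbounded face, with thin teeth, each tooth stabbing one non-edge segment while threading between the edge segments. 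Following the figures accompanying Theorem~\ref{thm graphs circumference up to 6}, I mark certain chords as \emph{optional}: drawable as edge or as non-edge without altering the obstacle skeleton. This optional-edge convention is precisely what collapses the many isomorphism classes into the fifteen groups announced in the proof sketch. For each group I exhibit one picture and verify that (a)~every required edge segment misses the obstacle and (b)~every non-edge segment meets it, the latter being immediate from the placement of the teeth; connectedness of the obstacle is clear since all teeth are joined along the unbounded face. Finally, I bend all straight sides into very flat arcs to put the points in general position, which does not destroy any required or forbidden visibility because every incidence used is strict (no segment passes through a vertex and no three relevant points are collinear).

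The main obstacle is the size and, above all, the exhaustiveness of the case distinction: one must be sure that the fifteen groups together with the optional-edge conventions cover \emph{every} chord subset up to~$D_7$, and that each concrete picture simultaneously realises all claimed visibilities and non-visibilities with one fixed obstacle---in particular that no required edge is accidentally grazed by a tooth and no non-edge slips past the comb. I would organise the enumeration primarily by the cyclic pattern of the long chords (the most constraining, since a crossing pair of long chords forces an octahedron-like local structure, just as in the $6$-cycle analysis), then refine by the short chords and by the number of degree-$2$ vertices; the handful of patterns that cannot be placed in convex position---detectable exactly when some non-edge segment would be trapped behind required edges---are routed to the perturbed library members by recessing the offending vertex. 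A lighter secondary task is to confirm that the twin-reduction shortcut really applies to each graph it is meant to dispose of, which amounts to naming, for each, the non-adjacent twin pair and the hull position of the surviving twin in the six-vertex representation.
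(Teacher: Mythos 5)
Your overall strategy coincides with the paper's: reduce via Theorem~\ref{thm graphs circumference up to 6} to $7$-vertex graphs containing a Hamiltonian $7$-cycle, then dispose of the remaining graphs by a finite case analysis over chord subsets, using point sets in convex or near-convex position and an ``optional edge'' convention to collapse cases. That reduction step is correct and is exactly how the paper begins. However, everything after the reduction is a plan rather than a proof: the fifteen groups are announced but never defined, no concrete assignment of chord patterns to groups is given, no explicit point configurations or obstacles are exhibited, and no visibility verification is performed. Since the entire mathematical content of this theorem lives in that exhaustive case distinction (the paper spends its whole appendix proof on it, organized around the presence or absence of the chords $v_4v_7$, $v_2v_4$, $v_2v_7$, etc., with a figure per type), the proposal as written does not establish the result.

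There is also a concrete error in the proposed trimming of the enumeration. You claim that crossing long chords ``heavily restrict the remaining edges'' via Observations~\ref{obs cycle and path} and~\ref{obs circumference constraint}, by analogy with Case~4 of Theorem~\ref{thm graphs circumference up to 6}. That analogy fails here: in a $7$-vertex graph containing $C_7$ the circumference is automatically $7$ no matter which chords are present, because no cycle can exceed the number of vertices. Any chord $v_iv_{i+t}$ closes cycles of lengths $t+1$ and $8-t$, both at most $7$, so Observation~\ref{obs cycle and path} excludes nothing, and Observation~\ref{obs circumference constraint} concerns adding \emph{new} vertices, which is irrelevant for a fixed $7$-vertex graph. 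Hence all $2^{14}$ chord subsets (up to the dihedral symmetry) are genuinely admissible and must be covered by your library; the enumeration cannot be pruned as you suggest, which makes the unexecuted exhaustiveness check even more substantial than your plan acknowledges. The twin-reduction shortcut is sound in spirit but also needs care: Lemma~\ref{lem nonadjacent twin} requires an \emph{exposed} representation of the $6$-vertex graph with the surviving twin on the convex hull of the whole vertex set, a property that the statement of Theorem~\ref{thm graphs circumference up to 6} does not assert and that its proof only guarantees per biconnected block.
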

\end{backInTime}
\begin{proof}
  If the graph does not have $C_7$ as a subgraph, we are done by
  Theorem~\ref{thm graphs circumference up to 6}. 
  Otherwise, we have a 7-vertex graph that contains the 7-cycle
  $v_1v_2v_3v_4v_5v_6v_7$.  We consider 15 types to cover all cases.
  
  Type~1 (Fig.~\ref{fig 7cycle case1}): It is clear that the figure
  is a valid drawing if $\neg13\lor\neg35\lor15,
  \neg35\lor\neg57\lor37, \dots$.
  By moving some vertices depending on the situation, we can make the
  condition tighter. 
  For instance, Fig.~\ref{fig 7cycle case1_2} is a valid drawing even for
  $13,35,\neg15$ if $\neg16\lor36$ and $\neg57\lor37$.
  Hence, we conclude that there is a outside-obstacle representation if there
  do not exist 3 consecutive edges on the convex hull of $V(G)$ in Fig.~\ref{fig 7cycle case1}.
  If there are 3 consecutive edges, \wLOG $v_1v_6$, $v_1v_3$, and $v_3v_5$ are edges.
  This is only a problem if $v_1v_5$ and $v_3v_6$ are non-edges (see Fig.~\ref{fig
  7cycle case1_3}), so we can assume $13, 35, 16, \neg36, \neg15$ for the
  following cases.
  
  Types~2 to 9 cover the case when $47$, and types~10 to 15 cover the
  case when $\neg47$.
  
  Type 2: $47, 24, 26$. When additionally $57$, apply Fig.~\ref{fig 7cycle case2_1}; when $\neg57$, apply Fig.~\ref{fig 7cycle case2_2}.
  
  Type 3: $47, 25, 27$. Symmetric to type 2.
  
  Type 4: $47, \neg24, \neg25$. Fig.~\ref{fig 7cycle case4}.
  
  Type 5: $47, \neg26, \neg27$. Symmetric to type 4.
  
  Type 6: $47,24,\neg27$. Fig.~\ref{fig 7cycle case6}.
  
  Type 7: $47,\neg24,27$. Symmetric to type 6.
  
  Type 8: $47,\neg24,25,26,\neg27$. When additionally $37$, apply Fig.~\ref{fig 7cycle case8_1}; when $\neg37$, apply Fig.~\ref{fig 7cycle case8_2}.
  
  \begin{figure}[tb]
    \begin{subfigure}[b]{0.3\textwidth}
      \centering
      \includegraphics{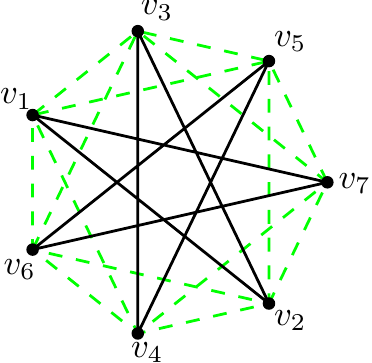}
      \caption{type 1}
      \label{fig 7cycle case1}
    \end{subfigure}
    \hfill
    \begin{subfigure}[b]{0.3\textwidth}
      \centering
      \includegraphics{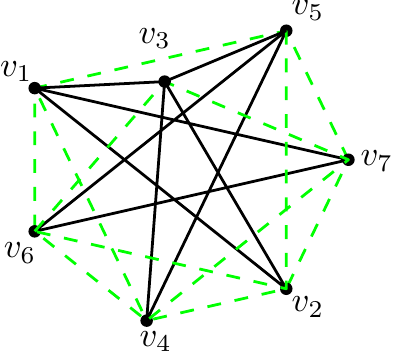}
      \caption{still valid}
      \label{fig 7cycle case1_2}
    \end{subfigure}
    \hfill
    \begin{subfigure}[b]{0.37\textwidth}
      \centering
      \includegraphics{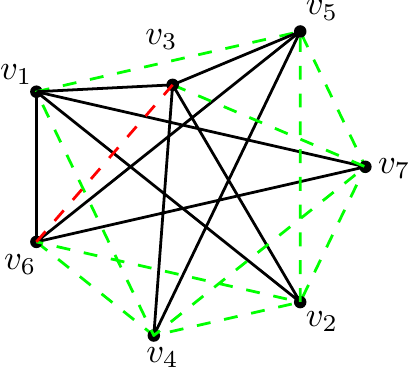}
      \caption{invalid due to the red non-edge}
      \label{fig 7cycle case1_3}
    \end{subfigure}
  
    \medskip

    \begin{subfigure}[b]{0.32\textwidth}
      \centering
      \includegraphics{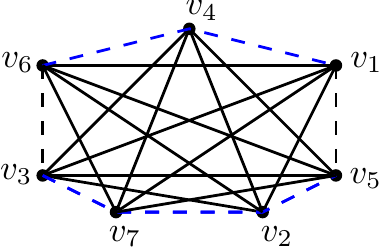}
      \caption{type 2.1}
      \label{fig 7cycle case2_1}
    \end{subfigure}
    \hfill
    \begin{subfigure}[b]{0.32\textwidth}
      \centering
      \includegraphics{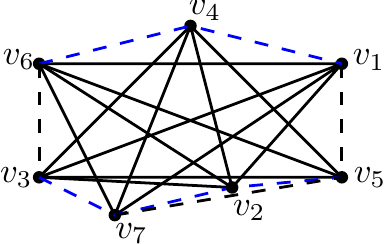}
      \caption{type 2.2}
      \label{fig 7cycle case2_2}
    \end{subfigure}
    \hfill
    \begin{subfigure}[b]{0.32\textwidth}
      \centering
      \includegraphics{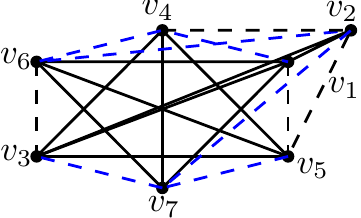}
      \caption{type 4}
      \label{fig 7cycle case4}
    \end{subfigure}
  
    \medskip

    \begin{subfigure}[b]{0.32\textwidth}
      \centering
      \includegraphics{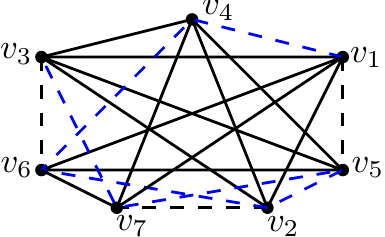}
      \caption{type 6}
      \label{fig 7cycle case6}
    \end{subfigure}
    \hfill
    \begin{subfigure}[b]{0.32\textwidth}
      \centering
      \includegraphics{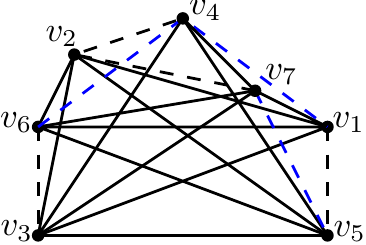}
      \caption{type 8.1}
      \label{fig 7cycle case8_1}
    \end{subfigure}
    \hfill
    \begin{subfigure}[b]{0.32\textwidth}
      \centering
      \includegraphics{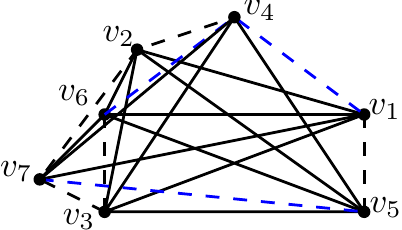}
      \caption{type 8.2}
      \label{fig 7cycle case8_2}
    \end{subfigure}
    \caption{Outside-obstacle representations of 7-cycle case: types~1 to~8}
  \end{figure}

  Type 9: $47,24,\neg25,\neg26,27$. Fig.~\ref{fig 7cycle case9}.
  
  Type 10: $\neg47,24,27$. Fig.~\ref{fig 7cycle case10}.
  
  Type 11: $\neg47,\neg24,\neg27$. Fig.~\ref{fig 7cycle case11}.
  
  Due to symmetry, we can assume $\neg47,\neg24,27$ for the rest.
  
  Type 12: $\neg47,\neg24,27,\neg25$. Fig.~\ref{fig 7cycle case12}.
  
  Type 13: $\neg47,\neg24,27,\neg57$. Fig.~\ref{fig 7cycle case13}.
  
  Type 14: $\neg47,\neg24,27,25,57,26$. Fig.~\ref{fig 7cycle case14}.
  
  Type 15: $\neg47,\neg24,27,25,57,\neg26$. Fig.~\ref{fig 7cycle case15}.
    
  \begin{figure}[tb]
    \begin{subfigure}[b]{0.325\textwidth}
      \centering
      \includegraphics{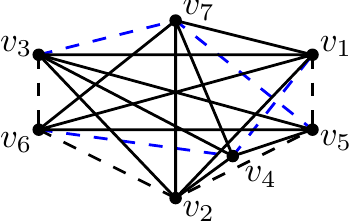}
      \caption{type 9}
      \label{fig 7cycle case9}
    \end{subfigure}
    \hfill
    \begin{subfigure}[b]{0.325\textwidth}
      \centering
      \includegraphics{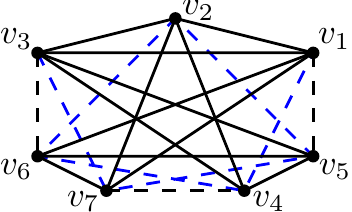}
      \caption{type 10}
      \label{fig 7cycle case10}
    \end{subfigure}
    \hfill
    \begin{subfigure}[b]{0.325\textwidth}
      \centering
      \includegraphics{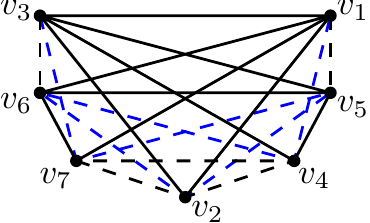}
      \caption{type 11}
      \label{fig 7cycle case11}
    \end{subfigure}
  
    \medskip
    
    \begin{subfigure}[b]{0.24\textwidth}
      \centering
      \includegraphics{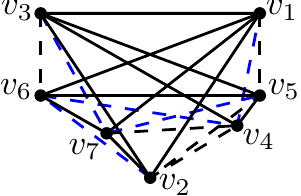}
      \caption{type 12}
      \label{fig 7cycle case12}
    \end{subfigure}
    \hfill
    \begin{subfigure}[b]{0.24\textwidth}
      \centering
      \includegraphics{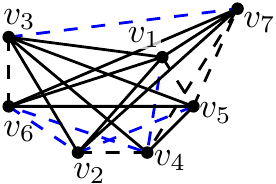}
      \caption{type 13}
      \label{fig 7cycle case13}
    \end{subfigure}
    \hfill
    \begin{subfigure}[b]{0.24\textwidth}
      \centering
      \includegraphics{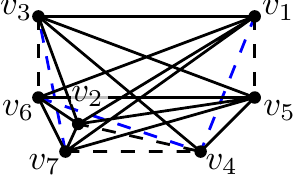}
      \caption{type 14}
      \label{fig 7cycle case14}
    \end{subfigure}
    \hfill
    \begin{subfigure}[b]{0.24\textwidth}
      \includegraphics{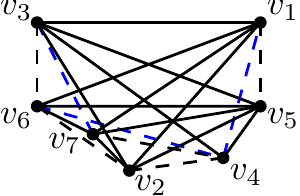}
      \caption{type 15}
      \label{fig 7cycle case15}
    \end{subfigure}

    \caption{Outside-obstacle representations of 7-cycle case: types~9 to~15}
  \end{figure}
\end{proof}

\section{Missing Proofs of Section~\ref{sec:co-bipartite}}
\label{appendixB}

This appendix contains the full details showing that $B_8$ has no
inside-obstacle representation, as formalized in
Lemma~\ref{lem:8-vertex-no-inside-obstacle}.  To this end, we first
establish some useful properties of graphs with inside-obstacle
representations.

\begin{obs}\label{obs convex hull cycle}
  In an inside-obstacle representation of a graph $G=(V,E)$, the
  vertices on $\CH(V)$ form a cycle.
\end{obs}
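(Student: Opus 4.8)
The plan is to reduce the statement to a single claim: \emph{every edge of the convex-hull polygon $\partial\CH(P)$ is an edge of $G$}. Granting that, the hull vertices $p_1,\dots,p_k$ taken in cyclic order around $\CH(P)$ are a cycle $p_1p_2\cdots p_kp_1$ of $G$ (and general position guarantees $k\ge 3$ as soon as $|V|\ge 3$; for $|V|\le 2$ the drawing has no bounded complementary region, so no inside obstacle exists and there is nothing to prove).

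To prove the claim I would first record two facts about an inside-obstacle representation $(P,C)$. Writing $D$ for the visibility drawing — the union of $P$ with all segments $\overline{pq}$, $pq\in E$ — we have $C\cap D=\emptyset$: points of $P$ lie in the complement of $C$ by hypothesis, and each drawn segment misses $C$ by definition of the visibility graph. Since $C$ is a topologically open (hence connected) polygon disjoint from $D$, it is contained in a single connected component of $\mathbb{R}^2\setminus D$, and the word \emph{inside} says exactly that this component is not the unbounded one.

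Now I would argue by contradiction: suppose some hull edge $\overline{p_ip_{i+1}}$ is a non-edge of $G$. Then $\overline{p_ip_{i+1}}\cap C\neq\emptyset$, so pick $x$ in that intersection; since $p_i,p_{i+1}\notin C$, the point $x$ lies in the relative interior of the hull edge, so every neighbourhood of $x$ contains points of $\mathbb{R}^2\setminus\CH(P)$. As $C$ is open, a small ball around $x$ sits inside $C$ and therefore meets $\mathbb{R}^2\setminus\CH(P)\subseteq\mathbb{R}^2\setminus D$. But $\mathbb{R}^2\setminus\CH(P)$ is connected and unbounded, hence contained in the unbounded component of $\mathbb{R}^2\setminus D$; so $C$ meets the unbounded component, contradicting the previous paragraph. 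This forces every hull edge to be an edge of $G$, completing the proof.

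I do not anticipate a real obstacle here — it is a topological bookkeeping argument, the analogue for inside obstacles of Observation~\ref{obs:cliques_in_Out} — and the only spot needing care is the last paragraph, where one has to turn the informal slogan ``an inside obstacle cannot pierce the convex hull'' into the precise statement that a non-edge along $\partial\CH(P)$ forces a point of the open set $C$ outside $\CH(P)$ and hence into the unbounded face.
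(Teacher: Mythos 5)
Your argument is correct and is exactly the reasoning the paper leaves implicit: it states Observation~\ref{obs convex hull cycle} without proof, relying on the remark that ``an inside obstacle cannot (by definition) pierce the convex hull of the point set,'' which is precisely the claim you formalize by showing that a non-edge on $\partial\CH(V)$ would force the open, connected obstacle into the unbounded face. No gap; your write-up just makes the paper's one-line justification rigorous.
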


\begin{obs}\label{obs 4-path}
In an inside-obstacle representation of a graph $G$, if $G$ contains a 
3-edge induced path $uxyv$ where $u$ and $v$ are on the convex hull of $V(G)$ and 
$x$ and $y$ are not on the convex hull of $V(G)$, then the line segments 
$\overline{ux}$ and $\overline{yv}$ do not intersect and the 
quadrilateral $uxyv$ is convex.
\end{obs}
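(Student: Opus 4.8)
\noindent The plan is to prove the slightly stronger statement that $u,x,y,v$ lie in convex position and occur in exactly this cyclic order on $\CH(\{u,x,y,v\})$; this is equivalent to ``$uxyv$ is a convex quadrilateral'' and in particular makes the opposite sides $\overline{ux}$ and $\overline{yv}$ disjoint. I would work with $D:=\CH(V(G))$ and use that the open connected obstacle $C$ satisfies $C\subseteq D$ (an inside obstacle does not pierce the hull) and avoids the points of $P$, that $u,v\in\partial D$ while $x,y\in\Int D$, and that the induced path gives $\overline{ux}\cap C=\overline{xy}\cap C=\overline{yv}\cap C=\emptyset$ while $\overline{uv}\cap C,\overline{uy}\cap C,\overline{xv}\cap C$ are all nonempty; since $x,y\in\Int D$, the arc $\pi:=\overline{ux}\cup\overline{xy}\cup\overline{yv}$ lies in $\Int D$ except at its endpoints $u,v$. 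A general-position convention (no three relevant points collinear, transversal crossings) leaves only five configurations of $u,x,y,v$: convex position in one of the cyclic orders $uxyv$ (the one I want), $uxvy$, $uyxv$; or $x\in\Int(\triangle uyv)$; or $y\in\Int(\triangle uxv)$ --- the other two ``vertex inside a triangle'' configurations cannot occur since $u,v\in\partial D$ but $\triangle xyv,\triangle uxy\subseteq D$. So the whole task reduces to excluding the four bad cases: (A) $\overline{ux}\cap\overline{yv}\neq\emptyset$; (B) $\overline{uv}\cap\overline{xy}\neq\emptyset$; (C) $x\in\Int(\triangle uyv)$; (D) $y\in\Int(\triangle uxv)$.

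\noindent For case (A) I would argue as follows. Let $\{s\}=\overline{ux}\cap\overline{yv}$; since $s$ is interior to both segments, the rays $\overrightarrow{sx},\overrightarrow{su}$ are opposite and so are $\overrightarrow{sy},\overrightarrow{sv}$, so $\triangle sxy$ and $\triangle suv$ sit in vertically opposite cones at $s$ and meet only in $s$, and $\triangle sxy\subseteq\Int D$. Here $\pi$ is an ``$\alpha$-shaped'' curve: the loop $\partial(\triangle sxy)$ with two tails $\overline{su},\overline{sv}$ reaching $\partial D$. Deleting $\pi$ from $D$ leaves three regions --- $\Int(\triangle sxy)$, a wedge $J_1$ cut off at $s$ by the two tails on the $\overline{uv}$-side (so $\overline{uv}\subseteq\overline{J_1}$), and a region $J_2$ whose boundary carries all of $\partial(\triangle sxy)$. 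As $C$ is connected and disjoint from $\pi$, it lies in one of the three; it cannot be $\Int(\triangle sxy)$ because $\overline{uy}$ meets $C$ but avoids $\Int(\triangle sxy)$; its interior cannot lie in $J_1$ either (whose closure meets $\partial(\triangle sxy)$ only in $s\neq y$), so $\Int\overline{uy}\subseteq J_2$, and with $\overline{uy}\cap C\neq\emptyset$ and $u,y\notin C$ this forces $C\subseteq J_2$. But $\overline{J_1}\cap\overline{J_2}$ is just the two tails, which meet $C$ nowhere, so $C$ misses $\overline{uv}\subseteq\overline{J_1}$ --- contradicting $\overline{uv}\cap C\neq\emptyset$.

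\noindent For (B), (C), (D), $\pi$ is now a simple arc, so $D\setminus\pi$ has exactly two components $U_1,U_2$; say $C\subseteq U_1$. Since $\overline{uy}$ and $\overline{xv}$ each meet $\pi$ only at their endpoints, each has its interior inside a single component, and since each meets $C$ but contains no point of $C$, I get $\Int\overline{uy},\Int\overline{xv}\subseteq U_1$. In (B) the convex quadrilateral has diagonal $\overline{xy}\subseteq\pi$, splitting it into $\triangle uxy$ and $\triangle vxy$ on the two sides of $\Int\overline{xy}$; since $\overline{uy}$ is an edge of $\triangle uxy$ and $\overline{xv}$ an edge of $\triangle vxy$, connectedness puts both open triangles in $U_1$, which is impossible because a point of $\Int\overline{xy}\subseteq\Int\pi$ has its two local sides in the two \emph{different} components. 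In (C), the segment $\overline{xy}$ runs from the vertex $y$ of $\triangle uyv$ to the interior point $x$, hence stays inside $\triangle uyv$ and does not meet $\overline{uv}$, so $\Int\overline{uv}\subseteq U_1$ as well; the chord $\overline{uv}$ then splits $U_1$ into a convex cap and the interior $Q$ of the simple polygon $uxyv$, which is reflex at $x$ so that $Q\subseteq\Int(\triangle uyv)$; since $y$ lies in the closure of neither $\overline{uv}$ nor the cap, $\Int\overline{uy}\subseteq Q$, yet $\overline{uy}$ is a side of $\triangle uyv$, so $\Int\overline{uy}\subseteq\partial(\triangle uyv)$ is disjoint from $Q$ --- a contradiction; (D) is the mirror image, using $\overline{xv}$. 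Excluding (A)--(D) leaves exactly the configuration ``$u,x,y,v$ in convex position, in this cyclic order'', which is the claim.

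\noindent The step I expect to be the real obstacle is case (A): one has to be careful to show that deleting the self-intersecting curve $\pi$ from $D$ produces precisely the three claimed regions and that $C$ is forced into the one that cannot reach $\overline{uv}$. The remaining region bookkeeping in (B)--(D) is similar in spirit, and the few genuinely degenerate placements --- $C$ grazing $\partial D$, or $u,v$ being hull-adjacent (both of which actually force $uv$ to be a visible edge and hence cannot arise here) --- are dispatched by the same general-position reasoning.
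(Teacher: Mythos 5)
Your proof is correct and follows essentially the same strategy as the paper: the connected inside obstacle must meet all three non-edges $\overline{uv}$, $\overline{uy}$, $\overline{xv}$ while avoiding the path $\overline{ux}\cup\overline{xy}\cup\overline{yv}$, and in each degenerate configuration two of these non-edges are trapped in different regions of the hull cut by that path. Your version is somewhat more systematic than the paper's --- in particular you explicitly exclude the configuration where $\overline{uv}$ crosses $\overline{xy}$, which the paper's proof passes over when it jumps from ``$\overline{ux}$ and $\overline{yv}$ do not intersect'' to ``$uxyv$ is a non-intersecting quadrilateral'' --- but the underlying argument is the same.
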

\begin{proof}
Suppose that the line segments $\overline{ux}$ and $\overline{yv}$ 
intersect, and let $z$ be the intersection point. Let $P$ be a chain 
on the convex hull of $V(G)$ such that the region bounded by $uzvPu$ contains 
the line segment $\overline{uv}$ of a non-edge $uv$. The obstacle should 
lie inside the region $uzvPu$ due to $\overline{uv}$. However, the line 
segment $\overline{uy}$ of the non-edge $uy$ lies completely outside the 
region $uzvPu$. Thus contradicting $\Obsin(G)=1$.

Therefore $\overline{ux}$ and $\overline{yv}$ do not intersect, i.e., 
$uxyv$ forms a non-intersecting quadrilateral. Let $P$ be a chain on the
convex hull of $V(G)$ such that the region bounded by $uxyvPu$ contains the line 
segment $\overline{uy}$. Notice that, when $uxyv$ is not convex, the line 
segment $\overline{yx}$ lies outside this region, i.e., contradicting 
$\Obsin(G)=1$. Thus, the quadrilateral $uxyv$ is convex.
\end{proof}

\begin{obs}\label{obs two 4-paths}
Let $G$ be a graph which contains the vertices $u,v,u',v',x,y$ such that 
$uxyv$ and $u'xyv'$ are induced 3-edge paths.
In an inside-obstacle representation of a graph $G$, if $u, v, u', v'$ are
on the convex hull of $V(G)$, $x$ and $y$ are not on the convex hull of $V(G)$,
and $\overline{uv}$ and $\overline{u'v'}$ intersect, then $u,v'$ 
or $v,u'$ are not consecutive on the convex hull of $V(G)$. If additionally $u,u',v,v'$ 
are consecutive, then neither $x$ nor $y$ are contained in the quadrilateral 
formed by $u,u',v,v'$.
\end{obs}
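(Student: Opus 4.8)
The plan is to apply Observation~\ref{obs 4-path} to \emph{both} 3-edge paths and then exploit the confinement of the obstacle that is implicit in that observation's proof.

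Applying Observation~\ref{obs 4-path} to $uxyv$ shows that $uxyv$ is a convex quadrilateral with $\overline{ux}\cap\overline{yv}=\emptyset$, and likewise that $u'xyv'$ is a convex quadrilateral with $\overline{u'x}\cap\overline{yv'}=\emptyset$. But the proof of Observation~\ref{obs 4-path} gives more: since the obstacle meets the non-edge $\overline{uv}$ while avoiding the edges $\overline{ux},\overline{xy},\overline{yv}$ and the hull boundary, it must lie in the region $R_1$ into which the polygonal chain $u$--$x$--$y$--$v$ splits the disk bounded by $\CH(V(G))$ \emph{on the side of $\overline{uv}$} (which, by convexity, also contains the quadrilateral $uxyv$, hence the diagonals $\overline{uy}$ and $\overline{xv}$). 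Symmetrically, the obstacle lies in the region $R_2$ determined by the chain $u'$--$x$--$y$--$v'$ and the non-edge $\overline{u'v'}$, so it lies in $R_1\cap R_2$. Now the five edges $\overline{ux},\overline{u'x},\overline{xy},\overline{yv},\overline{yv'}$ (the union of the two chains) form a ``double star'' with leaves $u,u',v,v'$ on the hull and interior branch vertices $x,y$; since the obstacle is connected and avoids all five, it lies in a single cell of the subdivision they induce, and this cell lies in $R_1\cap R_2$. (Should two of these edges cross -- which forces $uv'$ or $u'v$ to be an edge -- the subdivision only gets finer, which can only help the arguments below; moreover $uxyv'$ or $u'xyv$ is then an induced 3-edge path to which Observation~\ref{obs 4-path} again applies.)

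Because $\overline{uv}$ and $\overline{u'v'}$ cross and $u,u',v,v'$ all lie on $\CH(V(G))$, these four vertices appear in an alternating cyclic order along the hull, say $u,u',v,v'$ (the other alternating order is treated by relabelling). The double star then cuts the disk into four cells, one behind each of the boundary arcs $u$--$u'$, $u'$--$v$, $v$--$v'$, $v'$--$u$, and the obstacle occupies exactly one of them. For the first claim, suppose for contradiction that both $\{u,v'\}$ and $\{u',v\}$ are consecutive on the hull, i.e.\ the arcs $v'$--$u$ and $u'$--$v$ are single hull edges; then the two cells behind those arcs are each bounded by one hull edge and two of the five edges only. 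Reading off, for each of the four possible cells, which of the six non-edges $\overline{uv},\overline{uy},\overline{xv}$ and $\overline{u'v'},\overline{u'y},\overline{xv'}$ can meet it (each triple is the two diagonals plus the fourth side of a convex quadrilateral, hence trapped near its chain), one finds in every case that at least one non-edge lies entirely outside the obstacle's cell -- contradicting $\Obsin(G)=1$ exactly as in Observation~\ref{obs 4-path}. Hence at most one of $\{u,v'\}$, $\{u',v\}$ is consecutive.

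For the second claim, assume all four of $u,u',v,v'$ are consecutive, so $Q:=uu'vv'$ is a convex quadrilateral whose two diagonals are the non-edges $\overline{uv}$ and $\overline{u'v'}$, crossing in $\Int(Q)$. If $x\in Q$, then $\overline{xu},\overline{xu'}$ (edges) and $\overline{xv},\overline{xv'}$ (non-edges) leave $x$ towards the four corners of $Q$ in the cyclic angular order $u,u',v,v'$; in particular $\overline{xv}$ avoids the interior of the triangle $uu'x$, which is precisely the cell behind the arc $u$--$u'$. Checking the remaining three cells (using that from the cell behind $v'$--$u$ the obstacle is cut off from $x$ by $\overline{xy}$, and that once $x\in Q$ the segments $\overline{xv},\overline{xv'}$ are separated from the cell behind $v$--$v'$), one sees that no cell is met by both $\overline{xv}$ and $\overline{xv'}$ -- a contradiction; the symmetric argument with $(y,v,v')$ in place of $(x,u,u')$ gives $y\notin Q$. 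The main obstacle throughout is this bookkeeping: which cell holds the obstacle is not forced a priori but depends on the sides of the two diagonals on which $x$ and $y$ fall, so the last two steps really require a short case split over those possibilities, with convexity of the two quadrilaterals (from Observation~\ref{obs 4-path}) keeping the analysis finite and excluding degenerate placements of $x$ and $y$.
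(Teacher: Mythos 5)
Your overall strategy---confine the obstacle to one face of the arrangement formed by the five edges $\overline{ux},\overline{u'x},\overline{xy},\overline{yv},\overline{yv'}$ inside the hull, then show that every candidate face misses one of the six non-edges---is a legitimate and genuinely different route from the paper's, which instead tracks where the rays $\overrightarrow{ux}$ and $\overrightarrow{u'x}$ hit the hull and derives the contradiction from two hull chains having to overlap. The first half of your argument (the obstacle lies in a single cell contained in $R_1\cap R_2$) is sound. But the proof is not complete: the entire content of both claims is delegated to case analyses that you assert rather than perform (``one finds in every case that at least one non-edge lies entirely outside the obstacle's cell''; ``checking the remaining three cells \ldots{} one sees that no cell is met by both''), and your closing paragraph concedes that ``the last two steps really require a short case split'' which you have not supplied. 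Those checks are exactly where the difficulty lives: which cells each of $\overline{uv},\overline{uy},\overline{xv},\overline{u'v'},\overline{u'y},\overline{xv'}$ traverses depends on which sides of the two crossing diagonals $x$ and $y$ fall, so the assertion cannot be taken on faith.

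There is also a structural problem with the reduction to a four-cell picture. You dismiss crossings among the five edges by noting that a crossing forces $uv'$ or $u'v$ to be an edge and that ``the subdivision only gets finer.'' But in the first claim you assume precisely that $u,v'$ and $u',v$ are consecutive on the hull, and by Observation~\ref{obs convex hull cycle} consecutive hull vertices are adjacent; hence $u\sim v'$ and $u'\sim v$ hold automatically, the paths $uxyv'$ and $u'xyv$ need not be induced, and Observation~\ref{obs 4-path} gives no control over whether, say, $\overline{u'x}$ crosses $\overline{yv}$. So the ``double star'' may fail to be a plane tree exactly in the configuration you are analyzing, the four named cells need not exist, and ``finer can only help'' is not a substitute for redoing the face enumeration there. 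To turn this into a proof you would need to enumerate the arrangements of the five segments compatible with the convexity of the two quadrilaterals and, for each resulting face, exhibit a non-edge disjoint from it---or switch to an argument, like the paper's, that avoids the face enumeration altogether.
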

\begin{proof}
Consider the ray $\overrightarrow{ux}$, and let $p$ be the intersection 
point between this ray and the convex hull of $V(G)$. Further, let $P$ be the chain 
on the convex hull that connects $p$ to $v$ but does not contain $u$. 
Similarly, let $q$ be the intersection point of the ray 
$\overrightarrow{u'x}$ with the convex hull and let $Q$ be the chain on 
the convex hull that connects $q$ to $v'$ but does not contain $u'$. 
By Observation~\ref{obs 4-path}, $uxyv$ is convex, i.e., $y$ is inside 
the region bounded by $xvPpx$. Similarly, $u'xyv'$ is convex, i.e., $y$ is
inside the region bounded by $xv'Qqx$. Thus, the regions $xvPpx$ and 
$xv'Qqx$ intersect and it follows that $P$ and $Q$ overlap.

Suppose both of $u, v'$ and $v, u'$ are consecutive. Since $u\not\sim v$ 
,$u'\not\sim v'$, 
and $\overline{uv}$ and $\overline{u'v'}$ intersect, their order on the convex hull is
$uv'\dots{}vu'$. If $x$ is contained in the quadrilateral $uv'vu'$, then
the points $p$ and $q$ occur on the convex hull such that: both $p$ and 
$q$ are between $u$ and $u'$ (i.e., $u\dots{}p\dots{}q\dots{}vu'$), 
$p$ is between $v'$ and $u'$ (i.e., $uv'\dots{}p\dots{}u'$), and $q$ is 
between $u$ and $v$ (i.e., $u\dots{}q\dots{}vu'$). However, these 
conditions contradict the fact that the chains $P$ and $Q$ overlap. 
When $x$ is not contained in the quadrilateral $uv'vu'$, we have the 
ordering is $uv'\dots{}vu'\dots{}p\dots{}q\dots{}u$, and, again, $P$ 
and $Q$ do not overlap. Thus, one of $u,v'$ and $v,u'$ are non-consecutive.

Now suppose that $u, u', v, v'$ are consecutive. Note that, \wLOG they are 
consecutive in that order by the previous paragraph and since 
$u\not\sim v, u'\not\sim v'$. If $x$ is contained in the quadrilateral 
$uu'vv'$, we have the convex hull is ordered so that $uu'\dots{}p\dots{}q$
and $uu'\dots{}p\dots{}v'$ on the convex hull, i.e., causing $P$ and $Q$ to
not overlap. Symmetrically, $y$ is also not contained in the quadrilateral.
\end{proof}

Using the above observations we proceed with the main lemma. 

\begin{backInTime}{lem:8-vertex-no-inside-obstacle}
\begin{lem}
  \LemEightVertexNoInsideObstacle
\end{lem}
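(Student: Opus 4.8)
The plan is to argue by contradiction: suppose $B_8$ has an inside-obstacle representation $(P,C)$. I would first record three structural facts to be used throughout. (a) All nine non-edges of $B_8$ join the $4$-clique $Z=\{v_1,v_2,v_3,v_4\}$ to the $4$-clique $Z'=\{v_5,v_6,v_7,v_8\}$; hence $C$ must stab precisely those nine segments and no segment with both endpoints in $Z$ or both in $Z'$. (b) $\{v_1,v_2,v_3,v_5,v_6,v_7\}$ induces $K^*_6$ with non-edges $v_1v_6,v_2v_5,v_3v_7$, while $v_4$ is adjacent only to $v_1,v_2,v_3,v_8$ and $v_8$ only to $v_4,v_5,v_6,v_7$. (c) $\phi=(v_1\,v_6)(v_2\,v_5)(v_3\,v_7)(v_4\,v_8)$ is an automorphism of $B_8$ exchanging $Z$ and $Z'$, and there is an order-$3$ automorphism cyclically permuting $\{v_1,v_2,v_3\}$ (and correspondingly $\{v_5,v_6,v_7\}$), so that $\{v_1,v_2,v_3,v_5,v_6,v_7\}$ is one orbit and $\{v_4,v_8\}$ another; this cuts the case analysis down sharply. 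Finally, by Observation~\ref{obs convex hull cycle} the vertices on $\CH(P)$, in their cyclic order, form a cycle of $B_8$, so consecutive hull vertices are adjacent and every non-edge between two hull vertices is a chord of $\CH(P)$ that the connected set $C$ must stab.

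Then I would split on how $v_4$ and $v_8$ sit with respect to the hull, using $\phi$ to assume, when exactly one of them is interior, that it is $v_4$. The driving mechanism in each case is that an interior vertex $w$ can serve as a middle vertex of an induced $3$-edge path $a\,w\,w'\,b$ with $w,w'$ both interior only when $w$ has an interior neighbour, and then Observations~\ref{obs 4-path} and~\ref{obs two 4-paths} apply. In the case where $v_4$ and $v_8$ are both interior, the edge $v_4v_8$ together with (b) yields the three induced paths $v_1v_4v_8v_6$, $v_2v_4v_8v_5$, $v_3v_4v_8v_7$, whose six endpoints are distinct (and, once one argues they lie on the hull) hull vertices; Observation~\ref{obs 4-path} makes each of the three quadrilaterals convex, and Observation~\ref{obs two 4-paths}, applied to two of the three paths whose endpoint-chords $\overline{v_iv_{\sigma(i)}}$ must cross because $C$ is connected and ``separates'' $v_4$ from $\{v_5,v_6,v_7\}$ and $v_8$ from $\{v_1,v_2,v_3\}$, pins down the cyclic hull order so tightly that the edge $v_4v_8$ is forced to meet $C$ — a contradiction. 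When $v_4$ is interior and $v_8$ is on the hull I would further split on how many of $v_1,v_2,v_3$ are interior: if at least one of them, say $v_1$, is interior and it has an interior neighbour (e.g.\ $v_5$), then the edge $v_1v_5$ gives an induced path such as $v_4v_1v_5v_6$ or $v_2v_1v_5v_8$ with both middle vertices interior, and Observations~\ref{obs 4-path}/\ref{obs two 4-paths} again yield a contradiction; in the remaining subconfigurations no such interior edge exists, and I would instead argue directly via a ``shadow''/separation argument — $C$ is confined to one side of the unobstructed edge $\overline{v_4v_8}$, the directions from $v_4$ hitting the connected set $C$ form an arc containing $v_5,v_6,v_7$ but not $v_1,v_2,v_3$, symmetrically from $v_8$, and these two constraints are incompatible with $\overline{v_4v_8}$ being unobstructed. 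The case $v_4$ on the hull is symmetric: if $v_8$ is interior it reduces via $\phi$ to the previous case, and if both $v_4,v_8$ are on the hull (so essentially all eight vertices are on the hull) the shadow argument shows $v_4,v_8$ are consecutive on the hull with $\{v_1,v_2,v_3\}$ and $\{v_5,v_6,v_7\}$ appearing as two consecutive triples, after which the matching chords $\overline{v_1v_6},\overline{v_2v_5},\overline{v_3v_7}$ together with the avoided crossing edges over-constrain $C$.

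I expect the main obstacle to be exactly this bookkeeping: there are several subcases for which vertices lie on the hull, and in each I must (i) identify which induced $3$-edge paths have both middle vertices off the hull, (ii) verify the geometric crossing hypothesis that Observation~\ref{obs two 4-paths} requires (this is never automatic and has to be deduced from the pattern of the nine stabbed non-edge segments plus the connectedness of $C$), and (iii) in the subcases with no usable interior edge, push through the direct shadow/separation argument. The automorphisms of fact (c) and the fact that $C$ must avoid every segment inside $Z$ and inside $Z'$ (so $C$ lies in one of the few ``pockets'' of each clique's convex hull) should keep the number of essentially different configurations small, which is what makes the case analysis tractable.
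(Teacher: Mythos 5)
Your overall strategy---contradiction via Observation~\ref{obs convex hull cycle} plus the two induced-path observations, with a case split on which of $v_4,v_8$ (and then which of $v_1,v_2,v_3$) lie off the hull---shares its main ingredients with the paper's proof, which likewise exploits the paths $v_1v_4v_8v_6$, $v_2v_4v_8v_5$, $v_3v_4v_8v_7$ and their mirror images through $v_3v_5$. But there are two concrete gaps. First, you have no substitute for the paper's Observation~\ref{obs vtx on CH} on hull consecutiveness: if two vertices such as $v_2$ and $v_4$ both lie on the hull but are \emph{not} consecutive, then the two non-edges $v_2v_5$ and $v_4v_5$ are separated into disjoint bounded regions by segments through $v_5$ and the hull boundary, so one connected inside obstacle cannot stab both. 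This is what forces at least one of $v_1,\dots,v_4$ and at least one of $v_5,\dots,v_8$ off the hull (so the hull has at most six vertices) and cuts the admissible cyclic orders down to a short list. Without it your case tree does not visibly terminate; in particular your case ``both $v_4,v_8$ on the hull, so essentially all eight vertices on the hull'' is actually vacuous, but you have no argument establishing that, and in the case ``both $v_4,v_8$ interior'' you cannot assume the remaining six vertices are all hull vertices.

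Second, your justification of the crossing hypothesis needed for Observation~\ref{obs two 4-paths}---that the chords $\overline{v_iv_{\sigma(i)}}$ ``must cross because $C$ is connected and separates $v_4$ from $\{v_5,v_6,v_7\}$''---does not go through: whether two hull chords cross is determined purely by the cyclic order of their four endpoints, and for several admissible orders (e.g., the hull order $v_1v_2v_3v_6v_7v_5$ in the paper's 6-gon case) the relevant chords do \emph{not} cross. In exactly those orders the paper derives the contradiction by exhibiting two non-edges confined to disjoint bounded regions, which is the mechanism your plan is missing. Your fallback ``shadow/arc of visibility directions'' argument for the subcases with no usable interior edge is not developed enough to judge; as sketched it gives no reason why the arc of directions from $v_4$ toward $C$ and the arc from $v_8$ toward $C$ should be incompatible with $\overline{v_4v_8}$ being unobstructed, and the paper needs no such tool. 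Until these two points are repaired, the proposal is a plausible outline rather than a proof.
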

\end{backInTime}
\begin{proof}
  Suppose $\Obsin(B_8) = 1$ for contradiction. The following observation, 
  together with Observation~\ref{obs convex hull cycle}, greatly restricts 
  the vertices which can occur on $\CH(V(B_8))$.

\begin{obs}\label{obs vtx on CH}
Assume that $B_8$ has an inside-obstacle representation.
Then the order of the vertices on the convex hull of $V(G)$ satisfies the following restrictions.
If the two vertices $\{v_2, v_4\}$ lie on the convex hull of $V(G)$,
then they must be consecutive.
The same holds for each of the pairs $\{v_1, v_4\}, \{v_3, v_4\}, 
\{v_5, v_8\}, \{v_6, v_8\}, \{v_7, v_8\}$, for any maximal subset of 
$\{v_1, v_2, v_3\}$ occurring on the convex hull of $V(G)$, and for any maximal 
subset of $\{v_5, v_6, v_7\}$ occurring on the convex hull of $V(G)$.
\end{obs}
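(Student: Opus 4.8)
Fix an inside-obstacle representation of $B_8$ with obstacle $C$, write $V=V(B_8)$, and recall from Observation~\ref{obs convex hull cycle} that the vertices on $\CH(V)$ form a cycle, so any two \emph{consecutive} hull vertices are adjacent in $B_8$; recall also that $Z=\{v_1,v_2,v_3,v_4\}$ and $Z'=\{v_5,v_6,v_7,v_8\}$ are cliques with $v_4\not\sim v_5,v_6,v_7$ and $v_8\not\sim v_1,v_2,v_3$, and that $C$ lies inside $\CH(V)$. The engine of the proof is the following \emph{chord lemma}: if $a,b\in V$ are adjacent and both lie on $\CH(V)$ but are not consecutive on it, then $C$ lies entirely on one side of the line $\ell$ through $a$ and $b$. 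Indeed, $\ell$ meets the convex region $\CH(V)$ in exactly the chord $\overline{ab}$, so $\CH(V)\setminus\overline{ab}$ is the disjoint union of its two sides; $C$ is connected and misses $\overline{ab}$ (since $a\sim b$), hence lies in one of them. Writing $R,R'$ for the two closed halves of $\CH(V)$ cut off by $\overline{ab}$, with $C\subseteq R$, the region $R'$ is convex and disjoint from $C$ (as $R\cap R'=\overline{ab}$ misses $C$), so any two vertices of $R'\cap V$ see each other and see both $a$ and $b$; thus $(R'\cap V)\cup\{a,b\}$ is a clique of $B_8$, and it is proper because both $R$ and $R'$ contain a hull vertex strictly between $a$ and $b$.

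I would handle the six pairs uniformly; consider $\{v_2,v_4\}$. Suppose $v_2$ and $v_4$ both lie on $\CH(V)$ but are not consecutive. The chord lemma yields a clique of $B_8$ containing $v_2$, $v_4$ and a non-empty set $S\subseteq V$ on one side of $\overline{v_2v_4}$. The only maximal clique of $B_8$ containing both $v_2$ and $v_4$ is $Z$ (their common neighbourhood is $\{v_1,v_3\}$, which induces an edge), so $\emptyset\ne S\subseteq\{v_1,v_3\}$; hence all of $Z'$ together with $C$ lie on the other side $R$ of that line. I would then derive a contradiction from the geometry inside $R$: $v_4$ is a hull vertex with $v_5,v_6,v_7$ and $C$ all in the half-plane in front of it, $v_4$ misses $v_5,v_6,v_7$ but sees $v_8$, and $Z'$ is a clique. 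After arguing that the relevant interior vertices must be off the hull, feeding induced $3$-edge paths such as $v_4v_1v_5v_6$ into Observation~\ref{obs 4-path}, and pairs of such paths sharing a middle edge, such as $v_2v_1v_5v_8$ and $v_4v_1v_5v_6$ (common middle edge $v_1v_5$), into Observation~\ref{obs two 4-paths}, contradicts this forced arrangement. The remaining pairs $\{v_1,v_4\},\{v_3,v_4\},\{v_5,v_8\},\{v_6,v_8\},\{v_7,v_8\}$ go through identically, since in each case the unique common maximal clique of the two vertices is $Z$ or $Z'$, so one side of the chord is again confined to two vertices while $C$ and four clique-vertices are trapped on the other.

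For the two triples, suppose a maximal subset $T\subseteq\{v_1,v_2,v_3\}$ on $\CH(V)$ is not consecutive on it; then some hull vertex $w\notin\{v_1,v_2,v_3\}$ lies strictly between two members $v_i,v_j\in T$ on the arc free of the remaining element of $T$. Since $v_i,v_j\in Z$ they are adjacent, and because of $w$ they are not consecutive, so the chord lemma applies to $\overline{v_iv_j}$. The only maximal cliques of $B_8$ through a pair inside $\{v_1,v_2,v_3\}$ are $Z$ and one triangle (for instance $\{v_2,v_3,v_6\}$ for the pair $v_2,v_3$), so one side of $\overline{v_iv_j}$ again carries at most two vertices, while $C$ and most of $Z'$ sit on the other side; the same style of argument, via Observations~\ref{obs 4-path} and~\ref{obs two 4-paths}, closes the case, whether $w=v_4$ or $w\in Z'$. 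The subset of $\{v_5,v_6,v_7\}$ is symmetric.

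The step I expect to be hardest is the geometric endgame shared by all cases: the chord lemma tells us which vertices lie on each side of the chord but nothing about where $\partial C$ runs, so the contradiction has to be extracted from Observations~\ref{obs 4-path} and~\ref{obs two 4-paths}, and to invoke those one must first show that, in the configuration at hand, the interior vertices of the chosen induced $3$-edge paths cannot lie on $\CH(V)$. Carrying out this bookkeeping cleanly for all eight statements is the bulk of the work; the chord lemma itself and the clique computations are routine.
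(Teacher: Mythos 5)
There is a genuine gap: your argument stops exactly where the contradiction has to be produced. The chord lemma and the clique computation are fine as far as they go, but all they establish (say for the pair $\{v_2,v_4\}$) is that the obstacle $C$ and all of $Z'=\{v_5,\dots,v_8\}$ lie on one side $R$ of the chord $\overline{v_2v_4}$, with at most $\{v_1,v_3\}$ on the other side. That configuration is not by itself absurd --- $R$ is convex and contains $v_2$, $v_4$, $v_5$ and $C$, so both non-edge segments $\overline{v_5v_2}$ and $\overline{v_5v_4}$ lie entirely in $R$ and could in principle both meet $C$ there. The actual contradiction must come from the ``geometric endgame'' you defer (``I would then derive a contradiction from the geometry inside $R$ \dots{} contradicts this forced arrangement''), including the unproved claim that the middle vertices of your chosen induced $P_4$'s are off the hull, which is a precondition of Observations~\ref{obs 4-path} and~\ref{obs two 4-paths}. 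You acknowledge this is ``the bulk of the work,'' and it is; without it the statement is not proved.

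For comparison, the paper's proof does not need the chord lemma or any clique analysis. For each pair it picks the witness vertex that is non-adjacent to \emph{both} members and adjacent to everything else ($v_5$ for $\{v_2,v_4\}$, $v_6$ for $\{v_1,v_4\}$, \dots, $v_8$ for the subsets of $\{v_1,v_2,v_3\}$, $v_4$ for those of $\{v_5,v_6,v_7\}$). If the pair is on the hull but not consecutive, each of the two arcs between them contains a hull vertex $x$, $y$ distinct from the pair, hence adjacent to the witness. The edge chord $\overline{xy}$ (if the witness is $x$) or the two-edge path $x\,w\,y$ (if the witness $w$ is interior), together with the hull boundary, splits the interior into two regions containing the two non-edges from the witness to the pair members; a single connected inside obstacle cannot cross these edges, so it cannot block both non-edges. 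I recommend either completing your endgame in full for all eight cases or switching to this direct separation argument.
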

\begin{proof}
Suppose $v_2$ and $v_4$ are both on the convex hull of $V(G)$. If they are not 
consecutive, then there exist vertices $x,y$, each distinct from $v_2$ and
$v_4$ such that $x$ and $y$ occur in distinct chains ($P$ and $Q$ resp.) 
connecting $v_2$ and $v_4$ along the convex hull. 
Suppose $v_5$ is also on the convex hull, and that $x=v_5$. Since $v_5$ 
is adjacent to all but $v_2$ and $v_4$, $x$ is adjacent to $y$, but now we
see that the line segments $\overline{v_5v_2}$ and $\overline{v_5v_4}$ are 
separated into disjoint bounded regions by the chord $\overline{xy}$ and the
convex hull. Similarly, when $v_5$ is not on the convex hull, we have the 
segments $\overline{v_5x}$ and $\overline{v_5y}$ which again, together with 
the convex hull separate $\overline{v_5v_2}$ and $\overline{v_5v_4}$ into 
disjoint bounded regions. 
Hence, in either case, those two non-edges cannot be obstructed by one 
inside obstacle.

We can similarly use $v_6$ for $\{v_1, v_4\}$, $v_7$ for $\{v_3, v_4\}$, 
$v_2$ for $\{v_5, v_8\}$, $v_1$ for $\{v_6, v_8\}$, $v_3$ for 
$\{v_7, v_8\}$, $v_8$ for any subset of $\{v_1, v_2, v_3\}$, and $v_4$ 
for any subset of $\{v_5, v_6, v_7\}$.
\end{proof}
   
By Observation~\ref{obs vtx on CH}, only two of $v_1, v_2, v_3$ can be 
consecutive with $v_4$, i.e., at least one of $v_1, v_2, v_3, v_4$ is not 
on the convex hull. Symmetrically, at least one of $v_5, v_6, v_7, v_8$ 
is not on the convex hull. We now consider the different cases regarding
size of the convex hull. 

\medskip  
   
\noindent\textbf{Case~1:} $\CH(V(B_8))$ is a 6-gon.\newline
  Due to symmetry, we only need to consider 3 cases: when $v_1, v_5$ are 
  interior to the convex hull, when $v_1, v_8$ are interior, and when 
  $v_4, v_8$ are interior. For the first and second case, by 
  Observation~\ref{obs vtx on CH}, $v_2, v_3, v_4$ are on the convex hull.
  Thus, $v_2, v_3$ are consecutive, $v_3, v_4$ are consecutive, and $v_2, v_4$
  are consecutive, which contradicts $\CH(V(B_8))$ being a 6-gon.
  For the third case (when $v_4, v_8$ are interior), we have $v_1, v_2, v_3$ 
  consecutive and $v_5, v_6, v_7$ consecutive. \WLOG suppose $v_1, v_5$ are 
  consecutive. Three orderings are possible on the convex hull: 
  $v_1v_2v_3v_6v_7v_5$, $v_1v_3v_2v_7v_6v_5$, and $v_1v_3v_2v_6v_7v_5$. 
  For $v_1v_2v_3v_6v_7v_5$, the non-edge $v_3v_7$ lies inside the quadrilateral 
  $v_2v_3v_6v_7$ while another non-edge $v_2v_5$ lies inside the quadrilateral 
  $v_2v_1v_5v_7$. However, these quadrilaterals do not intersect, yielding a 
  contradiction. Note that, $v_1v_3v_2v_7v_6v_5$ also has two non-edges 
  $v_3v_7$ and $v_1v_6$, occurring within the non-overlapping quadrilaterals 
  $v_3v_2v_7v_6$ and $v_3v_1v_5v_6$ (respectively). We now apply
  Observation~\ref{obs two 4-paths} on the two induced paths $v_1v_4v_8v_6$ and
  $v_2v_4v_8v_5$. This shows that $v_1v_5$ or $v_2v_6$ should be not consecutive, 
  invalidating the ordering $v_1v_3v_2v_6v_7v_5$.
  
\medskip
  
\noindent\textbf{Case~2:} $\CH(V(B_8))$ is a 5-gon.\\
  \WLOG we choose three vertices from $v_1, v_2, v_3, v_4$ and two
  vertices from $v_5, v_6,
  v_7, v_8$ to be on the convex hull. When $v_1$ is not on the convex hull, 
  $v_2, v_3, v_4$ are on the convex hull, which was the case already 
  rejected above. Symmetrically, it is not possible to omit $v_2$ or $v_3$ 
  from the convex hull. 
  When $v_4$ is not on the convex hull, $v_1, v_2, v_3$ are consecutive by 
  Observation~\ref{obs vtx on CH}. \WLOG the ordering on the convex hull 
  is $v_1v_2v_3$. There are 3 possible orderings: $v_1v_2v_3v_5v_7$, 
  $v_1v_2v_3v_6v_5$, $v_1v_2v_3v_6v_7$. 
  First, note that $v_1v_2v_3v_5v_7$ has two induced paths $v_3v_4v_8v_7$ 
  and $v_2v_4v_8v_5$. Thus, by Observation~\ref{obs two 4-paths}, $v_4$ and 
  $v_8$ are inside the triangle $v_1v_2v_7$, i.e., a non-edge $v_1v_8$ lies 
  inside $v_1v_2v_7$. However, the region $v_2v_3v_5v_7$ is non-overlapping 
  with $v_1v_2v_7$, but contains the non-edge $v_2v_5$. 
  For the next case (i.e., $v_1v_2v_3v_6v_5$), we note the two induced paths
  $v_1v_4v_8v_6$ and $v_2v_4v_8v_5$. Thus, $v_4$ and $v_8$ are inside the 
  triangle $v_2v_3v_6$. In particular, we have the non-edge $v_8v_2$ inside 
  $v_2v_3v_6$ and the non-edge $v_2v_5$ inside $v_2v_6v_5v_1$. 
  Finally, for $v_1v_2v_3v_6v_5$, the two induced paths $v_1v_4v_8v_6$ and 
  $v_3v_4v_8v_7$ fail the condition in Observation~\ref{obs two 4-paths}.
  
\medskip
  
\noindent\textbf{Case~3:} $\CH(V(B_8))$ is a 4-gon.\\
  First, we consider the case when three vertices from $v_1, v_2, v_3, v_4$ 
  and one vertex from $v_5, v_6, v_7, v_8$ is on the convex hull. As before, 
  the three vertices are $v_1, v_2, v_3$. \WLOG we assume $v_1v_2v_3$ is the 
  order on the convex hull, i.e., the forth vertex must be $v_5$. However, 
  now either $v_1v_2v_6v_5$ or $v_3v_2v_6v_5$ forms a \emph{dart}, so it 
  cannot be represented by 1 obstacle (see Fig.~4 in~\cite{akl-ong-DCG10}).
  
  We now have 2 vertices from $v_1, v_2, v_3, v_4$ and 2 from $v_5,
  v_6, v_7, v_8$ on the convex hull. 
  Suppose $v_4$ is interior and \wLOG $v_1$ and $v_2$ are on the convex hull.
  There are 3 cases to consider: $v_1v_2v_6v_5$, $v_1v_2v_6v_7$,
  $v_1v_2v_7v_5$. Since there are only 4 vertices on the convex hull,
  it suffices to find two induced paths for Observation~\ref{obs two 4-paths} 
  to provide contradiction. 
  For the first case, we find $v_1v_4v_8v_6$ and $v_2v_4v_8v_5$.
  For the second case, we observe an induced path $v_1v_4v_8v_6$ and by 
  Observation~\ref{obs 4-path} $v_4$ and $v_8$ lie on the same side of 
  $v_1v_6$. Hence, the non-edges $v_4v_7$ and $v_2v_8$ lie inside the 
  non-overlapping regions $v_1v_4v_8v_6v_2$ and $v_1v_4v_8v_6v_7$ 
  (respectively). 
  For the third case, we apply the same logic but starting from the induced
  path $v_2v_4v_8v_5$. 
  
  Finally, we have $v_4$ on the convex hull and suppose \wLOG that $v_1$ is 
  as well. This means the convex hull is either $v_1v_4v_8v_5$ or 
  $v_1v_4v_8v_7$. By symmetry, it suffices to consider $v_1v_4v_8v_5$. 
  Here we again observe the two forbidden induced paths: $v_1v_2v_6v_8$ and 
  $v_4v_2v_6v_5$. 

\medskip

\noindent\textbf{Case~4:} $\CH(V(B_8))$ is a 3-gon.\\
  \WLOG the convex hull either consists of 3 vertices from $v_1, v_2, v_3, v_4$ 
  or 2 vertices from $v_1, v_2, v_3, v_4$ and 1 vertex from $v_5, v_6, v_7, v_8$.
  Up to symmetry, it suffices to consider 3 cases: $v_1v_2v_3$, $v_1v_2v_4$, $v_1v_2v_7$.
  
  When the convex hull is $v_1v_2v_3$, we note the induced 4-cycle $v_1v_2v_6v_5$. 
  Suppose $\overline{v_1v_6}$ and $\overline{v_2v_5}$ intersect, i.e. $v_1v_2v_6v_5$ is a convex 
  quadrilateral. Since the region $v_1v_2v_6v_5$ contains a non-edge $v_1v_6$ and 
  $v_1v_8$, $v_5v_4$ are non-edges, $v_4$ and $v_8$ lie inside $v_1v_2v_6v_5$. 
  However, by Observation~\ref{obs two 4-paths}, the two induced paths $v_1v_4v_8v_6$ 
  and $v_2v_4v_8v_5$ are forbidden.
  Consequently $\overline{v_1v_6}$ and $\overline{v_2v_5}$ do 
  not intersect. However, in this case, $v_1v_6$ and $v_2v_5$ occur in disjoint
  bounded regions as depicted in Fig.~\ref{fig 3gon_1}.
  
  When the convex hull is $v_1v_2v_4$, we also observe an induced 4-cycle: 
  $v_2v_7v_8v_4$. Suppose it forms a convex quadrilateral. Since $v_3v_7$ and
  $v_5v_4$ are non-edges, they lie inside $v_2v_7v_8v_4$. However, again,
  by Observation~\ref{obs two 4-paths}, we have forbidden induced paths $v_2v_3v_5v_8$ 
  and $v_4v_3v_5v_7$.
  If $\overline{v_2v_8}$ and $\overline{v_7v_4}$ do not intersect, then $v_4v_7$ and $v_2v_8$
  occur in disjoint bounded regions as depicted in Fig.~\ref{fig 3gon_2}.
  
  \begin{figure}[tb]
    \begin{subfigure}{0.32\textwidth}
      \includegraphics{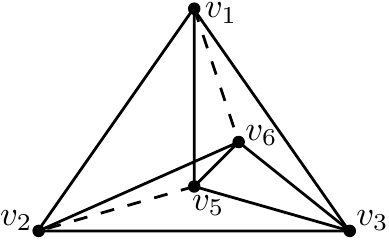}
      \caption{}
      \label{fig 3gon_1}
    \end{subfigure}
    \begin{subfigure}{0.32\textwidth}
      \includegraphics{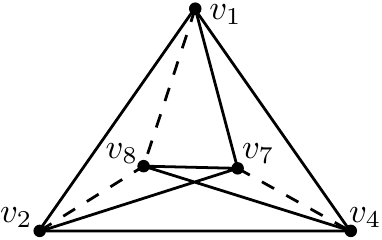}
      \caption{}
      \label{fig 3gon_2}
    \end{subfigure}
    \begin{subfigure}{0.32\textwidth}
      \includegraphics{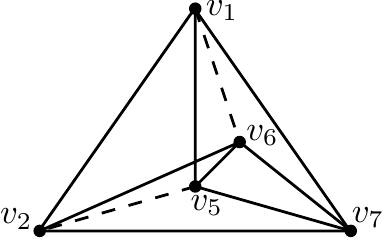}
      \caption{}
      \label{fig 3gon_3}
    \end{subfigure}
    \caption{When $\CH(V(B_8))$ is a 3-gon}
  \end{figure}
  
  When the convex hull is $v_1, v_2, v_7$, we conclude that $\overline{v_1v_6}$ and
  $\overline{v_2v_5}$ do not intersect similarly to the first case. However, again, 
  similarly to the first case, we see that when $\overline{v_1v_6}$ and $\overline{v_2v_5}$ do not
  intersect, $v_1v_6$ and $v_2v_5$ occur in disjoint bounded regions as depicted in 
  Fig.~\ref{fig 3gon_3}.
\end{proof}

\section{Missing Proofs of Section~\ref{sec:hardness}}
\label{appendixC}

In this appendix we describe how to use the NP-hardness of the outside-obstacle 
sandwich problem to show NP-hardness for both the single-obstacle 
sandwich problem and the inside-obstacle sandwich problem. 
To this end, we first make an observation. 

\begin{obs}
  \label{obs:combining_graphs}
  Let $G$ be any graph, and let \Gin be a graph with $\Obsin(\Gin)=1$,
  but $\Obsout(\Gin)>1$.  If $G^*$ is the disjoint union of \Gin and
  $G$ (i.e., $V(G^*) = V(\Gin) \cup V(G)$ and $E(G) = E(\Gin) \cup
  E(G)$), then the following properties hold:
  \begin{enumerate}
  \item \label{prop:combining_graphs1} $\Obsout(G^*) >1$. 
  \item \label{prop:combining_graphs2} In every inside-obstacle
    representation of $G^*$, the point set of $G$ is contained inside
    the convex hull of the obstacle, i.e., any single-obstacle
    representation of $G^*$ contains an outside-obstacle
    representation of $G$.
  \end{enumerate}
\end{obs}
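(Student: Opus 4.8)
The plan is to establish both properties by taking a hypothetical single-obstacle representation of $G^*$, restricting it to the vertices of \Gin, and deriving a contradiction from $\Obsout(\Gin)>1$. Throughout, given a representation of $G^*$, write $P_{\Gin}$ and $P_G$ for the points corresponding to $V(\Gin)$ and $V(G)$, and let $D_{\Gin}$, $D_G$, $D_{G^*}$ be the associated straight-line visibility drawings. The fact I use repeatedly is that whether two points see each other depends only on the segment joining them and on the obstacle, not on the other points; since the subgraph of $G^*$ induced on $V(\Gin)$ is exactly \Gin (and likewise for $G$ and $V(G)$), restricting any obstacle representation of $G^*$ to $P_{\Gin}$ (resp.\ $P_G$) yields an obstacle representation of \Gin (resp.\ $G$) with the same obstacle. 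For Property~\ref{prop:combining_graphs1}: if $(P^*,C)$ were an outside-obstacle representation of $G^*$, then $C$ lies in the unbounded component of $\mathbb{R}^2\setminus D_{G^*}$; since $D_{\Gin}\subseteq D_{G^*}$, this component is contained in the unbounded component of $\mathbb{R}^2\setminus D_{\Gin}$, so $(P_{\Gin},C)$ is an outside-obstacle representation of \Gin, contradicting $\Obsout(\Gin)>1$.

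For Property~\ref{prop:combining_graphs2}, let $(P^*,C)$ be an inside-obstacle representation of $G^*$. By the restriction fact, $(P_{\Gin},C)$ is a single-obstacle representation of \Gin, and by the argument just given it is not an outside one; hence it is an inside-obstacle representation, so $C$ lies in a bounded face of $D_{\Gin}$. As $D_{\Gin}\subseteq\CH(P_{\Gin})$, the exterior of $\CH(P_{\Gin})$ is disjoint from $D_{\Gin}$ and lies in the unbounded face, so, being open and disjoint from the unbounded face, every bounded face of $D_{\Gin}$ is contained in $\Int\CH(P_{\Gin})$; thus $C\subseteq\Int\CH(P_{\Gin})$ (note $|P_{\Gin}|\ge3$, else $\Obsout(\Gin)\le1$ trivially). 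Now suppose, for a contradiction, that some $a\in P_G$ lies outside $\CH(C)$. Separate $a$ from the convex set $\CH(C)$ by a line $\ell$; since $C$ is open it lies strictly on the $\CH(C)$-side of $\ell$, while $a$ is strictly on the other side. Since $a$ is non-adjacent to every vertex of \Gin in $G^*$, each segment $\overline{aq}$ with $q\in P_{\Gin}$ meets $C$, which forces $q$ onto the $\CH(C)$-side of $\ell$; hence $\CH(P_{\Gin})$ lies on that side and $a\notin\CH(P_{\Gin})$. But then a supporting line of $\CH(P_{\Gin})$ through $a$ touches it at a vertex $q\in P_{\Gin}$, and $\overline{aq}$ (lying on that supporting line) avoids $\Int\CH(P_{\Gin})\supseteq C$; so $a$ and $q$ see each other, contradicting their non-adjacency in $G^*$. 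Therefore $P_G\subseteq\CH(C)$, which is the first clause.

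For the ``i.e.'' clause, by Property~\ref{prop:combining_graphs1} a single-obstacle representation of $G^*$ must be an inside-obstacle representation, so the previous paragraph applies and gives $P_G\subseteq\CH(C)$, hence $D_G\subseteq\CH(P_G)\subseteq\CH(C)$. It remains to check that $C$ lies in the unbounded component of $\mathbb{R}^2\setminus D_G$, so that the restriction $(P_G,C)$ is an outside-obstacle representation of $G$. The closed obstacle region $\overline C$ is connected and disjoint from $D_G$, hence lies in a single component of $\mathbb{R}^2\setminus D_G$; were that component bounded it would be contained in $\CH(P_G)$ (since $\mathbb{R}^2\setminus\CH(P_G)$ is unbounded and disjoint from $D_G$), giving $\CH(C)\subseteq\CH(P_G)$ and therefore $\CH(C)=\CH(P_G)$. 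This is impossible: every extreme point of this common convex polygon would then lie simultaneously in $P_G$ and in $\overline C$, whereas the points of the representation lie in the exterior of the obstacle.

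The step I expect to be most delicate is Property~\ref{prop:combining_graphs2}: one has to argue carefully that $C$ sits in the \emph{interior} of $\CH(P_{\Gin})$, so that a supporting segment from an exterior point genuinely misses it, and one has to dispose of the degenerate configurations hidden in the ``general position'' hypothesis --- points of $P^*$ lying on $\partial C$, or edges tangent to $\partial C$ --- either by reading ``general position'' as ``strictly exterior to $C$'' or by an arbitrarily small visibility-preserving perturbation of the point set. The separation-line and supporting-line arguments themselves are one or two lines each; the care is all in the planar-topology bookkeeping about faces, convex hulls, and which points may touch $\partial C$.
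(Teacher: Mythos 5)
Your proof is correct and follows essentially the same route as the paper: restrict the given representation of $G^*$ to the vertices of \Gin, observe that the obstacle must then be an inside obstacle lying strictly inside $\CH(V(\Gin))$, and use a separating/supporting-line argument to show that any point of $G$ outside the obstacle's convex hull would see some vertex of \Gin, contradicting the disjointness of the union. The differences are only in bookkeeping: the paper takes the tangent line directly to the obstacle and finds a vertex of \Gin on the far side, whereas you first separate from $\CH(C)$ and then take a supporting line of the hull of \Gin's point set; you also make explicit the ``i.e.''\ clause and the general-position degeneracies that the paper's proof leaves implicit.
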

\begin{proof}
  Property~\ref{prop:combining_graphs1} is clear since any
  outside-obstacle representation of $G^*$ would certainly contain an
  outside-obstacle representation of \Gin, contradicting
  $\Obsout(\Gin)>1$.

  For Property~\ref{prop:combining_graphs2}, suppose that $G$ has an
  inside-obstacle representation and let $P$ be the obstacle. Notice
  that $P$ must be strictly contained within the convex hull of the
  point set of~\Gin (otherwise we would have an outside-obstacle
  representation of~\Gin). In particular, $P$ is contained in a region
  whose boundary consists of (parts of) edges of~\Gin.  Now consider any
  vertex~$v$ of~$G$, and suppose for a contradiction that $v$ is not
  placed within the convex hull of~$P$. This means that there is a
  line~$\ell$ such that
\begin{itemize}
\item $\ell$ intersects the boundary of $P$,
\item the interior of $P$ is contained in one open half-plane $h^+$ defined 
 by $\ell$, and 
\item $v$ is contained in the other open half-plane $h^-$.
\end{itemize}
However, as $P$ is an inside obstacle of \Gin. there must be a vertex $u$ 
of \Gin that is contained in $h^-$, contradicting the fact that 
the obstacle $P$ must intersect the line $\overline{uv}$.
\end{proof}

From this observation, we can extend the NP-hardness of the outside-obstacle
sandwich problem to both the inside-obstacle sandwich problem and the 
single-obstacle sandwich problem.

\begin{cor}\label{cor:inside_and_single_NP-hard}
The inside-obstacle graph sandwich problem and the single-obstacle
graph sandwich problem are both NP-hard. These problems remain NP-hard even
when restricted to sandwich instances $(G,H)$ where $G$ is connected. 
\end{cor}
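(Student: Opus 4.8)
The plan is to reduce from the outside-obstacle graph sandwich problem, which is NP-hard by Theorem~\ref{hardness} and which, by inspection of the reduction given there, remains NP-hard even when the lower graph~$G$ is connected (in~$G_\varphi$ the variable-vertex clique and the occurrence-vertex clique are joined by the edges $v_i\sim v_{ij}$ and by the copy of~$K^*_6$). I would fix once and for all a graph~$\Gin$ with $\Obsin(\Gin)=1$ but $\Obsout(\Gin)>1$ --- by Theorem~\ref{thm inside1 outside>1} one may take $\Gin=B_{11}$ --- together with a vertex~$a$ of~$\Gin$ having a non-neighbour in~$\Gin$ (say $a=z_1$, whose unique non-neighbour is~$z'_1$). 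Given a connected instance~$(G,H)$, I would pick an arbitrary vertex~$b$ of~$G$ and let $G^\ast$ (resp.~$H^\ast$) be obtained from the disjoint union $\Gin\sqcup G$ (resp.~$\Gin\sqcup H$) by adding the single edge~$ab$. Then $G^\ast\subseteq H^\ast$, the lower graph~$G^\ast$ is connected, and within $V(\Gin)$ both graphs coincide with~$\Gin$ while within $V(G)$ they coincide with~$G$ and~$H$. (For the plain hardness statement, without the connectivity restriction, one uses the disjoint unions and omits~$ab$.) It then remains to prove that $(G^\ast,H^\ast)$ is a yes-instance of the inside-obstacle --- equivalently, the single-obstacle --- graph sandwich problem if and only if $(G,H)$ is a yes-instance of the outside-obstacle graph sandwich problem.

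For the ``only if'' direction I would take~$K$ with $G^\ast\subseteq K\subseteq H^\ast$ and $\Obs(K)=1$ (the inside-obstacle case being a special case). From the edge structure of~$H^\ast$, $K$ is $\Gin\sqcup K_G$ together with the edge~$ab$, where $K_G:=K[V(G)]$ satisfies $G\subseteq K_G\subseteq H$. Arguing as in Observation~\ref{obs:combining_graphs}: since $\Gin$ is an induced subgraph of~$K$ with $\Obsout(\Gin)>1$, the obstacle~$P$ of the given representation must lie strictly inside $\CH(V(\Gin))$ (the single extra edge~$ab$ causes no trouble, because every vertex of~$G$ still has a non-neighbour in~$V(\Gin)$), and then the separating-line argument of that observation forces every point of~$V(G)$ into $\CH(P)$. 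Consequently $P$ has points outside $\CH(V(G))$; being connected and disjoint from the edges of~$K_G$ --- which all lie inside $\CH(V(G))$ --- it lies entirely in the unbounded face of the drawing of~$K_G$, i.e.\ it is an outside obstacle for~$K_G$. Hence $\Obsout(K_G)=1$, so $(G,H)$ is a yes-instance.

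For the ``if'' direction I would take~$K_G$ with $G\subseteq K_G\subseteq H$ and $\Obsout(K_G)=1$ and build an inside-obstacle representation of $K^\ast:=(\Gin\sqcup K_G)+ab$ (which also witnesses $\Obs(K^\ast)=1$, since $\Obs\le\Obsin$). Starting from the inside-obstacle representation of~$\Gin$ with obstacle~$C_\Gin$, I would pick a ray~$\rho$ from~$a$ that meets~$C_\Gin$ before meeting any other vertex of~$\Gin$ (such a ray exists because~$a$ has a non-neighbour in~$\Gin$), and then a small open disk~$D$ meeting~$\partial C_\Gin$, disjoint from every vertex of~$\Gin$ and from all its edge segments, positioned so that~$a$ lies just outside~$D$ with~$\rho$ pointing from~$a$ away from~$D$. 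I would clip the outside obstacle of~$K_G$ to a large disk around its points (this changes no visibility), scale the resulting configuration down so that it fits inside~$D$ with~$b$ near~$a$, and attach its clipped obstacle to~$C_\Gin$ so that the two merge into a single connected obstacle $C^\ast\subseteq C_\Gin\cup D$. Because the clipped obstacle of~$K_G$ encloses all points of~$K_G$, no such point can see any vertex of~$\Gin$; removing a single thin slit running from~$b$ toward~$a$ along the direction of~$\rho$ then lets~$b$ see~$a$ --- and, by the choice of~$\rho$ and~$D$, nothing else of~$\Gin$ --- while disturbing no visibility inside~$K_G$. A routine check shows that all remaining (non-)visibilities within~$\Gin$ and within~$K_G$ are preserved and that~$C^\ast$ still lies strictly inside the convex hull of all the points, hence is an inside obstacle; thus $\Obsin(K^\ast)=1$.

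Combining the two directions yields NP-hardness of both the inside-obstacle and the single-obstacle graph sandwich problems, with the lower graph connected. The step I expect to be the main obstacle is the gluing in the ``if'' direction: one must position the shrunken, clipped obstacle of~$K_G$, its junction with~$C_\Gin$, and the single slit realising~$ab$ carefully enough that no spurious visibility between the two parts is created and no existing (non-)visibility of either part is lost. Everything else is a direct application of Observation~\ref{obs:combining_graphs} together with the fact --- already used in its proof --- that $\Obsout$ does not increase under taking induced subgraphs.
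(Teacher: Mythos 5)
Your proposal is correct in spirit and rests on the same two pillars as the paper's proof (reduction from the outside-obstacle sandwich problem via gluing $B_{11}$, plus Observation~\ref{obs:combining_graphs}), but it differs in one substantive way. The paper does \emph{not} fix the connecting edge: it produces $3n$ sandwich instances $\mathcal{I}_i=(G\cup B_{11}\cup\{u_iv_i\},\,H\cup B_{11}\cup\{u_iv_i\})$, one for every choice of $u_i\in V(G)$ and every $B_{11}$-vertex $v_i$ up to symmetry, and shows that $(G,H)$ is a yes-instance iff \emph{some} $\mathcal{I}_i$ is. The whole point of this enumeration is that in the forward direction the paper only guarantees that the tunnel carved into the triangular obstacle of $B_{11}$ realizes \emph{some} edge between the two parts; which one it is depends on the geometry of the outside-obstacle representation of $K_G$ at hand. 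Your single many-one reduction with a pre-chosen $b\in V(G)$ is cleaner if it works, but it shifts all of that difficulty into your ``if'' direction: you must realize the \emph{specific} edge $ab$ even when $b$ is not exposed to the outside in the given representation of $K_G$, and you must argue that the thin slit neither uncovers any non-edge of $K_G$ or of $B_{11}$ nor creates spurious cross-visibilities. This is doable (general position gives each non-edge a positive-length, transversal intersection with the obstacle, so a sufficiently thin corridor cannot swallow it, and you should first normalize the outside obstacle of $K_G$ to the entire unbounded face of its drawing so that the clipped obstacle really does enclose all its points), but it is exactly the step you flag as delicate, and it is the step the paper's $3n$-instance trick is designed to avoid.

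One further point needs repair in your ``only if'' direction. You apply the separating-line argument of Observation~\ref{obs:combining_graphs} directly to $K=(\Gin\sqcup K_G)+ab$, justifying this by noting that every vertex of $G$ still has a non-neighbour in $V(\Gin)$. That is not quite the hypothesis the argument uses: for a vertex $v$ outside $\CH(P)$ it needs a \emph{non-neighbour of $v$ in $V(\Gin)$ lying in the half-plane $h^-$ on $v$'s side of the separating line}, and if $v=b$ and the only $\Gin$-vertex in $h^-$ happens to be $a$, the contradiction evaporates. The paper sidesteps this by first deleting the endpoint $u_i$ of the connecting edge, applying the observation to the genuine disjoint union $\Gin\sqcup(K_G-u_i)$, and only then arguing about $u_i$. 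You should either do the same (delete $b$, apply the observation, then handle $b$ separately) or strengthen the separating-line argument to exploit the fact that $a$'s own non-neighbour $z_1'$ also lies in $\Gin$. With these two repairs your argument goes through and even yields a slightly stronger conclusion (a single many-one reduction rather than a reduction to a disjunction of $3n$ instances).
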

\begin{proof}
  Let $(G,H)$ be an instance of the outside-obstacle graph sandwich
  problem.  Recall that, for the graph $B_{11}$ (given in
  Fig.~\ref{fig:K}), we have $\Obsin(B_{11})=1$ and
  $\Obsout(B_{11})>1$.  From the pair $(G \cup B_{11}$, $H \cup
  B_{11})$, we make $3|V(G)|$ instances $\mathcal{I}_1, \ldots,
  \mathcal{I}_{3n}$ of the single-obstacle sandwich problem where each
  instance $\mathcal{I}_i = (G \cup B_{11} \cup \{u_iv_i\}$, $H \cup
  B_{11} \cup \{u_iv_i\})$ is formed by adding a single edge~$u_iv_i$
  connecting a vertex~$u_i$ of~$G$ to a vertex~$v_i$ of~$B_{11}$.  Due
  to the symmetry in~$B_{11}$, there are at most $3n$ non-isomorphic
  ways to add such an edge.

  We now claim that $(G,H)$ has a solution to the outside-obstacle
  sandwich problem if and only if some single-obstacle sandwich
  instance $\mathcal{I}_i$ has a solution.  Moreover, a solution to
  $\mathcal{I}_i$ is always an inside-obstacle representation.  This
  proves the statement of the corollary.  It remains to show that our
  claim holds.

  For the forward direction, let $G'$ be a solution to the
  outside-obstacle sandwich instance $(G,H)$. We can place $G'$
  ``inside'' the obstacle of the inside-obstacle representation of
  $B_{11}$ (e.g., the one depicted in Fig.~\ref{fig:K}) to obtain an
  inside-obstacle representation of $B_{11} \cup G'$. Furthermore, we
  can make a thin ``tunnel'' into the obstacle so that we realize
  precisely one edge connecting $G'$ and $B_{11}$.  This provides an
  inside-obstacle representation of one of the instances
  $\mathcal{I}_i$.

  For the reverse direction, consider an instance $\mathcal{I}_i = (G
  \cup B_{11} \cup \{u_iv_i\}$, $H \cup B_{11} \cup \{u_iv_i\})$ that
  has a solution~$G'$.  Recall that $u_i \in V(G)$ and $v_i \in
  V(B_{11})$.  Let $P$ be the corresponding obstacle.  Note that, by
  Observation~\ref{obs:combining_graphs}, $G'$ also provides a
  single-obstacle representation of $G' \setminus \{u_i\}$ using $P$,
  and this must be an inside-obstacle representation.  Moreover,
  $G'[V(G) \setminus \{u_i\}]$ is contained in the convex hull
  of~$P$.  In particular, $P$ is an outside obstacle of $G'[V(G)]$,
  and $P$ is an inside obstacle of $G'$.
\end{proof}

\section{Inside-Obstacle Number~2, but Outside-Obstacle Number~1}
\label{appendixD}

Since inside-obstacles cannot pierce the convex hull of a drawing, 
every graph with an inside-obstacle representation must either be complete
or contain a cycle. 
This appendix introduces a non-trivial graph (i.e., containing a cycle) that 
has outside-obstacle number 1 but inside-obstacle number greater than 1.
It is trivial that every 4-vertex graph $G$ which contains a cycle 
has $\Obsin(G)=1$.
On the other hand, we will show that there is a unique 5-vertex graph with 
$\Obsin(G)>1$ and, as such, the 4-vertex observation is tight.

\begin{thm}\label{thm:outside1-inside>1}
  Among all 5-vertex graphs, $K_{2,3}$ is the unique graph that
  contains a cycle, has a outside-obstacle number~1, and
  inside-obstacle number greater than~1.
\end{thm}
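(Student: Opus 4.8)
The plan is to enumerate all $5$-vertex graphs containing a cycle, discard those that are complete, and for each remaining candidate decide whether it has an inside-obstacle representation and an outside-obstacle representation. Since every $5$-vertex graph has circumference at most $5 \le 6$, Theorem~\ref{thm graphs circumference up to 6} already gives that every one of them has an outside-obstacle representation; so the outside-obstacle-number-$1$ condition is automatic for all candidates, and the real content is to show that $K_{2,3}$ is the \emph{only} non-complete cyclic $5$-vertex graph with $\Obsin(G)>1$. Thus the theorem splits into two tasks: (i) show $\Obsin(K_{2,3})>1$, and (ii) show that every other $5$-vertex graph that contains a cycle and is not complete satisfies $\Obsin(G)=1$.

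For task (i), I would argue directly. Suppose $K_{2,3}$ has an inside-obstacle representation, with partition classes $\{a,b\}$ (degree $3$) and $\{x,y,z\}$ (degree $2$), and let $C$ be the obstacle. By Observation~\ref{obs convex hull cycle} the vertices on $\CH(V)$ form a cycle, so the convex hull is a triangle, quadrilateral, or pentagon. The non-edges are $ab$, $xy$, $yz$, $xz$. The obstacle must simultaneously block the segment $\overline{ab}$ and all three segments among $\{x,y,z\}$, while lying inside $\CH(V)$ and touching no edge. I would case on the hull size: since $\{x,y,z\}$ is an independent set, it cannot fully lie on the hull cycle unless separated by $a$ or $b$, and a short argument (analogous to Observation~\ref{obs 4-path}, using the induced $3$-edge paths $x\,a\,y$, $y\,b\,z$, etc.) shows that the three blocking segments $\overline{xy},\overline{yz},\overline{xz}$ together with $\overline{ab}$ cut $\CH(V)$ into regions that cannot all be met by a single connected obstacle. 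The cleanest version: the three segments $\overline{xy},\overline{yz},\overline{zx}$ bound a triangle $T$ inside which the obstacle must meet every edge of $T$, forcing $C$ to intersect (the relative interior of) $T$; but then one checks that $C$ cannot also reach across to block $\overline{ab}$ without crossing an edge among the six edges $ax,ay,az,bx,by,bz$. This is the step I expect to be the main obstacle—making the ``cannot reach'' claim fully rigorous for all hull configurations, rather than just reading it off a picture.

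For task (ii), the point is that the list of non-complete $5$-vertex graphs containing a cycle is short, and for all but $K_{2,3}$ one exhibits a concrete inside-obstacle drawing. I would organize by circumference. If $\Circ(G)=5$, a convex $5$-gon realizing the $5$-cycle works and the obstacle is a tiny region near the center, adjusted to block exactly the chords that are non-edges—since at most two independent chords of a pentagon are non-edges here, a single small obstacle placed appropriately suffices. If $\Circ(G)=4$, $G$ has a $4$-cycle plus a fifth vertex; one places the $4$-cycle convex and the fifth vertex suitably (inside or outside) and inserts a small obstacle; the only obstruction arises when the adjacency pattern forces three ``crossing'' non-edges through a common region, which is precisely what pins down $K_{2,3}$ (whose complement restricted to $\{x,y,z\}$ is a triangle of non-edges plus the non-edge $ab$). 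If $\Circ(G)=3$, the graph has at most one non-edge beyond a triangle-plus-pendant structure and a single tiny obstacle trivially works. Throughout I would invoke the earlier observations (especially Observation~\ref{obs 4-path} and Observation~\ref{obs two 4-paths}) to rule out the bad cases and a handful of explicit point sets to handle the good ones, concluding uniqueness of $K_{2,3}$.
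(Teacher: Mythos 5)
Your high-level split is the right one, and you correctly observe that the outside-obstacle condition is automatic (the paper uses Theorem~\ref{thm graphs order up to 7}; your appeal to Theorem~\ref{thm graphs circumference up to 6} works equally well). But part~(i) has a genuine gap, and you flag it yourself: the ``cannot reach across'' step is exactly the content of the claim, and the route you sketch does not close it. Two concrete problems. First, your case analysis over hull sizes $3,4,5$ is both too broad and underexploited: by Observation~\ref{obs convex hull cycle} the hull vertices must form a cycle of the graph, and $K_{2,3}$ is bipartite and triangle-free, so its only cycles are $4$-cycles; hence the hull is forced to be $u_1v_iu_2v_j$ with the remaining $v_k$ strictly inside. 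Second, your ``cleanest version'' does not yield a contradiction in all configurations: if $a$ and $b$ lie on opposite sides of the triangle $T=xyz$, the segment $\overline{ab}$ passes through $T$ and an obstacle sitting inside $T$ blocks it with nothing to ``reach across''---so that route cannot work as stated. The paper's argument is much shorter once the hull is pinned down: with $v_3$ interior to the quadrilateral $u_1v_1u_2v_2$, the two edges $\overline{u_1v_3}$ and $\overline{u_2v_3}$ split the quadrilateral into two regions, one containing the non-edge $\overline{v_1v_3}$ and the other containing $\overline{v_2v_3}$; a connected obstacle disjoint from all edges lies in one region and cannot meet both non-edges.

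Part~(ii) is also only an outline, and one stated fact is wrong: for $G=C_5$ all five chords of the pentagon are non-edges, so ``at most two independent chords are non-edges'' fails, and a ``tiny obstacle near the center'' must in fact be star-shaped. The paper avoids your circumference case analysis entirely: it reduces to connected leafless graphs, argues that such a graph other than $K_{2,3}$ contains a $5$-cycle, draws that cycle as a regular pentagon, and takes as obstacle the inner pentagon formed by the five diagonals, bending each side outward across exactly those diagonals that are non-edges. This single uniform construction handles every subset of missing diagonals at once, which is what your sketch would need to reproduce case by case. If you keep your structure, you must (a) prove the impossibility in~(i) via the forced $4$-cycle hull, and (b) actually exhibit the inside-obstacle drawings in~(ii), including for graphs with pendant or isolated vertices that still contain a cycle, which your circumference cases currently gloss over.
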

\begin{proof}
  Let $(\{u_1, u_2\},\{v_1,v_2,v_3\})$ be a bipartition of $K_{2,3}$.
  By Theorem~\ref{thm graphs order up to 7}, $K_{2,3}$ has an outside-obstacle representation.
  To prove that $\Obsin(K_{2,3}) > 1$, we assume $\Obsin(K_{2,3})=1$ for contradiction.
  Since the convex hull should form a cycle in an inside-obstacle representation by Observation~\ref{obs convex hull cycle}, \wLOG we assume the cycle $u_1v_1u_2v_2$ is the convex hull.
  By placing $v_3$ inside quadrilateral $u_1v_1u_2v_2$, we notice that non-edges $v_1v_3$ and $v_2v_3$ lie inside the different bounded regions so a single inside obstacle cannot block both.
  
  To prove uniqueness, let $G$ be a 5-vertex graph not isomorphic to $K_{2,3}$. It is enough to provide an inside-obstacle representation for the connected graphs with no leaves. Since $G$ doesn't have a leaf and isn't isomorphic to $K_{2,3}$, $G$ contains a 5-cycle $C$. To make an inside-obstacle representation of $G$, we place the points of $C$ as a regular 5-gon. Notice that the diagonals of $C$ make a star-shape inside the 5-gon which, in turn, provides an inner 5-gon $P$ where each side corresponds to a diagonal of $C$. We can use $P$ as an obstacle by simply bending each side outward when the corresponding diagonal of $C$ is a non-edge. 
  
  This finishes the proof of Theorem~\ref{thm:outside1-inside>1}.
\end{proof}

\end{document}